\documentclass{llncs}
\usepackage{xcolor} 
\usepackage{amsmath}
\usepackage{amssymb}
\usepackage{proof}
\usepackage{hyperref}
\usepackage{caption}
\usepackage{comment}
\usepackage{subcaption}
\usepackage{fancybox}
\usepackage{stmaryrd}
\definecolor{light-gray}{gray}{0.86}

\newcommand{\interp}[1]{\llbracket #1 \rrbracket}

%% \newarrowfiller{dasheq} {==}{==}{==}{==}
%% \newarrow {Mapsto} |--->
%% \newarrow {Line} -----
%% \newarrow {Implies} ===={=>}
%% \newarrow {EImplies} {}{dasheq}{}{dasheq}{=>}
%% \newarrow {Onto} ----{>>}
%% \newarrow {DashInto} C{dash}{}{dash}{>}
%% \newarrow {Dashto}{}{dash}{}{dash}{>}
%% \newarrow {Dashtoo}{}{dash}{}{dash}{>>}

\usepackage{todonotes}
\newcommand{\knote}[1]{\todo[inline, color=blue!20]{#1}}

% \newtheorem{lemma}{Lemma}
% \newtheorem{proposition}{Proposition}

% \newtheorem{theorem}{Theorem}
% \newtheorem{definition}{Definition}
% \newtheorem{example}{Example}
% \newtheorem{procedure}{Procedure}
% for page numbers:
\pagestyle{plain}

\begin{document}
\title{A Type-Theoretic Approach to Structural Resolution}

\author{Peng Fu, Ekaterina Komendantskaya}
\institute{School of Computing, University of Dundee}
%\date{\today}

\maketitle

\begin{abstract}
 %  Structural resolution (or S-Resolution) is a variant of standard SLD-resolution, it proposes to distinguish \textit{productive} logic programs from the unproductive ones. Productive programs allow finite
 %  observation of partial answers. 
 %   In this paper, we propose a type system for the analysis of S-resolution and SLD-resolution. 
 %  %The three main contributions of this paper are
 %  We formulate S-resolution and SLD-resolution as reduction systems, and show their soundness relative to the type system.
 %  One of the central method of this paper is \textit{realizability transformation}, which makes logic programs productive and non-overlapping.
 %  %, which adds an extra argument to each atomic formula to
 % % reflect its proof content.
 %  We show that S-resolution and SLD-resolution are only equivalent for programs with these two properties. %Moreover, the transformation gives 
Structural resolution (or S-resolution) is a newly proposed alternative to SLD-resolution that 
allows a systematic separation of derivations into term-matching and unification steps. 
Productive logic programs are those for which term-matching reduction on any query must terminate. For productive programs with coinductive meaning, finite term-rewriting reductions can be seen as measures of observation in an infinite derivation. Ability of handling corecursion in a productive way is an attractive computational feature of S-resolution.

 In this paper, we make first steps towards a better conceptual understanding of operational properties of S-resolution as compared to SLD-resolution. To this aim, we propose a type system for the analysis of both SLD-resolution and S-resolution.  
  We formulate S-resolution and SLD-resolution as reduction systems, and show their soundness relative to the type system.
  One of the central methods of this paper is \textit{realizability transformation}, which makes logic programs productive and non-overlapping.
  %, which adds an extra argument to each atomic formula to
 % reflect its proof content.
  We show that S-resolution and SLD-resolution are only equivalent for programs with these two properties. %Moreover, the transformation gives 

\textbf{Keywords:} Logic Programming, Structural Resolution, Realizability Transformation, Reduction Systems, Typed lambda calculus.

\end{abstract}
\section{Introduction}
\label{introduction}
%\frank{about SLD-resolution, I just found the following from wikipedia:
%"The name "SLD resolution" was given by Maarten van Emden for the unnamed inference rule introduced by Robert Kowalski." I do think Robinson is inventor of unification and resolution though, but may be the exact SLD-resolution algorithm should be credited to Rober Kowalski? 
%}

\begin{comment}
Logic Programming (LP) is a programming paradigm based on first-order Horn formulas. Given a
logic program $\Phi$ and a query $A$, LP uses SLD-resolution
to automatically infer whether or not $\Phi
\vdash A$ holds, i.e., whether or not
$\Phi$ logically entails $A$. %The mechanism for the logical inference is based on the SLD-resolution algorithm. %, %introduced by Robinson
%in the mid 1960's~\cite{Llo88}.
%which uses the resolution rule together with first-order unification.
% 
%\begin{itemize}
%\item \emph{the resolution rule}:  $\frac{A \lor B, \ \ \ \neg B \lor D}{A \lor D}$, where $A,B,D$ are some
%propositions;
%\item  \emph{the algorithm of unification}, allowing to extend the above resolution rule to the first-order formulae.
%\end{itemize}
 %\frank{What do you mean by "proof-strategy"?} -- I tried to rephrase...
%Seeing programs as Horn clauses, the above derivation first assumed that  $\mathrm{Connect}(x, y)$ is false, and
It will first assume that $A$ is false, and then will attempt to deduce a contradiction (an empty goal) from the assumption.
%Thus, it followed the general resolution scheme: 
Thus, every SLD-derivation is essentially a proof by contradiction. Traditionally, the exact content of such a proof plays little role, but termination of derivations is important.
One could contrast this with constructive proof theory, where,  to prove that $A$ follows from $\Gamma$, we would need to construct a proof, i.e.
an inhabitant $p$ of type $A$  to show $\Gamma \vdash p: A$;
%which is a proof $p$ of $A$ in context $\Gamma$; ,
and the proof usually has computational meaning.

\end{comment}

Logic Programming (LP) is a programming
paradigm based on first-order Horn formulas. Informally, given a
logic program $\Phi$ and a query $A$, LP
provides a mechanism for automatically inferring whether or not $\Phi
\vdash A$ holds, i.e., whether or not
$\Phi$ logically entails $A$. The mechanism for the logical inference is based on SLD-resolution algorithm,
which uses the resolution rule together with first-order unification.

\begin{example}\label{ex:conn}
Consider the following logic program  $\Phi$, consisting of Horn formulas labelled by $\kappa_1$, $\kappa_2$, $\kappa_3$, defining connectivity for a graph with three nodes:
 {\footnotesize
  \begin{center}
  $  \kappa_1 : \forall x. \forall y. \forall z. \mathrm{Connect}(x, y), \mathrm{Connect}(y, z) \Rightarrow \mathrm{Connect}(x, z)$

  $\kappa_2 : \ \Rightarrow \mathrm{Connect}(\mathrm{node_1}, \mathrm{node_2})$
  
  $\kappa_3 : \ \Rightarrow \mathrm{Connect}(\mathrm{node_2}, \mathrm{node_3})$
    \end{center}
}  
  
  \noindent  %We annotate each program by $\kappa_1,\kappa_2, \kappa_3$; we use $\Phi$ to denote the logic program above;
  In the above program, $\mathrm{Connect}$ is a predicate, and  $\mathrm{node_1}$ --  $\mathrm{node_3}$ are constants.
 SLD-derivation for  the query $\mathrm{Connect}(x, y)$ can be represented as reduction: %, where $x, y$ are variables.
  %We have a finite unification reduction path in LP-Unif: 

 {\footnotesize 
  \begin{center}
    $\Phi \vdash \{\mathrm{Connect}(x, y)\} \leadsto_{\kappa_1, [x/x_1, y/z_1]}
    \{\mathrm{Connect}(x, y_1), \mathrm{Connect}(y_1, y)\} \leadsto_{\kappa_2, [\mathrm{node_1}/x, \mathrm{node_2}/y_1, \mathrm{node_1}/x_1, y/z_1]}
    \{\mathrm{Connect}(\mathrm{node_2}, y)\} \leadsto_{\kappa_3, [\mathrm{node_3}/y, \mathrm{node_1}/x, \mathrm{node_2}/y_1,
      \mathrm{node_1}/x_1, \mathrm{node_3}/z_1]} \emptyset $
  \end{center}
  }
  \noindent The first reduction $\leadsto_{\kappa_1, [x/x_1, y/z_1]}$  unifies query $\mathrm{Connect}(x, y)$ with the head of the rule $\kappa_1$, $\mathrm{Connect}(x_1, z_1)$. %, namely, $\mathrm{Connect}(x_1, z_1)$, with $\mathrm{Connect}(x, y)$, the unification generates a substitution $[x/x_1, y/z_1]$.
 Note that $x/x_1$ means $x_1$ is replaced by $x$. After that, the query is \emph{resolved} with the formula of $\kappa_1$, %and  the subtitution is applied to  the body of $\kappa_1$,
  producing the next queries: $\mathrm{Connect}(x, y_1)$, $\mathrm{Connect}(y_1, y)$. %SLD-resolution assumes  renaming variables apart, to avoid incorrect variable capture and circular unification. %We assume it implecitely throughout the paper.  %the free variable in the head if we have a name conflict

\end{example}

Seeing program as Horn clauses, the above derivation first assumed that  $\mathrm{Connect}(x, y)$ is false, and then deduced a contradiction (an empty goal) from the assumption.
%Thus, it followed the general resolution scheme: 
As every SLD-derivation is essentially a proof by contradiction, traditionally, the exact content of such proofs plays little role in determining entailment. Instead, termination of derivations plays a crucial role. When it comes to logical entailment
with respect to programs that admit non-terminating derivations, resolution gives only a semi-decision
procedure. A long-standing challenge has been to find computationally effective
mechanisms that guarantee termination of LP proof search, and to use
them to deduce logical entailment for LP~\cite{deSchreye1994199}.

LP approach of preserving a tight connection between entailment and termination makes it hard to model corecursive computations. There are potentially infinite derivations that may bear some interesting
computational meaning.

\begin{example}\label{ex:str}
  The following program defines the predicate $\mathrm{Stream}$:
 {\footnotesize  
\begin{center}
$\kappa_1 : \forall x . \forall y . \mathrm{Stream}(y) \Rightarrow \mathrm{Stream}(\mathrm{cons}( x, y))$  
\end{center}
}

\noindent It models infinite streams, and will result in infinite derivations, e.g.:

 {\footnotesize
  \begin{center}
    $\Phi \vdash \{\mathrm{Stream}(\mathrm{cons}(x,y))\} \leadsto_{\kappa_1, [x/x_1, y/y_1]}
    \{\mathrm{Stream}(y)\} \leadsto_{\kappa_1, [\mathrm{cons}(x_2, y_2)/y]}
    \{\mathrm{Stream}(y_2)\} \leadsto_{\kappa_1,  [\mathrm{cons}(x_3, y_3)/y_2]} \ldots$
  \end{center}
}

For the query $\mathrm{Stream}(\mathrm{cons}(x,y))$, we may still want to either obtain a description of the solution for the variable $y$, or make finite observation on its solution, however, none of these are supported by standard SLD-resolution. 
\end{example}

Two groups of methods have been proposed to address this problem.

-- CoLP (\cite{Gupta07}, \cite{Simon07}) offers methods for loop invariant analysis in SLD-derivations: infinite derivations are
terminated if a loop of a certain shape is detected in resolvents, e.g., $\mathrm{Stream}(y)$ and $\mathrm{Stream}(y_2)$ above are unifiable, so one may conclude with a regular description $[\mathrm{cons}(x_2, y)/y]$.

-- There are many infinite derivations that do not form a loop, CoALP/S-resolution (\cite{komendantskaya2014}, \cite{JKK15}) aim to provide general coinductive gurantee that for \textit{productive} infinte derivation, one can make finite observation. E.g. in S-resolution, the derivation for $\mathrm{Stream}(\mathrm{cons}(x,y))$ will stop at $\mathrm{Stream}(y_2)$ and report the process is infinite with partial answer $[\mathrm{cons}(x_2, y_2)/y]$, then one can choose to continue the derivation to further inspect $y_2$.

Let us view SLD-derivations as \emph{reductions}, starting from a given query and using \emph{unification} with Horn formulas;  %As such reductions use unification when applying the reduction rules,
%let us
we call such reductions   \emph{LP-Unif reductions}  and denote them by $\leadsto$. If we restrict the unification algorithm underlying such reductions to allow only term-matchers instead of unifiers, we obtain \emph{LP-TM reductions}, denoted by $\to$.
They model computations performed by rewriting trees in \cite{JKK15}, and it has interesting properties distinguishing them from LP-Unif reductions.
Firstly, they may give partial proofs compared to LP-Unif reductions: %, as the following example shows:

\begin{example}\label{ex:bl}
The following program defines bits and lists of bits:
 {\footnotesize
 \begin{center}
     $\kappa_1 :\  \Rightarrow \mathrm{Bit}(0)$\\
   $\kappa_2 :\ \Rightarrow \mathrm{Bit}(1)$\\
    $\kappa_3 :\ \Rightarrow \mathrm{BList}(\mathrm{nil})$\\
  $\kappa_4 : \forall x . \forall y.  \mathrm{BList}(y), \mathrm{Bit}(x) \Rightarrow \mathrm{BList}(\mathrm{cons}( x, y))$\\

\end{center}} 

\noindent  Below is an example of LP-TM reduction  to normal form:
 
 {\footnotesize
  \begin{center} 
    $\Phi \vdash \{\mathrm{BList}(\mathrm{cons}(x,y))\} \to_{\kappa_4}
    \{ \mathrm{Bit}(x),  \mathrm{BList}(y)\}$ 
  \end{center} } 

 %\noindent LP-TM reduction is possible terminates, % without substituting for $x$ or $y$.
% but LP-Unif would give a complete
%proof:

   \noindent %Here, $\to_{\kappa_1, [x/x_1, y/y_1]}$ means: \textit{match} the head of the rule $\kappa_1$ to $\mathrm{Stream}(\mathrm{cons}(x,y) )$.
  %Note that no substitution is applied to the query in the course of the above reduction.
  Above, the head of the rule $\kappa_4$ is matched  to the query by substitution
 $[x/x_1, y/y_1]$, which is applied to the body of the rule $\kappa_4$, thus resolving to
 $\{\mathrm{Bit}(x),\mathrm{BList}(y)\}$. %After that, no further LP-TM reduction is possible, without substituting for $y$. For this program, every LP-TM reduction will be finite.
   
\noindent %LP-TM reductions will only give a partial proof compared to LP-Unif: %, the two kinds of reductions are shown below: %Consider the query $\mathrm{BList}(\mathrm{cons}(x,y))$:
But LP-Unif would be able to complete the proof:
 
 {\footnotesize
  \begin{center} 
    $\Phi \vdash \{\mathrm{BList}(\mathrm{cons}(x,y))\} \leadsto_{\kappa_4, [x/x_1,y/y_1]}
    \{\mathrm{Bit}(x),  \mathrm{BList}(y)\} \leadsto_{\kappa_1, [0/x, 0/x_1,y/y_1]} 
     \{\mathrm{BList}(y)\} \leadsto_{\kappa_3, [\mathrm{nil}/y, 0/x, 0/x_1,\mathrm{nil}/y_1]}  \emptyset$ 
  \end{center} } 

 %, which would require unification.
%\knote{add example?}
\end{example}

On the other hand, LP-TM reductions terminate for programs that are traditionally seen as coinductive:
\begin{example}\label{str2}
  Consider LP-TM reduction for the program of Example~\ref{ex:str}:  
  \begin{center} 
     $\Phi \vdash \{\mathrm{Stream}(\mathrm{cons}(x,y))\} \to_{\kappa_1}
    \{ \mathrm{Stream}(y)\} $ 
  \end{center}  

\end{example}

%Finally, with term-matching reduction, we may have divergent behaviour that was not present in LP-Unif.
Finally, LP-TM reductions are not guaranteed to terminate in general. %, as the following example shows.

\begin{example}\label{ex:con2}

For the program of Example~\ref{ex:conn}, we have the following non-terminating reduction by LP-TM.  
 {\footnotesize

\begin{center} 
  $\Phi \vdash \{\mathrm{Connect}(x, y)\} \to_{\kappa_1} \{\mathrm{Connect}(x, y_1), \mathrm{Connect}(y_1, y)\} $

$\to_{\kappa_1} \{\mathrm{Connect}(x, y_2), \mathrm{Connect}(y_2, y_1), \mathrm{Connect}(y_1, y)\} \to_{\kappa_1} ... $ 
\end{center}
}
%\knote{check with Fu Peng: how do you denote ``x is replaced by a'': $x/a$ or $a/x$? I tend to use $x/a$... In the above example, you do the opposite. At the very least, we must be consistent...}
% However, we can not match the head of $\kappa_2$(namely, $\mathrm{Connect}(\mathrm{node_1}, \mathrm{node_2})$) to neither $\mathrm{Connect}(a, y)$, nor $\mathrm{Connect}(y, b)$. The only rule we can use to proceed the term matching reduction is $\kappa_1$, thus
% the term-matching reduction is not terminating for query $\mathrm{Connect}(a, b)$. 
\end{example}

The programs that admit only finite LP-TM reductions are called productive logic programs (\cite{komendantskaya2014}, \cite{JKK15}). As S-resolution combines LP-TM with unification, finiteness of LP-TM reductions allows one to observe partial answer, while the whole derivation may be infinite. Finiteness of LP-TM is also the key property to ensure this combination of LP-TM with unification is well-behaved, that is, it ensures the operational equivalence of LP-Struct and SLD resolution (we will show this later). 

In  Section~\ref{forms}, we formalise S-resolution as reduction rules that
combine  LP-TM reductions with substitution steps, and call the resulting reductions \emph{LP-Struct reductions}.
%It remains to show exactly how LP-Struct relates to LP-Unif.
%Having observed the three above properties, it is not easy to see how S-resolution compares to LP-Unif.
We see that for the program in Example \ref{ex:conn}, LP-TM reduction will necessarily diverge, 
while for LP-Unif there exists a finite success path. This mismatch between LP-TM and LP-Unif
makes it difficult to establish the operational relation between LP-Unif and LP-Struct. 
As Section~\ref{equiv:unif} shows, they are not operationally equivalent, in general. However, they are equivalent for programs that are productive (have finite TM-reductions)
and non-overlapping (have no common instances for the Horn formula heads).
%\knote{define what it is?}
%The proof of the main result required us to introduce,

In Section~\ref{real:trans}, we introduce a technique called \emph{realizability transformation}, that,
given a program $\Phi$, produces a program $F(\Phi)$ that is productive and non-overlapping. 
%adds an extra argument into each predicate of the given logic program to record proof information.
Realizability transformation is an interesting proof technique on its own, bearing resemblance to Kleene's~\cite{KleeneSC:1952} method under the same name.
Here, it serves several purposes. 1. It helps to define a class of programs for which S-Resolution and SLD-resolution are operationally equivalent.
2. It gives means to record the proof content alongside reductions. 3. It preserves
%; and to prove this equivalence in Section~\ref{equiv:unif}
%But here, it shows a deeper relation between
%our new approach to S-resolution and the structural properties of SLD-resolution. 
%which in fact also serves as a  general method of program transformation
%that
%The realizability transformation guarantees that any transformed program has finite LP-TM reductions; which is equivalent to guaranteeing the productivity of programs in~\cite{JKK15}.
%, this transformation will also guarantee universal productivity of logic programs.
%Moreover, we show that  the transformation  changes neither
proof-theoretic meaning of the original program and computational behaviour of LP-Unif reductions.
%This coincides with Universal 
%It transforms any given logic program into a productive logic program, showing the relation between constructive content of proofs by resolution and productivity property of ~\cite{JKK15}.
%\knote{the above relation better be exaplained in the relevant section, after relevant thm.}--done

%\end{comment}
%\frank{new}

In order to specify the proof-theoretic meaning of various LP-reductions, we introduce a type-theoretic approach to recover the notion of proof in LP.
 It has been noticed  by Girard~\cite{Girard:1989}, that resolution rule  $\frac{A \lor B \ \ \ \neg B \lor D}{A \lor D}$ can be expressed by means of 
the cut rule in  intuitionistic sequent calculus:
$\frac{A \Rightarrow B  \  \  \     B \Rightarrow D}{A \Rightarrow D}$.
Although the resolution rule is classically equivalent to the cut rule, 
the cut rule is better suited for performing computation 
and at the same time preserving constructive content. In Section~~\ref{forms} we devise
a type system reflecting this intuition: if $p_1$ is a proof of  $A \Rightarrow B$ and $p_2$ is a proof of $B \Rightarrow D$, then $\lambda x . p_2 (p_1 x)$ is a proof of $A \Rightarrow D$.  Thus, a proof can be recorded along with each cut rule. The type system we propose gives a proof theoretic interpretation for LP in general, and in particular to S-resolution.
It also allows us to see clearly the proof-theoretic differences between LP-Unif/LP-Struct and LP-TM. Namely, LP-Unif/LP-Struct give proofs for Horn formulas of the form $\forall x . \Rightarrow \sigma A$, while LP-TM gives proofs for $\forall x . \Rightarrow  A$.
%Thus we can use various LP reductions    as proof search procedure that tries to construct  proof for a given query. and~\ref{equiv:unif}
In Sections~\ref{real:trans}, the type system provides a precise tool % to analyze the operational relation between LP-Unif and LP-Struct, i.e. it enables us
to express the realizability transformation and prove it is a meaning-preserving transformation.

Detailed proofs for lemmas and theorems in this paper may be found in the extended version\footnote{Extended version is available from: \url{http://staff.computing.dundee.ac.uk/pengfu/document/papers/tm-lp.pdf}}. 

\section{A Type System for LP: Horn-Formulas as Types}
\label{forms}
 
We first formulate a type system to model LP.  We
%define
%term-matching reduction, unification reduction and substitution reduction
%term-matching reduction, unification reduction, substitution reduction
 show how LP-Unif, LP-TM and LP-Struct can be defined
in terms of reduction rules. We show that LP-Unif and LP-TM are sound with respect to the type system. 
 
\begin{definition}

\

  Term $t \ ::= \ x \ | \ f(t_1,..., t_n)$

  Atomic Formula $A, B, C, D\ ::= \ P(t_1,...,t_n)$

  (Horn) Formula $F \ ::= \ [\forall \underline{x}]. A_1,..., A_n \Rightarrow A$

%  Quantified Formula/LP Formula: $\forall \underline{x} . F$

  Proof Term $p, e \ ::= \ \kappa \ | \ a \ | \ \lambda a . e \ | \ e \ e'$

  Axioms/LP Programs $\Phi \ ::= \cdot \ | \ \kappa :  F, \Phi$
\end{definition}
 
% Term $t$ denotes first order terms. 
Functions of arity zero are called \textit{term constants}, $\mathrm{FV}(t)$ returns all free term variables of $t$.  %The $P$ in $P(t_1,...,t_n)$ is called \textit{predicate}. Predicate of arity zero is called \textit{proposition}. %LP formula $A_1,..., A_n \Rightarrow A$ can be intuitively viewed as $A_1 \to ...\to A_n \to A$, and the order of $A_1,..., A_n$ does not matter.
We use $\underline{A}$ to denote $A_1,..., A_n$, when the number $n$ is unimportant. If $n$ is zero for $\underline{A} \Rightarrow B$, then we write $\Rightarrow B$. Note that $B$ is an atomic formula, but $\Rightarrow B$ is a formula, we distinguish the notion of atomic formulas from (Horn) formulas. The formula $A_1,..., A_n \Rightarrow B$ can be informally read as ``the conjunction of $A_i$ implies $B$''. We write $\forall \underline{x} . F$ for quantifying over all the free term variables in $F$; $[\forall x].F$ denotes $F$ or $\forall x . F$. LP program $B \Leftarrow \underline{A}$ are represented as  $\forall \underline{x} . \underline{A} \Rightarrow B$ and query is an atomic formula. Proof terms are lambda terms, where $\kappa$ denotes a proof term constant and $a$ denotes a proof term variable. 

The following is a new formulation of a type system intended to provide a type theoretic foundation for LP.  
 
\begin{definition}[Horn-Formulas-as-Types System for LP]
  \label{proofsystem}

\
{\footnotesize
\begin{tabular}{lll}
\\
\infer[gen]{e: \forall \underline{x} . F}{e : F}
&

&
\infer[cut]{\lambda \underline{a} . \lambda \underline{b} . (e_2\ \underline{b})\ (e_1\ \underline{a}) : \underline{A}, \underline{B} \Rightarrow C}{e_1 : \underline{A} \Rightarrow D & e_2 : \underline{B}, D \Rightarrow C}
\\
\\
\infer[inst]{e : [\underline{t}/\underline{x}]F}{e : \forall \underline{x} . F}

&

&
\infer[axiom]{\kappa : \forall \underline{x}. F}{(\kappa : \forall \underline{x}. F) \in \Phi}    

  \end{tabular}
}  
\end{definition}
Note that the notion of type is identified with Horn formulas (atomic intuitionistic sequent), not atomic formulas. The usual sequent turnstile $\vdash$ is
internalized as intuitionistic implication $\Rightarrow$. The rule for first order quantification $\forall$  is placed \textit{outside} of the sequent. The cut rule is the only rule that produces new proof terms. In the \textit{cut} rule, $\lambda \underline{a}.t$ denotes $\lambda a_1....\lambda a_n.t $ and $t\ \underline{b}$ denotes $(...(t\ b_1)... b_n)$. The size of $\underline{a}$ is the same as  
$\underline{A}$ and the size of $\underline{b}$ is the same as $\underline{B}$, and $\underline{a}, \underline{b}$ are not free in $e_1, e_2$. 

Our formulation is given in the style of typed lambda calculus and sequent calculus, the intention for this formulation
is to model LP type-theoretically. It has been observed the cut rule and proper axioms in intuitionistic sequent calculus can emulate LP \cite{Girard:1989}(\S 13.4). Here we add a proof term annotation
and make use of explicit quantifiers. Our formulation uses Curry-style in the sense that 
for the \textit{gen} and \textit{inst} rule, we do not modify the structure of the proof terms. Curry-style
formulation allows us to focus on the proof terms generated by applying the \textit{cut} rule. 
 
\begin{definition}[Beta-Reduction]
We define beta-reduction on proof terms as the congruence closure of the following relation:
 $(\lambda a . p) p' \to_\beta [p'/a]p$
\end{definition}

% In fact, beta-reduction is defined as the congruent closure of the above relation. We work on alpha-equivalence of proof terms and
% quantified formulas. % however we do not define any equivalent classes for term in this paper, so $f(x)$
%is not syntactically equal to $f(y)$, i.e. $f(x) \not \equiv f(y)$.

\begin{definition}[Term Matching]
\label{red}
 We define $A \mapsto_{\sigma} B$, $A$ is matchable to $B$ with a substitution $\sigma$ and $t \mapsto_{\sigma} t'$, $t$ is matchable to $t'$ with a substitution $\sigma$. 
\

{\footnotesize
  \begin{tabular}{lllll}
\\
\infer{x \mapsto_{[t/x]} t}{}
&

&
    \infer{P(t_1,..., t_n) \mapsto_{\sigma_1 \cup ... \cup \sigma_n} P(t_1',...,t_n') }{\{t_i \mapsto_{\sigma_i} t_i'\}_{i\in \{1,...,n\}}}
&

    &
\infer{f(t_1,..., t_n) \mapsto_{\sigma_1 \cup ... \cup \sigma_n} f(t_1',...,t_n') }{\{t_i \mapsto_{\sigma_i} t_i'\}_{i\in \{1,...,n\}}}
  \end{tabular}
}
\end{definition}
%% Here we adopt some syntactic conventions to ease the reasoning about term matching. 
% We treat
% substitution as set and the union of them will invoke a syntactic comparison of two term. For example
Here $[t_1/x] \cup [t_2/x] = [t_1/x]$ if $t_1 \equiv t_2$, else, the matching process fails; and $[t_1/x] \cup [t_2/y] = [t_1/x, t_2/y]$. 

%The following is a formulation of Robinson's unification. 

\begin{definition}[Unification]
 We define $A \sim_{\gamma} B$, $A$ is unifiable to $B$ with substitution $\gamma$ and $t \sim_\gamma t'$, $t$ is unifiable with $t'$ with substitution $\gamma$.

\
{\footnotesize
  \begin{tabular}{lllllll}
\\
\infer{x \sim_\emptyset x}{}

&

&
\infer{x \sim_{[t/x]} t}{x \notin \mathrm{FV}(t)}
&

&
\infer{f(t_1,..., t_n) \sim_{\gamma_n \cdot ... \cdot \gamma_1} f(t_1',...,t_n') }{\{\gamma t_i \sim_{\gamma_i} \gamma t_i' \ \ \ \ \gamma := \gamma_{i-1}\cdot...\cdot\gamma_0\}_{i\in \{1,...,n\} } & \gamma_0 = \emptyset}

\\
\\
\infer{t \sim_{[t/x]}x}{x\notin \mathrm{FV}(t)}
&

&

&

&

\infer{P(t_1,..., t_n) \sim_{\gamma_n \cdot ... \cdot \gamma_1} P(t_1',...,t_n') }{\{ \gamma t_i \sim_{\gamma_i} \gamma t_i' \ \ \gamma := \gamma_{i-1}\cdot...\cdot\gamma_0\}_{i\in \{1,...,n\}} & \gamma_0 = \emptyset}
  \end{tabular}
}
\end{definition}
%% The definition of unification above is an imperative definition rather than an inductive one, i.e.
Note that $\gamma$ is updated for each $i$, and $\gamma \cdot \gamma'$ denotes composition of substitutions $\gamma, \gamma'$. %The $x\sim_\emptyset x$ rule always has priority over the other rule. % We state the following well-known lemma for unification (\S 4.5 in \cite{Baader:1998}). 
% \begin{lemma}[Idempotence]
%   If $A \sim_\gamma C$, then $\gamma \cdot \gamma = \gamma$.
% \end{lemma}

% Below, we use  We define different kinds of reduction for LP.

Below, we formulate different notions of reduction for LP. A similar style of formulating SLD-derivation as reduction system appeared in \cite{nilsson1990logic},
but we identify two more kinds of reductions here: term-matching and substitutional reductions.

\begin{definition}[Reductions]
\label{red}
We define reduction relations on the multiset of atomic formulas: 
\begin{itemize}

\item \textbf{Term-matching(LP-TM) reduction:}

$\Phi \vdash \{A_1,..., A_i, ..., A_n\} \to_{\kappa, \gamma'} \{A_1,..., \sigma B_1,..., \sigma B_m, ..., A_n\}$ for any substitution $\gamma'$, if there exists $\kappa : \forall \underline{x} . B_1,..., B_n \Rightarrow C \in \Phi$ such that $C \mapsto_{\sigma} A_i$.

\item \textbf{Unification(LP-Unif) reduction:} 

$\Phi \vdash \{A_1,..., A_i, ..., A_n\} \leadsto_{\kappa, \gamma \cdot \gamma'} \{\gamma A_1,..., \gamma B_1,..., \gamma B_m, ..., \gamma A_n\}$ for any substitution $\gamma'$, if there exists $\kappa : \forall \underline{x} . B_1,..., B_n \Rightarrow C \in \Phi$ such that $C \sim_{\gamma} A_i$.

\item \textbf{Substitutional reduction:} 

$\Phi \vdash \{A_1,..., A_i, ..., A_n\} \hookrightarrow_{\kappa, \gamma \cdot \gamma'} \{\gamma A_1,..., \gamma A_i, ..., \gamma A_n\}$ for any substitution $ \gamma'$, if there exists $\kappa : \forall \underline{x} . B_1,..., B_n \Rightarrow C \in \Phi$ such that $C \sim_{\gamma} A_i$.

\end{itemize}
\end{definition}
The second subscript of term-matching reduction is used to store the substitutions obtained by unification, it is only used when we combine term-matching reductions with substitutional reductions. The second subscript in unification and substitutional reduction is intended as a state, it will be updated along with reductions. If we just talk about term-matching reduction alone, we usually use $\to$ or $\to_\kappa$. We assume implicit renaming of all quantified variables each time the above rule is applied. We write $\leadsto$ and $\hookrightarrow$ when we leave the underlining state implicit. We use $\to^*$ to denote the reflexive and transitive closure of $\to$, similarly for $\leadsto$. Notation $\leadsto_{\gamma}^*$ and $\to_\gamma^*$ is used when the final state along the reduction path is $\gamma$. Notice the
difference between the substitutional reduction and the unification reduction. Unification reduction requires applying the substitution generated by unification to every atomic formula in the multiset. For term-matching reduction, the other atomic formulas are not affected by the computed substitution, thus term-matching reductions can be parallelised.
 
%\begin{definition}[LP-Unif]
%\
%\noindent  LP by unification is defined as $(\Phi, \leadsto)$, namely, given
%a query $B$, LP-Unif uses only unification reduction to reduce $\{B\}$. 
%\end{definition}

Given a program $\Phi$ and
a set of queries $\{B_1, \ldots, B_n\}$, LP-Unif uses only unification reduction to reduce $\{B_1, \ldots, B_n\}$: 

\begin{definition}[LP-Unif]
\
\noindent  Given a logic program $\Phi$, LP-Unif is given by an abstract reduction system $(\Phi, \leadsto)$. 
\end{definition}

Given a program $\Phi$ and
a set of queries $\{B_1, \ldots, B_n\}$, LP-TM uses only term-matching reduction to reduce $\{B_1, \ldots, B_n\}$:

\begin{definition}[LP-TM]
\
\noindent Given a logic program $\Phi$,  LP-TM is given by an abstract reduction system $(\Phi, \to)$. % namely, given
%a query $B$, LP-TM uses only term-matching reduction to reduce $\{B\}$.
\end{definition}
 
LP-TM seems to be a foreign notion for LP, but it is used in \textit{Context Reduction} \cite{Jones97} in type class instance resolution. LP-TM reductions is all we need to define productivity
(\cite{komendantskaya2014}, \cite{JKK15}):

\begin{definition}[Productivity]
  We say a program $\Phi$ is productive iff every $\to$-reduction is finite.
\end{definition}
 
\begin{definition}
  We use $\to^\mu$ to denote a reduction path to a $\to$-normal form. If the
  $\to$-normal form does not exist, i.e. every $\to$-reduction path is infinite, then $\to^\mu$ denotes an infinite reduction path. If we know that $\to$ is strongly normalizing, then we use $\to^\nu$ to denote a reduction path to a $\to$-normal form. We write $\hookrightarrow^1$ to denote at most one step of $\hookrightarrow$.
\end{definition}

%For a productive program is defined using only the term-matching reduction, for a prductive program, it is still possible that for a query to be diverged, due to the present
% of unification(See Example \ref{ex:str}).   

Given a program $\Phi$ and
a set of queries $\{B_1, \ldots, B_n\}$, LP-Struct  first uses term-matching reduction to reduce  $\{B_1, \ldots, B_n\}$ to a normal form, then performs one step substitutional reduction, and then repeats this process.
%to reduce $\{B_1, \ldots, B_n\}$

\begin{definition}[LP-Struct]
\
\noindent  Given a logic program $\Phi$,  LP-Struct is given by an abstract reduction system $(\Phi, \to^{\mu} \cdot \hookrightarrow^1)$.
%LP by structural resolution is defined as $(\Phi, \to^{\mu} \cdot \hookrightarrow^1)$, namely, given a query $B$, LP-Struct

\end{definition}

If a finite term-matching reduction path does not exist, then $\to^{\mu} \cdot \hookrightarrow^1$ denotes an infinite path. 
When we write $\Phi \vdash \{\underline{A}\} (\to^{\mu} \cdot \hookrightarrow^1)^* \{\underline{C}\}$, it means a nontrivial finite path will be of the shape $\Phi \vdash \{\underline{A}\} \to^{\mu} \cdot \hookrightarrow \cdot ... \cdot \to^{\mu} \cdot \hookrightarrow \cdot \to^\mu \{\underline{C}\}$. 
% When $\to$ is not terminating, the LP-Struct
% reduction behaves just as LP-TM. This raises the question of why we are claiming
% LP-Struct and LP-Unif to be equivalent, whereas LP-TM and LP-Unif are known to be different. We will address this question in Section \ref{real:trans}. 

%\subsection{Proof Theoretic Meaning of LP-Unif and LP-TM}
We first show that LP-Unif and LP-TM are sound w.r.t. the type system
of Definition \ref{proofsystem}, which implies that we can obtain a proof for each successful query. 
 
\begin{lemma}
\label{sound}
  If $\Phi \vdash \{A_1,..., A_n\} \leadsto^*_{\gamma} \emptyset$, then there exist proofs $e_1 : \forall \underline{x} . \Rightarrow \gamma A_1,..., e_n : \forall \underline{x} . \Rightarrow \gamma A_n$, given axioms $\Phi$.
\end{lemma}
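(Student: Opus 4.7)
The plan is to proceed by induction on the length $k$ of the LP-Unif reduction $\Phi \vdash \{A_1,\dots,A_n\} \leadsto^*_{\gamma} \emptyset$. For the base case $k=0$, the starting multiset must already be empty, so $n = 0$ and the conclusion is vacuous.

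For the inductive step, the first reduction step has the form
$$\Phi \vdash \{A_1,\dots,A_n\} \leadsto_{\kappa,\gamma_1} \{\gamma_1 A_1,\dots,\gamma_1 B_1,\dots,\gamma_1 B_m,\dots,\gamma_1 A_n\},$$
where $\kappa : \forall \underline{x}.\, B_1,\dots,B_m \Rightarrow C \in \Phi$ and $C \sim_{\gamma_1} A_i$. The remaining reduction has length $k-1$ and drives the displayed multiset to $\emptyset$ with cumulative final substitution $\gamma'$ satisfying $\gamma = \gamma' \cdot \gamma_1$. By the inductive hypothesis, I obtain proof terms $e_j : \forall \underline{x}.\, \Rightarrow \gamma A_j$ for each $j \neq i$, and $f_\ell : \forall \underline{x}.\, \Rightarrow \gamma B_\ell$ for each $\ell \in \{1,\dots,m\}$.

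It then remains to synthesize $e_i : \forall \underline{x}.\, \Rightarrow \gamma A_i$. By \emph{axiom} followed by \emph{inst}, $\kappa$ can be given the type $\gamma B_1,\dots,\gamma B_m \Rightarrow \gamma C$. Since $\gamma_1$ unifies $C$ with $A_i$ and $\gamma$ extends $\gamma_1$, we have $\gamma C = \gamma A_i$. I then iterate the \emph{cut} rule: each application combines the current residual $\gamma B_\ell,\dots,\gamma B_m \Rightarrow \gamma A_i$ with the witness $f_\ell : \,\Rightarrow \gamma B_\ell$ to eliminate $\gamma B_\ell$ from the antecedent, so after $m$ cuts the type collapses to $\Rightarrow \gamma A_i$. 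A final application of \emph{gen} lifts this to $\forall \underline{x}.\, \Rightarrow \gamma A_i$, completing the construction.

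The main technical obstacle will be the bookkeeping for substitution composition: verifying that the tail of the reduction can legitimately be viewed as a shorter LP-Unif reduction whose cumulative substitution $\gamma'$ composes with $\gamma_1$ to yield the overall $\gamma$, and using this to justify both $\gamma C = \gamma A_i$ and $\gamma'(\gamma_1 X) = \gamma X$ for each atom $X$ appearing in the intermediate multiset. A secondary subtlety is that the \emph{cut} rule as stated only discharges the rightmost antecedent $D$; discharging $\gamma B_1,\dots,\gamma B_m$ in sequence either requires treating antecedents as a multiset so that cuts may be performed in any order, or interleaving the cuts with an implicit exchange argument.
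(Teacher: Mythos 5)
Your proposal is correct and takes essentially the same route as the paper's own proof: induction on the length of the reduction, applying the IH to the tail, instantiating $\kappa$ via \emph{inst} with the accumulated substitution so that $\gamma C \equiv \gamma A_i$, discharging the $m$ premises by iterated \emph{cut}, and finishing with \emph{gen}. The only cosmetic differences are that you take the base case at length $0$ instead of $1$, and you handle the substitution bookkeeping by factoring $\gamma = \gamma' \cdot \gamma_1$ and pushing the composition through, where the paper instead applies the IH to the atoms $\gamma_1 A_j$ with the full final state and then uses idempotence of the computed unifier to get $\gamma'\gamma_1 A_j \equiv \gamma' A_j$; both versions (and your cut-ordering remark, which the paper silently glosses over) go through without difficulty.
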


\begin{proof}
  By induction on the length of the reduction.
  
%\begin{itemize}
%\item
\noindent  \textit{Base Case.} Suppose the length is one, namely, $\Phi \vdash \{A\} \leadsto_{\kappa, \gamma} \emptyset$. It implies that there exists $(\kappa : \forall \underline{x} .  \Rightarrow C) \in \Phi$, such that $C \sim_\gamma A$.  So we have $\kappa :\ \Rightarrow \gamma C$ by the \textit{inst} rule. Thus $\kappa : \ \Rightarrow \gamma A$ by $\gamma C \equiv \gamma A$. Hence $\kappa : \forall \underline{x} . \Rightarrow \gamma A$ by the \textit{gen} rule.

%\item
\noindent  \textit{Step Case.} Suppose
 $\Phi \vdash \{A_1, ..., A_i,..., A_n\} \leadsto_{\kappa, \gamma} \{\gamma A_1,..., \gamma B_1,..., \gamma B_m,..., \gamma A_n\}$ $ \leadsto^*_{\gamma'} \emptyset$, where $\kappa : \forall \underline{x} . B_1,..., B_m \Rightarrow C$ and $C \sim_{\gamma} A_i$. By IH, we know that there exist proofs $e_1 : \forall \underline{x}. \Rightarrow \gamma' \gamma A_1,..., p_1 : \forall \underline{x}. \Rightarrow \gamma' \gamma B_1,..., p_m : \forall \underline{x}. \Rightarrow \gamma' \gamma B_m,..., e_n : \forall \underline{x} . \Rightarrow \gamma' \gamma A_n$. % We
% know that $\gamma' \gamma A = \gamma' A$ for any $A$. So $e_1 : \forall \underline{x}. \Rightarrow \gamma' A_1,..., p_1 : \forall \underline{x}. \Rightarrow \gamma' B_1,..., p_m : \forall \underline{x}. \Rightarrow \gamma' B_m,..., e_n : \forall \underline{x} . \Rightarrow \gamma' A_n$. 
We can use \textit{inst} rule to instantiate the quantifiers of $\kappa$ using $\gamma'\cdot \gamma$, so
we have $\kappa : \gamma'\gamma B_1,..., \gamma'\gamma B_m \Rightarrow \gamma'\gamma C$.
Since $\gamma'\gamma A_i \equiv \gamma' \gamma C$, we can construct a proof $e_i = \kappa \ p_1\ ...\ p_m$ with $e_i :\ \Rightarrow \gamma'\gamma A_i$, by applying the cut rule $m$ times. By \textit{gen}, we have $e_i : \forall \underline{x}. \Rightarrow \gamma'\gamma A_i$. The substitution generated by the unification is idempotent, and $\gamma'$ is accumulated from $\gamma$, i.e. $\gamma' = \gamma'' \cdot \gamma$ for some $\gamma''$,
so $\gamma' \gamma A_j \equiv \gamma'' \gamma \gamma A_j \equiv \gamma''\gamma A_j \equiv \gamma' A_j$ for any $j$. Thus we have $e_j : \forall \underline{x}. \Rightarrow \gamma' A_j$ for any $j$. 
%\end{itemize}
\end{proof}

%Lemma \ref{sound} implies the following theorem. 

\begin{theorem}[Soundness of LP-Unif]
\label{sound:unif}
    If $\Phi \vdash \{A\} \leadsto^*_\gamma \emptyset$ , then there exists a proof $e : \forall \underline{x} .\Rightarrow \gamma A$ given axioms $\Phi$. %% and $e$ is a first order proof term, furthermore, $e$ is in beta normal form.  
\end{theorem}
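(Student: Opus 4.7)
The plan is to derive this theorem as an immediate specialization of Lemma \ref{sound}. The reduction hypothesis $\Phi \vdash \{A\} \leadsto^*_\gamma \emptyset$ is precisely the $n=1$ case of the hypothesis of Lemma \ref{sound}, with the singleton multiset $\{A_1\} = \{A\}$. No new reasoning about the reduction side is needed.

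Concretely, I would invoke Lemma \ref{sound} with $n=1$ to obtain a proof $e_1 : \forall \underline{x} . \Rightarrow \gamma A_1$ under the axioms in $\Phi$. Renaming $A_1$ to $A$ and $e_1$ to $e$ yields exactly the claim $e : \forall \underline{x} . \Rightarrow \gamma A$.

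The main point is that there is essentially no obstacle here: all the substantive content — tracking the accumulated unifier along a reduction path, constructing the proof term for a resolved atom by iterated applications of the \textit{cut} rule to the proofs of the freshly introduced body atoms $\gamma B_1, \ldots, \gamma B_m$, and reconciling idempotence of unifiers so that $\gamma' \gamma A_j \equiv \gamma' A_j$ — has already been absorbed into the inductive proof of Lemma \ref{sound}. Hence the theorem follows by a one-line appeal to that lemma, and the present statement is best viewed as the headline consequence of Lemma \ref{sound} specialized to the standard LP setting of a single query atom.
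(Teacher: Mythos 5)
Your proposal is correct and matches the paper's intent: the theorem is indeed the $n=1$ instance of Lemma \ref{sound}, and the paper treats it exactly as such, with all the substantive work done in that lemma's induction. Nothing further is needed.
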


For example, by the soundness theorem above, the derivation in Example \ref{ex:conn} yields a proof $(\lambda b. (\kappa_1\ b)\ \kappa_3)\ \kappa_2$ for the formula $\ \Rightarrow \mathrm{Connect}(\mathrm{node}_1, \mathrm{node}_3)$. 
%% 
%% \begin{proof}
%%   By lemma \ref{sound}. %% Here we want to mention that since we have a constructive proof, we
%% %% know how exactly to construct such proof $e$. Inspecting the proof of Lemma \ref{sound}, we never use lambda to construct the proof term $e$, thus $e$ is a first order proof term.
%% \end{proof}

\begin{theorem}[Soundness of LP-TM]
\label{sound:tm}
    If $\Phi \vdash \{A\} \to^* \emptyset$ , then there exists a proof $e : \forall \underline{x} .\Rightarrow  A$ given axioms $\Phi$.
\end{theorem}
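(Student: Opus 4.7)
The plan is to prove a strengthened statement by induction on the length of the LP-TM reduction, exactly parallel to the auxiliary Lemma~\ref{sound} used for LP-Unif, and then specialize to the one-query case. The strengthened statement is: if $\Phi \vdash \{A_1,\ldots,A_n\} \to^* \emptyset$, then for each $i$ there is a proof $e_i : \forall \underline{x}.\ \Rightarrow A_i$ given axioms $\Phi$. The theorem is immediate from the $n = 1$ instance.

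For the base case, a one-step reduction $\Phi \vdash \{A\} \to_{\kappa} \emptyset$ must be witnessed by some $(\kappa : \forall \underline{x}.\ \Rightarrow C) \in \Phi$ and a substitution $\sigma$ with $C \mapsto_{\sigma} A$. The \textit{inst} rule yields $\kappa :\ \Rightarrow \sigma C$, and since term matching gives $\sigma C \equiv A$, we get $\kappa :\ \Rightarrow A$, whence \textit{gen} delivers $\kappa : \forall \underline{x}.\ \Rightarrow A$. For the step case, suppose $\Phi \vdash \{A_1,\ldots,A_i,\ldots,A_n\} \to_{\kappa} \{A_1,\ldots,\sigma B_1,\ldots,\sigma B_m,\ldots,A_n\} \to^* \emptyset$ with $\kappa : \forall \underline{x}.\ B_1,\ldots,B_m \Rightarrow C$ and $C \mapsto_{\sigma} A_i$. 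Applying the induction hypothesis to the shorter reduction, I obtain proofs $e_j : \forall \underline{x}.\ \Rightarrow A_j$ for $j \neq i$ and $p_k : \forall \underline{x}.\ \Rightarrow \sigma B_k$ for $k = 1,\ldots,m$. The \textit{inst} rule instantiates the axiom to $\kappa : \sigma B_1,\ldots,\sigma B_m \Rightarrow \sigma C$, and $m$ iterated applications of the \textit{cut} rule with $p_1,\ldots,p_m$ produce a proof term of type $\Rightarrow \sigma C$, which is $\Rightarrow A_i$ because $\sigma C \equiv A_i$; one more use of \textit{gen} closes the case.

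The reason I do not foresee a real obstacle is that the LP-TM argument is strictly simpler than its LP-Unif counterpart: term-matching reduction leaves the sibling atoms $A_j$ ($j \neq i$) literally unchanged, so no substitution has to be propagated through the context. Consequently, the idempotency-plus-accumulation bookkeeping of Lemma~\ref{sound} (showing $\gamma' \gamma A_j \equiv \gamma' A_j$) is not needed here, and the resulting proof terms are uniformly of the form $\Rightarrow A_j$ rather than $\Rightarrow \gamma A_j$. The only point worth flagging is the clean identity $\sigma C \equiv A_i$ coming from term matching, which is what allows the cut to produce exactly the type we want without any residual substitution on the conclusion.
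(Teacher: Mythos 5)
Your proposal is correct and follows essentially the same route the paper intends: the multiset-strengthened statement proved by induction on the reduction length is the direct analogue of Lemma~\ref{sound}, with the bookkeeping of accumulated substitutions dropped because term matching gives $\sigma C \equiv A_i$ and leaves the sibling atoms untouched. The base and step cases you give (\textit{inst}, then $m$ cuts, then \textit{gen}) match the construction used for LP-Unif, so nothing further is needed.
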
 

%% \begin{proof}
%%   Similar to the proof of Lemma \ref{sound}.
%% \end{proof} 

Observing Theorem \ref{sound:unif} and Theorem \ref{sound:tm}, we see that for LP-TM, there is no need to accumulate substitutions, and the resulting formula is proven as stated, and does not require substitution. This difference is due to the difference of LP-TM and LP-Unif reductions. We are going to
postpone the proof of soundness theorem for LP-Struct to Section \ref{equiv:unif}, there we
show LP-Struct and LP-Unif are operationally equivalent, which implies the soundness of LP-Struct. 

\begin{comment}
However, LP-TM is incomplete w.r.t. the proof 
system, as we will see. 

\begin{lemma}
\label{comp:l1}
If there exists proofs $e : [\forall \underline{x} .] B_1,..., B_m \Rightarrow  A$ given axioms $\Phi$, then for any $\delta$, $\Phi \vdash \{\delta A\} \leadsto^*_{\gamma} \{\gamma_1 B_1,..., \gamma_m B_m\}$ for some $\gamma,\gamma_1,..., \gamma_m$.
\end{lemma}
\begin{proof}
  By induction on derivation of $e : [\forall \underline{x} .] B_1,..., B_m \Rightarrow  A$.
\end{proof}
 
\begin{theorem}[Completeness of LP-Unif]
\label{comp:unif}
If there exists proofs $e : [\forall \underline{x}] .\Rightarrow  A$ given axioms $\Phi$, then  $\Phi \vdash \{\delta A\} \leadsto^* \emptyset$ for any $\delta$. 
\end{theorem} 
\begin{proof}
  By Lemma \ref{comp:l1}. 
\end{proof}
\textbf{Incompleteness of LP-TM} For logic program $\kappa_1 : \forall x.\forall y . T(y) \Rightarrow T(x), \kappa_2 : \ \Rightarrow T(c)$ where $c$ is a term constant. We know that there exist a proof $\kappa_1 \kappa_2 : \forall x . \Rightarrow T(x)$, but $\Phi \vdash \{ T(x)\} \to \{ T(y)\} \not \to $, since $T(c)$ is not matchable to $T(y)$. 
\end{comment}

 % We are going to pursue approach 2. in this paper. To overcome the difficulty of having $\to$ to be diverging, we are going to transform $\to^\mu$ to $\to^\nu$, and the infinite behavior of $\to^\mu$ can be exibited by the infinite behavior of $(\to^{\nu} \cdot \hookrightarrow^1)$. We will see this more clearly in the following Section.

\section{Realizability Transformation}
\label{real:trans} 
We define \textit{realizability transformation} in this section. Realizability described in \cite{KleeneSC:1952}(\S 82) is a technique that uses a number to represent a proof of a number-theoretic formula. The transformation described here is similar in the sense that we use a first order term to represent the proof of a formula. 
More specifically, we use a first order term as an extra argument for a formula to represent a proof of that formula. Before we define the transformation, we first state several basic results about
the type system in Definition \ref{proofsystem}. 
 
\begin{theorem}[Strong Normalization]
\label{real:sn}
  If $e : F$, then $e$ is strongly normalizable w.r.t. beta-reduction on proof terms.
\end{theorem}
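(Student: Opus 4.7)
The plan is to reduce strong normalization of this type system to the classical strong normalization theorem for the simply-typed lambda calculus (STLC). The key insight is that proof terms are ordinary lambda terms (with constants $\kappa$), and the \textit{gen}/\textit{inst} rules are Curry-style (they do not touch the proof term), so once we erase the first-order content of Horn formulas, only the implicational structure survives and it exactly matches a simple type.

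Concretely, I would define an erasure $(-)^\ast$ from Horn formulas to simple types as follows. For each predicate symbol $P$, introduce a fresh base type $b_P$, and set $P(t_1, \ldots, t_n)^\ast = b_P$ (forgetting the first-order arguments). Extend to formulas by $(A_1, \ldots, A_n \Rightarrow A)^\ast = A_1^\ast \to \cdots \to A_n^\ast \to A^\ast$ and $(\forall \underline{x}. F)^\ast = F^\ast$. Assemble an STLC signature that types every axiom constant $\kappa$ with $F^\ast$ whenever $(\kappa : F) \in \Phi$, and translate a proof-term variable of Horn type $G$ to an STLC variable of type $G^\ast$.

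Next I would prove, by induction on the derivation of $\Phi \vdash e : F$, that $e$ is typable in STLC at type $F^\ast$. The axiom case is immediate from the signature. For \textit{gen} and \textit{inst}, the erasure identifies $(\forall \underline{x}. F)^\ast$ with $F^\ast$ and with $([\underline{t}/\underline{x}]F)^\ast$, so the STLC typing is preserved unchanged. For the \textit{cut} case, given $e_1$ of STLC type $\underline{A}^\ast \to D^\ast$ and $e_2$ of STLC type $\underline{B}^\ast \to D^\ast \to C^\ast$, the term $\lambda \underline{a}. \lambda \underline{b}. (e_2\, \underline{b})\, (e_1\, \underline{a})$ is routinely checked to have type $\underline{A}^\ast \to \underline{B}^\ast \to C^\ast$, which is $(\underline{A}, \underline{B} \Rightarrow C)^\ast$. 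Since beta-reduction on proof terms is literally beta-reduction of the underlying lambda terms and the erasure preserves it, any infinite $\to_\beta$ sequence from $e$ would yield an infinite $\to_\beta$ sequence in STLC, contradicting the classical strong normalization theorem for STLC.

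The step most likely to require care is verifying that the erasure is well-defined on \emph{all} Horn formulas that can arise during typing — in particular, that collapsing the first-order arguments of predicates is safe. Because we do not separately type the terms $t_i$ and because \textit{inst} applies substitutions only inside predicate arguments (which are erased), this collapse is sound; if one preferred, one could even use a single base type $\star$ for all predicates without affecting the argument. Apart from this design choice, the rest of the proof is entirely routine and invokes SN for STLC as a black box.
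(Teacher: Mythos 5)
Your proposal is correct, but it proves the theorem by a genuinely different route than the paper. The paper gives a self-contained, simplified Tait--Girard argument: it defines a reducibility set $\mathsf{RED}_F$ for each formula (for $A_1,\ldots,A_n \Rightarrow B$, the terms $p$ such that $p\,p_1\cdots p_n$ is strongly normalizing whenever each $p_i$ is; quantified formulas are assigned the same set as their bodies), proves by induction on typing derivations that every typable proof term lies in the reducibility set of its type (the cut case being handled by an inner induction on lengths of reduction paths), and then extracts strong normalization. You instead erase Horn formulas to simple types --- collapsing all first-order arguments of predicates and identifying $(\forall \underline{x}.F)^\ast$ and $([\underline{t}/\underline{x}]F)^\ast$ with $F^\ast$ --- show by induction on derivations that every typable proof term is STLC-typable over a signature assigning each constant $\kappa$ the erased type of its axiom, and then quote strong normalization of the simply-typed lambda calculus as a black box; since the erasure is the identity on proof terms, any infinite $\to_\beta$ sequence would contradict SN of STLC. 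Both arguments are sound, and they hinge on the same observation (made explicitly in the paper) that first-order quantification and instantiation never touch the proof term, so only the implicational skeleton matters; your check that gen/inst leave the erased type unchanged and that cut erases to a well-typed STLC term is exactly the routine verification needed. What your approach buys is brevity and modularity: the result is imported rather than re-proved. What the paper's direct reducibility proof buys is self-containedness and robustness --- it would survive extensions of the system in which the erasure to simple types is no longer faithful (e.g., if quantifiers or term structure began to interact with proof terms), which is presumably why the authors kept the reducibility scaffolding rather than appealing to STLC.
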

 
The proof of strong normalization (SN) is an adaptation of Tait-Girard's reducibility proof. Since the first order quantification does not impact the proof term, the proof is very similar to the SN proof of simply typed lambda calculus. 
 
\begin{lemma}
\label{fst:lambda}
   If $e : [\forall \underline{x}.] \underline{A} \Rightarrow B$ given axioms $\Phi$, then either $e$ is a proof term constant or it is normalizable to the form $\lambda \underline{a}. n$, where $n$ is first order normal proof term. 
\end{lemma}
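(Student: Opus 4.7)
The plan is to combine the strong normalization theorem with a standard analysis of $\beta$-normal forms of the (essentially untyped) $\lambda$-calculus of proof terms.

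If $e$ is literally a proof-term constant $\kappa$, the first disjunct holds and there is nothing more to show. Otherwise, invoke Theorem \ref{real:sn} to obtain a $\beta$-normal form $e^{*}$ of $e$. A routine subject-reduction argument, which is straightforward in this Curry-style calculus because only the \textit{cut} rule introduces $\beta$-redexes and the rules \textit{gen} and \textit{inst} leave proof terms untouched, then yields $e^{*} : [\forall \underline{x}.]\, \underline{A} \Rightarrow B$.

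Every $\beta$-normal $\lambda$-term admits the canonical decomposition $\lambda a_1 \ldots \lambda a_k.\, h\, p_1 \ldots p_m$, where $k,m \geq 0$, the head $h$ is either a $\lambda$-bound variable or a proof-term constant, and each $p_i$ is itself in normal form. Setting $\underline{a} := a_1, \ldots, a_k$ (possibly empty) and $n := h\, p_1 \ldots p_m$ presents $e^{*}$ exactly in the required shape $\lambda \underline{a}.\, n$; by construction $n$ is not itself a $\lambda$-abstraction, which is what the phrase ``first-order normal proof term'' requires. It remains to check consistency with the type: since the conclusion $B$ is atomic and hence not an implication, the $k$ leading $\lambda$'s can only be absorbed by antecedents from $\underline{A}$, so $k \leq |\underline{A}|$, and the applicative body $n$ then has atomic type, typed through the head $h$ and its normal-form arguments $p_1, \ldots, p_m$.

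The step I expect to require the most care is the subject-reduction claim used to transport the type from $e$ to $e^{*}$. The standard recipe, namely a substitution lemma showing $[N/a]M$ still has type $\tau$ whenever $M : \tau$ and $N$ has the type assigned to $a$, and then closing under one-step $\beta$ by reducing a single redex $(\lambda a.\, M)\, N \to [N/a]M$, goes through without surprise here, precisely because \textit{gen}, \textit{inst} and \textit{axiom} operate purely on types rather than on proof-term structure. Everything else is a direct case analysis of $\beta$-normal forms and poses no difficulty.
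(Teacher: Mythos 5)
There is a genuine gap, and it starts with your reading of ``first order normal proof term.'' In this paper first-orderness is defined inductively: a proof term is first order if it is a variable or a constant, or an application of first-order terms to first-order terms --- i.e.\ it contains \emph{no} lambda abstractions anywhere, not merely none at the head. Your argument only establishes that the $\beta$-normal form decomposes as $\lambda \underline{a}.\, h\, p_1 \ldots p_m$ with $n := h\, p_1 \ldots p_m$ not itself an abstraction; but a general $\beta$-normal term of that shape can perfectly well have abstractions buried inside the arguments $p_i$ (e.g.\ $\kappa\,(\lambda b.\, b)$ is $\beta$-normal and headed by a constant, yet not first order). So the sentence ``by construction $n$ is not itself a $\lambda$-abstraction, which is what the phrase `first-order normal proof term' requires'' is exactly where the proof fails: you still owe an argument that every $p_i$ is lambda-free, which is the actual content of the lemma.

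The secondary problem is that the machinery you invoke to patch this --- subject reduction, typing of the residual body $n$, inversion on the head and its arguments --- is not ``routine'' in this system, because the calculus has no variable rule, no application rule, no abstraction rule, and no context of proof-term variables: the only rules are \textit{gen}, \textit{inst}, \textit{cut} and \textit{axiom}, and \textit{cut} is the sole producer of proof terms. Hence the claim that ``the applicative body $n$ then has atomic type'' is not even formulable as stated ($n$ has free proof variables $\underline{a}$, and there is no judgement typing open applicative terms), and the standard substitution-lemma route to subject reduction needs to be reconstructed for this non-standard presentation rather than assumed. The paper sidesteps all of this by proving the lemma by induction on the \emph{typing derivation}: the base case is the axiom rule (a constant), the cut case uses the induction hypothesis on $e_1, e_2$ together with the explicit shape $\lambda \underline{a}.\lambda \underline{b}.(e_2\,\underline{b})(e_1\,\underline{a})$, and a separate lemma that substituting first-order terms into first-order terms preserves first-orderness; normalization is then computed directly on that shape rather than inferred from a generic analysis of typed normal forms. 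To repair your proposal you would either have to adopt that derivation-directed induction, or first develop the missing infrastructure (contexts, inversion, subject reduction, and an inner induction showing all arguments of atomic type are lambda-free), which is considerably more work than the paper's proof.
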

 
\begin{theorem}
  \label{fst}
    If $e : [\forall \underline{x}.] \Rightarrow B$, then $e$ is normalizable to a first order proof term.
\end{theorem}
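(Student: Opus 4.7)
The plan is to obtain Theorem \ref{fst} as an immediate corollary of Theorem \ref{real:sn} together with Lemma \ref{fst:lambda}, specialised to the case where the antecedent list of the Horn formula is empty.

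First, I would apply strong normalization (Theorem \ref{real:sn}) to conclude that $e$ has a beta-normal form $e'$ with $e' : [\forall \underline{x}.] \Rightarrow B$; note that the outer universal prefix is irrelevant to the shape of $e'$, since the \emph{gen} and \emph{inst} rules are formulated Curry-style and so do not alter the proof term. Then I would invoke Lemma \ref{fst:lambda} with $\underline{A}$ instantiated to the empty sequence. The lemma gives two cases: either $e'$ is a proof term constant $\kappa$, which is already a first order proof term, or $e'$ has the form $\lambda \underline{a}. n$ with $n$ first order normal.

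The key observation is that, in Lemma \ref{fst:lambda}, the length of the binder prefix $\underline{a}$ is synchronised with the length of $\underline{A}$; this synchronisation is forced by the \emph{cut} rule, whose conclusion $\lambda \underline{a}. \lambda \underline{b}. (e_2\,\underline{b})(e_1\,\underline{a})$ introduces exactly as many binders as there are antecedents. When $\underline{A}$ is the empty sequence, $\underline{a}$ is empty too, so $\lambda \underline{a}. n$ collapses to $n$. Hence in either case $e'$ is a first order proof term, which is precisely the conclusion.

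I do not expect a significant obstacle here: both the hard analytic work (Tait--Girard reducibility for SN) and the structural characterisation of normal forms (Lemma \ref{fst:lambda}) are already available, so the theorem is really a packaging step. The only point worth making explicit in the writeup is why the $[\forall \underline{x}.]$ prefix can be ignored when reading off the shape of a normal proof term, and why the empty $\underline{A}$ really does force the binder prefix to vanish.
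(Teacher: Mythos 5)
Your proposal is correct and matches the paper's own proof, which derives the theorem in one line from Lemma \ref{fst:lambda}, subject reduction, and the strong normalization theorem (Theorem \ref{real:sn}); your step asserting that the normal form $e'$ still has type $[\forall \underline{x}.] \Rightarrow B$ is exactly where subject reduction is implicitly used. Your explicit remark that an empty antecedent list forces the binder prefix $\underline{a}$ to be empty is the same (tacit) reading the paper relies on, so there is nothing substantively different here.
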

 
Lemma \ref{fst:lambda} and Theorem \ref{fst} show that  we can use first order terms to represent normalized proof terms; and thus pave the way to realizability transformation.
 
\begin{definition}[Representing First Order Proof Terms]
\label{fst:rep}
  Let $\phi$ be a mapping from proof term variables to first order terms. We define 
a representation function $\interp{\cdot}_\phi$ from first order normal proof terms to first order terms. 
  
\noindent -- $\interp{a}_\phi = \phi(a)$. 
%  \item $\interp{\lambda a_1 ... a_n . p}_{\phi} = \interp{p}_{\phi[\alpha_1/a_1,..., \alpha_n/a_n]}$.

  \noindent -- $\interp{\kappa \ p_1 ...p_n}_\phi = f_{\kappa}(\interp{p_1}_\phi,..., \interp{p_n}_\phi)$, where $f_\kappa$ is a function symbol.

\end{definition}
 
\begin{definition}
  Let $A \equiv P(t_1,..., t_n)$ be an atomic formula, we write
  $A[t']$, where $(\bigcup_i \mathrm{FV}(t_i)) \cap \mathrm{FV}(t') = \emptyset$, to 
  abbreviate a new atomic formula $P(t_1,..., t_n, t')$.
\end{definition}
 
% We define the function $\interp{\cdot}_\phi$ to represent
% \textit{flat}\footnote{It is a special form of supercombinator, where the only lambda binders are at top-level.} lambda proof terms as first order terms. %We can see this in the following definition.
\begin{definition}[Realizability Transformation]
\label{real}
  We define a transformation $F$ on formula and its normalized proof term: 
  \begin{itemize}
  \item $F(\kappa : \forall \underline{x} . A_1, ..., A_m \Rightarrow B) = \kappa : \forall \underline{x} . \forall \underline{y}. A_1[y_1], ..., A_m[y_m] \Rightarrow B[f_\kappa(y_1,...,y_m)]$, where $y_1,..., y_m$ are all fresh and distinct.
  \item $F(\lambda \underline{a} . n : [\forall \underline{x}] . A_1, ..., A_m \Rightarrow B) = \lambda \underline{a} . n : [\forall \underline{x}.\forall \underline{y}]. A_1[y_1], ..., A_m[y_m] \Rightarrow $ 

\noindent $B[\interp{n}_{[\underline{y}/\underline{a}]}]$, where $y_1,..., y_m$ are all fresh and distinct.
  \end{itemize}
     
\end{definition}

The realizability transformation systematically associates a proof to each atomic formula,
so that the proof can be recorded along with reductions. % The strong normalization result
% is crutial to realizability transformation, since normalized proof can be represented by a first-order term. 

% The following theorems show the realizability transformation has three highly desirable properties, namely,
% the transformation does not change the proof theoretic meanings for LP-Unif, it yields a finite term-matching reduction for LP-TM, and it allows us to compute the proof term for the goal formula for LP-Unif. 
%% The following theorem shows the realizability transformation yields a finite term-matching reduction for LP-TM. 
\begin{example}
 \label{ex:conn:real0}
The following logic program is the result of applying realizability transformation on
the program in Example \ref{ex:conn}.
  {\footnotesize
  \begin{center}
  $  \kappa_1 : \forall x . \forall y . \forall u_1. \forall u_2 . \mathrm{Connect}(x, y, u_1), \mathrm{Connect}(y, z, u_2) \Rightarrow \mathrm{Connect}(x, z, f_{\kappa_1}(u_1, u_2))$

  $\kappa_2 : \ \Rightarrow \mathrm{Connect}(\mathrm{node_1}, \mathrm{node_2}, c_{\kappa_2})$
  
  $\kappa_3 : \ \Rightarrow \mathrm{Connect}(\mathrm{node_2}, \mathrm{node_3}, c_{\kappa_3})$
    \end{center}
 }

\noindent Before the realizability transformation, we have the following judgement:
{\footnotesize
\begin{center}
  $\lambda b. (\kappa_1\ b)\ \kappa_2 :
  \mathrm{Connect}(\mathrm{node_2}, z) \Rightarrow
  \mathrm{Connect}(\mathrm{node_1}, z)$
\end{center}
}
\noindent We can apply the transformation, we get: 
{\footnotesize
\begin{center}
  $\lambda b. (\kappa_1\ b)\ \kappa_2 :
  \mathrm{Connect}(\mathrm{node_2}, z, u_1) \Rightarrow
  \mathrm{Connect}(\mathrm{node_1}, z, \interp{(\kappa_1\ b)\ \kappa_2}_{[u_1/b]})$
\end{center}
}
\noindent which is the same as
{\footnotesize
\begin{center}
  $\lambda b. (\kappa_1\ b)\ \kappa_2 :
  \mathrm{Connect}(\mathrm{node_2}, z, u_1) \Rightarrow
  \mathrm{Connect}(\mathrm{node_1}, z, f_{\kappa_1}( u_1, c_{\kappa_2}))$
\end{center}
}

\noindent Observe that the transformed formula:

\noindent $\mathrm{Connect}(\mathrm{node_2}, z, u_1) \Rightarrow \mathrm{Connect}(\mathrm{node_1}, z, f_{\kappa_1}( u_1, c_{\kappa_2}))$ is provable by $\lambda b. (\kappa_1\ b)\ \kappa_2$ using the transformed program.
\end{example}

Let $F(\Phi)$ mean applying the realizability transformation to every axiom in $\Phi$. We
write $(F(\Phi), \leadsto), (F(\Phi), \to), (F(\Phi), \to^\mu \cdot \hookrightarrow^1)$, to
mean given axioms $F(\Phi)$, use LP-Unif, LP-TM, LP-Struct respectively to reduce a given query. Note that for query $A$ in $(\Phi, \leadsto), (\Phi, \to), (\Phi, \to^\mu \cdot \hookrightarrow^1)$, it becomes query $A[t]$ for some $t$ such that $\mathrm{FV}(A) \cap \mathrm{FV}(t) = \emptyset$ in $(F(\Phi), \leadsto), (F(\Phi), \to), (F(\Phi), \to^\mu \cdot \hookrightarrow^1)$.

The next Theorem establishes that, for any program $\Phi$, LP-TM reductions for $F(\Phi)$ are strongly normalizing.

\begin{theorem}
\label{terminate}
 For any $(\Phi, \to^\mu \cdot \hookrightarrow^1)$, we have $(F(\Phi), \to^\nu \cdot \hookrightarrow^1 )$. % i.e. $\to$ is strongly normalizing in $(F(\Phi), \to)$. 
\end{theorem}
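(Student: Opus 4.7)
The plan is to show that term-matching reduction $\to$ is strongly normalizing on $F(\Phi)$ for any $\Phi$, which is exactly what justifies replacing $\to^\mu$ by $\to^\nu$ in the LP-Struct reduction relation on the transformed program. The structural key is the shape of the transformed axioms: after applying $F$, every axiom $\kappa$ has head of the form $B[f_\kappa(y_1,\ldots,y_m)]$, in which the extra proof-witness position is rigidly headed by the constructor $f_\kappa$ with the universally quantified variables $y_i$ as its immediate subterms.

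First I would show that term-matching of such a head against a query atom $Q[s]$ can succeed only when $s$ has the form $f_\kappa(s_1,\ldots,s_m)$. This is forced by Definition~\ref{red}: a pattern $f_\kappa(\underline{y})$ matches a term $s$ only through the rule for function symbols with equal function symbol on both sides, so $s$ must already be headed by $f_\kappa$ of arity $m$. In that case, the matching substitution $\sigma$ must send each $y_i$ to the corresponding $s_i$. Consequently the body atoms produced by the step, namely $\sigma(A_1[y_1]),\ldots,\sigma(A_m[y_m])$, carry the proof witnesses $s_1,\ldots,s_m$, each a strict subterm of $s$.

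Next I would fix a measure that makes this strict decrease explicit. Let $\mu(A)$ be the size (number of symbols) of the extra proof-witness argument of the atom $A$, and extend $\mu$ to multisets of atoms via the Dershowitz--Manna multiset ordering on $\mathbb{N}$. A single $\to$-step on $F(\Phi)$ removes one atom of measure $\mu(A) = |s|$ and inserts atoms of measures $|s_1|,\ldots,|s_m|$, each strictly less than $|s|$, so the multiset measure strictly decreases at every step. Since the multiset extension of a well-founded order is well founded, no infinite $\to$-chain exists in $(F(\Phi),\to)$; thus $\to$ is strongly normalizing, and $\to^\mu$ is in fact $\to^\nu$.

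The main point requiring care is not computational but structural: one must check that the head pattern really does pin the proof-witness position to a constructor with the $y_i$ as immediate subterms, so that any successful match genuinely instantiates $y_i$ to subterms of $s$ rather than enlarging them through interaction with other variables of the head. The corner case, in which the query's extra argument is itself a variable, is handled trivially: no rule head (whose extra argument is $f_\kappa(\underline{y})$) matches against a bare variable, so reduction halts at once. With these observations in place, the multiset measure gives strong normalization, which is precisely the content of the theorem.
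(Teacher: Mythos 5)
Your proposal is correct and follows essentially the same route as the paper: the paper's proof also reduces the claim to strong normalization of $\to$ on $F(\Phi)$ and appeals to a decreasing measure on the extra witness argument, which is strictly larger in each head than in the corresponding body atoms. You merely make that sketch explicit, by analysing the match against $B[f_\kappa(y_1,\ldots,y_m)]$ and packaging the decrease as a Dershowitz--Manna multiset ordering, which is a faithful elaboration rather than a different argument.
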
 
\begin{proof}
 We just need to show $\to$-reduction is strongly normalizing in $(F(\Phi), \to)$. By Definition \ref{fst:rep} and \ref{real}, we can establish a decreasing measurement(from right to left) for each rule in $F(\Phi)$, since
the last argument in the head of each rule is strictly larger than the ones in the body. 
\end{proof}

The above theorem shows that we can use realizability transformation to obtain productive logic programs, moreover, this transformation is general, meaning that any logic program can be transformed to an equivalent productive one. The following theorem shows that realizability transformation does not change the proof-theoretic meaning of a program.

\begin{theorem}\label{th6}
\label{realI}
Given axioms $\Phi$, if $e: [\forall \underline{x}] . \underline{A}\Rightarrow B$ holds with $e$ in normal form, then $F(e : [\forall \underline{x}] . \underline{A}\Rightarrow B)$ holds for axioms $F(\Phi)$.
\end{theorem}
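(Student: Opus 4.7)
The plan is to prove Theorem \ref{realI} by structural induction on the normal proof term $e$, relying on the shape dictated by Lemma \ref{fst:lambda}: $e$ is either a proof constant or has the form $\lambda \underline{a}.n$ with $n$ a first-order normal proof term. A key preliminary step is a small substitution lemma asserting that $\interp{\cdot}$ is compositional under its parameter $\phi$, so that the realizer computed for a subterm slots correctly into the head term $f_\kappa$ produced by $F$ on the ambient rule.

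\textbf{Base case.} If $e = \kappa$, then $\kappa : \forall \underline{x}.A_1,\dots,A_m \Rightarrow B$ is in $\Phi$. By Definition \ref{real}, $F(\Phi)$ contains $\kappa : \forall \underline{x}\forall \underline{y}.A_1[y_1],\dots,A_m[y_m] \Rightarrow B[f_\kappa(y_1,\dots,y_m)]$, which is exactly $F(e : \forall \underline{x}.A_1,\dots,A_m \Rightarrow B)$. The \emph{axiom} rule in $F(\Phi)$ gives the conclusion directly.

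\textbf{Inductive step.} Suppose $e = \lambda \underline{a}.n$ with $n$ first-order normal, and induct on $n$ using Definition \ref{fst:rep}. If $n = a_i$, then necessarily $B = A_i$ and $\interp{a_i}_{[\underline{y}/\underline{a}]} = y_i$, so the transformed sequent reads $A_1[y_1],\dots,A_m[y_m] \Rightarrow A_i[y_i]$; the same sequence of cuts in $\Phi$ that produces $\lambda \underline{a}.a_i$ transfers verbatim to $F(\Phi)$ because the added output positions propagate uniformly. If $n = \kappa\,p_1\cdots p_k$ where $\kappa : \forall \underline{x}.C_1,\dots,C_k \Rightarrow D$ is in $\Phi$, then the outermost formation of $e$ instantiates $\kappa$ via some $\sigma$ and combines it, through $k$ nested cuts, with sub-derivations of $\lambda \underline{a}.p_i : \underline{A} \Rightarrow [\sigma]C_i$. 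Each $p_i$ is first-order normal and strictly smaller than $n$, so the induction hypothesis provides, in $F(\Phi)$,
\[ \lambda \underline{a}.p_i : A_1[y_1],\dots,A_m[y_m] \Rightarrow [\sigma]C_i\bigl[\interp{p_i}_{[\underline{y}/\underline{a}]}\bigr]. \]
On the other side, $F(\Phi)$ contains $F(\kappa) : \forall \underline{x}\forall \underline{z}.\,C_1[z_1],\dots,C_k[z_k] \Rightarrow D[f_\kappa(z_1,\dots,z_k)]$, which we instantiate by $\sigma$ on $\underline{x}$ and by $\interp{p_i}_{[\underline{y}/\underline{a}]}$ on each $z_i$. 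Applying the \emph{cut} rule $k$ times with this and the inductive premises builds the proof term $\lambda \underline{a}.\kappa\,p_1\cdots p_k = e$ at the required conclusion, since $[\sigma]D = B$ and $\interp{\kappa\,p_1\cdots p_k}_{[\underline{y}/\underline{a}]} = f_\kappa(\interp{p_1}_{[\underline{y}/\underline{a}]},\dots,\interp{p_k}_{[\underline{y}/\underline{a}]})$ by Definition \ref{fst:rep}. The optional outer quantifier block is reinstated by \emph{gen}.

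The main obstacle is aligning the extra realizer arguments introduced by $F$ with what the induction hypothesis delivers: the term $f_\kappa(\underline{z})$ appearing in $F(\kappa)$'s conclusion must be instantiated to precisely the realizers $\interp{p_i}_{[\underline{y}/\underline{a}]}$ computed by the sub-derivations, and the renamings $[\underline{y}/\underline{a}]$ must remain coherent across all $k$ cuts. Making this rigorous requires the compositionality lemma for $\interp{\cdot}$ mentioned above and some care about freshness of the $\underline{y}, \underline{z}$ variables introduced by $F$; once these are in place, the induction is routine.
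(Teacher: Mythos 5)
Your induction is set up on the wrong object, and this creates a genuine gap. The paper proves the theorem by induction on the \emph{typing derivation} of $e: [\forall \underline{x}].\underline{A}\Rightarrow B$ (cases: axiom, cut, inst, gen), using Lemma \ref{fst:lambda} only to analyse the shape of the normal forms of the two cut premises, and then closing the cut case with the substitution identity $\interp{[n_1/d]n_2}_{[\underline{y}/\underline{a},\underline{z}/\underline{b}]} = \interp{n_2}_{[\underline{z}/\underline{b}, \interp{n_1}_{[\underline{y}/\underline{a}]}/d]}$ (the compositionality fact you correctly identify). You instead do structural induction on the normal proof term itself, and in the case $n = \kappa\,p_1\cdots p_k$ you assume an inversion (generation) principle: that typability of $\lambda\underline{a}.\kappa\,p_1\cdots p_k$ at $\underline{A}\Rightarrow B$ yields, for some instantiation $\sigma$ of $\kappa$, typings $\lambda\underline{a}.p_i : \underline{A}\Rightarrow \sigma C_i$ in $\Phi$. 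Nothing in the paper's system gives you this for free: the calculus is Curry-style for \emph{gen}/\emph{inst}, the \emph{cut} rule produces proof terms of the form $\lambda\underline{a}.\lambda\underline{b}.(e_2\,\underline{b})(e_1\,\underline{a})$ which are \emph{not} normal, so the correspondence between applications in the normal form and cuts in the derivation holds only up to $\beta$-reduction; recovering typed sub-derivations for the $p_i$ from a typing of the whole normal term is exactly the nontrivial content that would have to be proved (a strengthened, type-annotated version of Lemma \ref{fst:lambda}). Worse, $\lambda\underline{a}.p_i$ may bind variables not occurring in $p_i$, and the cut rule never produces vacuous binders, so the judgment you feed to your induction hypothesis need not even be derivable as stated; you would have to reformulate the IH over a context assigning formulas to the free proof variables of $n$ rather than over $\lambda$-abstractions.

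Your case $n = a_i$ also signals the same problem: the system has no identity rule, so $\underline{A}\Rightarrow A_i$ with proof term $\lambda\underline{a}.a_i$ is not derivable from $\Phi$ by ``the same sequence of cuts''; the case is in fact vacuous, and the justification you give for it does not make sense in this calculus. The fix is to follow the paper's route: induct on the derivation of $e:[\forall\underline{x}].\underline{A}\Rightarrow B$, where the axiom case is immediate from Definition \ref{real}, the \emph{gen}/\emph{inst} cases are routine, and in the \emph{cut} case you apply the IH to the two premises, use \emph{gen}/\emph{inst} on the transformed second premise to instantiate its realizer variable with the realizer $\interp{n_1}_{[\underline{y}/\underline{a}]}$ coming from the first, and conclude with \emph{cut} plus $\beta$-reduction and the compositionality identity above. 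Your identification of that identity and of the freshness bookkeeping for $\underline{y},\underline{z}$ is the right ingredient; it is the inversion step, not the bookkeeping, that is the missing idea.
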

 
% \begin{lemma}
%     If $F(\Phi) \vdash \{A_1[y_1],..., A_n[y_n]\} \leadsto^*_{\gamma} \emptyset$, where $y_1,..., y_n$ are fresh, then there exists proofs $e_1 : \forall \underline{x} . \Rightarrow \gamma A_1[\gamma y_1],..., e_n : \forall \underline{x} . \Rightarrow \gamma A_n[\gamma y_n]$ with $\interp{e_i}_{\emptyset} = \gamma y_i $ given axioms $F(\Phi)$.  
%\end{lemma}
  The other direction for the theorem above is not true if we ignore the transformation $F$, namely, if $e : \forall \underline{x} . \Rightarrow A[t] $ for axioms $\Phi$, it may not be the case
that $e: \forall \underline{x} . \Rightarrow A$, since the axioms $\Phi$ may not be set up in a way
such that $t$ is a representation of proof $e$. The following theorem shows that the extra argument is used to record the term representation of the corresponding proof.
 
\begin{theorem}\label{th7}
\label{record}
 Suppose $F(\Phi) \vdash \{A[y]\} \leadsto^*_{\gamma} \emptyset$. We have $p : \forall \underline{x} . \Rightarrow \gamma A[\gamma y]$ for $F(\Phi)$, where $p$ is in normal form and $\interp{p}_{\emptyset} = \gamma y$. 
\end{theorem}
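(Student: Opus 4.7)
I would prove this by induction on the length of the LP-Unif reduction, exactly paralleling Lemma~\ref{sound} but strengthening the statement so that it also tracks the two extra invariants demanded by the theorem: that the witnessing proof is in $\beta$-normal form, and that its first-order representation $\interp{\cdot}_\emptyset$ coincides with the computed value of the distinguished last argument. Because LP-Unif reductions act on multisets, I first strengthen to a multi-query version: if $F(\Phi) \vdash \{A_1[y_1],\ldots,A_n[y_n]\} \leadsto^*_\gamma \emptyset$ with the $y_i$ fresh and pairwise distinct, then for each $i$ there is a normal $p_i$ with $p_i : \forall \underline{x}.\,\Rightarrow \gamma A_i[\gamma y_i]$ over $F(\Phi)$ and $\interp{p_i}_{\emptyset} = \gamma y_i$. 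The stated theorem is the case $n=1$.

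\textbf{Key steps.} The base case uses a transformed fact $\kappa : \forall\underline{x}.\,\Rightarrow B[f_\kappa()]$; unification with $A[y]$ forces $\gamma y = f_\kappa()$, and taking $p:=\kappa$ gives a normal proof with $\interp{\kappa}_\emptyset = f_\kappa() = \gamma y$. For the inductive step, suppose the first reduction selects some $A_i[y_i]$ and uses the clause $\kappa : \forall\underline{x}.\forall\underline{z}.\,B_1[z_1],\ldots,B_m[z_m] \Rightarrow C[f_\kappa(z_1,\ldots,z_m)]$ via unifier $\gamma_0$, so in particular $\gamma_0 y_i = f_\kappa(\gamma_0 z_1,\ldots,\gamma_0 z_m)$. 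The IH applied to the resolvent (with its remaining reduction accumulating to $\gamma$) produces normal proofs $e_j$ of $\gamma A_j[\gamma y_j]$ for $j\neq i$, and normal proofs $q_k$ of $\gamma B_k[\gamma z_k]$ with $\interp{q_k}_\emptyset = \gamma z_k$. Set $p_i := \kappa\, q_1 \cdots q_m$. Instantiating $\kappa$ with $\gamma$ and applying \textit{cut} $m$ times (as in Lemma~\ref{sound}) yields $p_i : \Rightarrow \gamma C[f_\kappa(\gamma z_1,\ldots,\gamma z_m)]$; since $\gamma$ is an idempotent extension of $\gamma_0$ this is exactly $\gamma A_i[\gamma y_i]$. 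The term $p_i$ is normal because $\kappa$ is a proof-term constant, so no $\beta$-redex is created; and by Definition~\ref{fst:rep},
\[
\interp{p_i}_\emptyset \;=\; f_\kappa(\interp{q_1}_\emptyset,\ldots,\interp{q_m}_\emptyset) \;=\; f_\kappa(\gamma z_1,\ldots,\gamma z_m) \;=\; \gamma y_i,
\]
using the unifier equation on $y_i$ and idempotency. The $e_j$ for $j\neq i$ are inherited from the IH, using $\gamma A_j \equiv \gamma' A_j$ just as at the end of the proof of Lemma~\ref{sound}.

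\textbf{Main obstacle.} The delicate point is the triple alignment between the unifier $\gamma_0$ produced by a reduction step, the constructor-shaped proof $\kappa\, q_1\cdots q_m$ obtained from the cut rule, and the term $\gamma y_i$ sitting in the distinguished last-argument slot. The realizability transformation $F$ is engineered precisely so that $\gamma_0$ is forced to send $y_i$ to $f_\kappa(\gamma_0 \underline{z})$, which is by Definition~\ref{fst:rep} exactly the first-order representation of the constructed proof; this is what makes the three quantities automatically coincide. Once this correspondence is established, the remaining substitution bookkeeping is identical to that of Lemma~\ref{sound}, and normality is immediate because no application of the cut rule introduces a $\lambda$-abstraction under the head constant.
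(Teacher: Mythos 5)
Your proposal matches the paper's own proof essentially step for step: the paper likewise strengthens to a multi-query lemma (its Lemma~\ref{realII}) asserting proofs $e_i$ with $\interp{e_i}_\emptyset = \gamma y_i$, proves it by induction on the reduction length with the same base case ($p=\kappa$, $\gamma y = f_\kappa$) and the same step case (build $\kappa\,p_1\cdots p_m$ via \textit{inst} and $m$ cuts, using $\gamma_1 y_i \equiv f_\kappa(\underline{z})$ and idempotency to get $\interp{\kappa\,p_1\cdots p_m}_\emptyset = \gamma\gamma_1 y_i$), and then derives the theorem as the $n=1$ instance. Your explicit remark on normality is a minor addition the paper leaves implicit, but the approach is the same.
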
 

Now we are able to show that realizability transformation will not change the unification reduction behaviour.
 
\begin{lemma}
\label{sc:unif}
  $\Phi \vdash \{A_1,..., A_n\} \leadsto^* \emptyset$ iff $F(\Phi) \vdash \{A_1[y_1],..., A_n[y_n]\} \leadsto^* \emptyset$. 
\end{lemma}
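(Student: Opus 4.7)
The plan is to prove both directions by induction on the length of the LP-Unif reduction, after isolating one key auxiliary fact about how unification interacts with the realizability-added argument.

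The auxiliary fact I would establish first is the following: whenever $y$ is a fresh variable, $\underline{y'}$ are fresh variables, $\mathrm{FV}(y) \cup \mathrm{FV}(\underline{y'})$ is disjoint from $\mathrm{FV}(A) \cup \mathrm{FV}(C)$, and $y$ does not occur in $A$, then $A \sim_{\gamma} C$ holds iff $A[y] \sim_{\gamma'} C[f_\kappa(\underline{y'})]$ holds with $\gamma' = \gamma \cup [y \mapsto f_\kappa(\underline{y'})]$. This follows directly from inspecting the unification rules in Definition for $\sim$: the extra argument is syntactically orthogonal to the body of $A$ and $C$, so the unification of the extra arguments succeeds by the occurs-check-free binding $y \mapsto f_\kappa(\underline{y'})$, and contributes no new constraints to the remaining arguments. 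This is the only technical step; the rest is bookkeeping.

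For the forward direction, assume $\Phi \vdash \{A_1,\ldots,A_n\} \leadsto^*_\gamma \emptyset$ in $k$ steps, and pick fresh $y_1,\ldots,y_n$. If the first step uses $\kappa : \forall \underline{x}.B_1,\ldots,B_m \Rightarrow C$ with $C \sim_{\gamma_0} A_i$, then the transformed rule $\kappa : \forall \underline{x}.\forall \underline{y'}.B_1[y'_1],\ldots,B_m[y'_m]\Rightarrow C[f_\kappa(y'_1,\ldots,y'_m)]$ from $F(\Phi)$, taken with fresh $\underline{y'}$, can be applied to $A_i[y_i]$: by the auxiliary fact, unification succeeds with $\gamma_0' = \gamma_0 \cup [y_i \mapsto f_\kappa(\underline{y'})]$. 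The resulting transformed multiset is $\{\gamma_0 A_1[y_1],\ldots,\gamma_0 B_1[y'_1],\ldots,\gamma_0 B_m[y'_m],\ldots,\gamma_0 A_n[y_n]\}$, whose last arguments are again fresh variables (since $\gamma_0$ does not touch $y_j$ or $y'_k$). Applying the induction hypothesis to the $(k-1)$-step reduction on $\{\gamma_0 A_1,\ldots,\gamma_0 B_1,\ldots,\gamma_0 B_m,\ldots,\gamma_0 A_n\}$ concludes.

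For the backward direction, assume $F(\Phi) \vdash \{A_1[y_1],\ldots,A_n[y_n]\} \leadsto^* \emptyset$. The first step must use a rule $\kappa$ from $F(\Phi)$, whose shape is fixed by the transformation, and it must unify $A_i[y_i]$ with $C[f_\kappa(\underline{y'})]$ via some $\gamma_0'$. By the auxiliary fact (used in reverse), $A_i \sim_{\gamma_0} C$ holds for the restriction $\gamma_0$ of $\gamma_0'$ to variables outside $\{y_i\}\cup\underline{y'}$, so the corresponding step is available in $\Phi$, yielding $\{\gamma_0 A_1,\ldots,\gamma_0 B_1,\ldots,\gamma_0 B_m,\ldots,\gamma_0 A_n\}$. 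The residual transformed multiset reduces to $\emptyset$ by hypothesis, and its last arguments are fresh, so the induction hypothesis on the remaining reduction supplies the rest of the original reduction.

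The main obstacle, and the reason the auxiliary fact needs to be stated cleanly up front, is the variable-hygiene accounting: one must verify that $y_i$ stays fresh at the moment it is unified, that $\gamma_0$ and the binding $y_i \mapsto f_\kappa(\underline{y'})$ truly commute without interference, and that after a reduction step the last arguments of all remaining atoms are still fresh variables (so the induction hypothesis applies in both directions). Once this invariant is maintained, the two inductions are a straightforward mirror of each other.
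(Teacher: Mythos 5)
Your proposal is correct and follows essentially the same route as the paper: both directions are proved by induction on the length of the $\leadsto$-reduction, with the key observation that the unifier for the transformed atoms is exactly the original unifier extended by the binding $y_i \mapsto f_\kappa(\underline{z})$ (and, conversely, that removing this binding recovers a unifier for the original atoms), while the freshness of the extra last-argument variables is maintained as an invariant so the induction hypothesis applies. The paper merely inlines your auxiliary unification fact as the substitutions $\gamma_1[f_\kappa(\underline{z})/y_i]$ and $\gamma_1 - [f_\kappa(\underline{z})/y_i]$ rather than stating it separately.
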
 
\begin{proof}
  For each direction, by induction on the length of the reduction. Each proof will be similar to the proof of Lemma \ref{sound}, see the extended version for the details. 
\end{proof} 
\begin{theorem}\label{th8}
\label{preservation}
  $\Phi \vdash \{A\} \leadsto^* \emptyset$ iff $F(\Phi) \vdash \{A[y]\} \leadsto^* \emptyset$. 
\end{theorem}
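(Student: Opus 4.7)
The plan is straightforward: Theorem~\ref{preservation} is the singleton instance of Lemma~\ref{sc:unif}. Setting $n = 1$, $A_1 \equiv A$, and $y_1 \equiv y$ in the statement of Lemma~\ref{sc:unif} yields exactly the biconditional we need. So the only work is to invoke the lemma in each direction.

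Concretely, for the forward direction I would assume $\Phi \vdash \{A\} \leadsto^* \emptyset$, view this as $\Phi \vdash \{A_1\} \leadsto^* \emptyset$ with $A_1 \equiv A$, apply Lemma~\ref{sc:unif} to conclude $F(\Phi) \vdash \{A_1[y_1]\} \leadsto^* \emptyset$, and rewrite as $F(\Phi) \vdash \{A[y]\} \leadsto^* \emptyset$. The backward direction is symmetric. Freshness of $y$ (required by the notation $A[y]$) is automatic since $y$ is introduced as a new variable disjoint from $\mathrm{FV}(A)$, matching the freshness conditions in the lemma.

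Since the theorem is a pure corollary, there is essentially no obstacle of its own. If we were instead proving Lemma~\ref{sc:unif} directly, the main obstacle would be bookkeeping of the fresh witness variables: when a rule $\kappa : \forall \underline{x}. B_1,\ldots,B_m \Rightarrow C$ becomes $\kappa : \forall \underline{x}.\forall \underline{z}. B_1[z_1],\ldots,B_m[z_m] \Rightarrow C[f_\kappa(z_1,\ldots,z_m)]$ in $F(\Phi)$, one must verify that unifying $C[f_\kappa(\underline{z})]$ against $A_i[y_i]$ produces the same substitution on the original variables as unifying $C$ against $A_i$, plus only the fresh binding $[f_\kappa(\underline{z})/y_i]$, so that the inductive hypothesis applies to a correctly annotated residual multiset. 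This is precisely what the paper defers to the extended version, and no further issue arises when specializing to $n=1$.
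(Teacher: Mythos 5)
Your proposal matches the paper exactly: Theorem~\ref{preservation} is stated immediately after Lemma~\ref{sc:unif} with no separate argument, precisely because it is the $n=1$ instance of that lemma, and the real work (the bookkeeping of fresh witness variables under the transformed rules) is carried out in the proof of Lemma~\ref{sc:unif} in the appendix, just as you describe. No gaps; this is the intended derivation.
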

%% \begin{proof}
%%   By Lemma \ref{sc:unif}.
%% \end{proof}

\begin{example}
 \label{ex:conn:real}
Consider the logic program after realizability transformation in Example \ref{ex:conn:real0}. Realizability transformation does not change the behaviour of LP-Unif, we still have the 
  following successful unification reduction path for query $\mathrm{Connect}(x, y, u)$:
   
  \begin{center}
{\footnotesize
$F(\Phi) \vdash \{\mathrm{Connect}(x, y, u)\}\leadsto_{\kappa_1, [x/x_1, y/z_1, f_{\kappa_1}(u_3, u_4)/u]} \{\mathrm{Connect}(x, y_1, u_3), \mathrm{Connect}(y_1, y, u_4)\}$

$\leadsto_{\kappa_2, [c_{\kappa_2}/u_3,\mathrm{node_1}/x, \mathrm{node_2}/y_1, \mathrm{node_1}/x_1, b/z_1, f_{\kappa_1}(c_{\kappa_2}, u_4)/u]} $

$\{\mathrm{Connect}(\mathrm{node_2}, y, u_4)\}$

$\leadsto_{\kappa_3, [c_{\kappa_3}/u_4, c_{\kappa_2}/u_3, \mathrm{node_3}/y, \mathrm{node_1}/x, \mathrm{node_2}/y_1,\mathrm{node_1}/x_1, \mathrm{node_3}/z_1, f_{\kappa_1}(c_{\kappa_2}, c_{\kappa_3})/u]} \emptyset $
}
% {\footnotesize
% $F(\Phi) \vdash \{\mathrm{Connect}(x, y, u)\}\leadsto_{\kappa_1, [x/x_1, y/z_1, f_{\kappa_1}(u_3, u_4)/u]} \{\mathrm{Connect}(x, y_1, u_3), \mathrm{Connect}(y_1, y, u_4)\}\leadsto_{\kappa_2, [c_{\kappa_2}/u_3,\mathrm{node_1}/x, \mathrm{node_2}/y_1, \mathrm{node_1}/x_1, b/z_1, f_{\kappa_1}(c_{\kappa_2}, u_4)/u]} \{\mathrm{Connect}(\mathrm{node_2}, y, u_4)\} \leadsto_{\kappa_3, [c_{\kappa_3}/u_4, c_{\kappa_2}/u_3, \mathrm{node_3}/y, \mathrm{node_1}/x, \mathrm{node_2}/y_1,\mathrm{node_1}/x_1, \mathrm{node_3}/z_1, f_{\kappa_1}(c_{\kappa_2}, c_{\kappa_3})/u]} \emptyset $
% }
\end{center}
 
\end{example}
%\subsection{About Terminating Measurement}

The realizability transformation uses the extra argument as decreasing measurement
to the program to achieve the termination of $\to$-reduction. We want to point out that realizability 
transformation does not modify the proof-theoretic meaning and the execution behaviour. 
The next example shows that not every transformation  technique for obtaining productive programs have such properties:

\begin{example} Consider the following program:

{\footnotesize
\begin{center}
\noindent $\kappa_1 : \ \Rightarrow P(\mathrm{int})$  

\noindent $\kappa_2 : \forall x . P(x), P(\mathrm{list}(x)) \Rightarrow P(\mathrm{list}(x))$
\end{center} 
}
\noindent It is a folklore method to add a structurally decreasing argument as a  measurement to ensure finiteness of $\to^\mu$.

{\footnotesize
\begin{center}

\noindent $\kappa_1 : \ \Rightarrow P(\mathrm{int}, 0)$  

\noindent $\kappa_2 : \forall x . \forall y . P(x, y), P(\mathrm{list}(x), y) \Rightarrow P(\mathrm{list}(x), \mathrm{s}(y))$
\end{center} 
} 
\noindent We denote the above program as $\Phi'$. Indeed with the measurement we add, the term-matching reduction in $\Phi'$ will be finite. But the reduction for query $P(\mathrm{list}(\mathrm{int}), z)$ using unification will fail: 

{\footnotesize
\begin{center}
$\Phi' \vdash \{P(\mathrm{list}(\mathrm{int}), z) \}\leadsto_{\kappa_2, [ \mathrm{int}/x,\mathrm{s}(y_1)/z]} \{P(\mathrm{int}, y_1), P(\mathrm{list}(\mathrm{int}), y_1)\}\leadsto_{\kappa_2, [ 0/y_1, \mathrm{int}/x,\mathrm{s}(0)/z]} \{P(\mathrm{list}(\mathrm{int}), 0)\} \not \leadsto$ 
\end{center}
}
\noindent However, the query $P(\mathrm{list}(\mathrm{int}))$ on the original program using unification reduction will diverge. Divergence and failure are operationally different. Thus
adding arbitrary measurement may modify the execution behaviour of a program (and hence the meaning of the program), but
by Theorems~\ref{th6}-\ref{th8},
realizability transformation does not modify the execution behaviour of unification reduction.
\end{example}

\section{Operational Equivalence of LP-Struct and LP-Unif}
\label{equiv:unif}

%By Theorem \ref{realI} and \ref{preservation}, we know that
Since realizability transformation does not change the proof theoretic meaning of the program or modify the behaviour of unification reduction,  we will work directly on $F(\Phi)$ in this section.
We will show that LP-Struct and LP-Unif 
are equivalent after the realizability transformation. By Theorem \ref{terminate}, it suffices to consider $(F(\Phi), \to^{\nu} \cdot \hookrightarrow^1)$ for LP-Struct. 

The following lemma shows that each LP-Unif reduction can be emulated by one step of substitutional reduction followed by one step of term-matching reduction. 
 
\begin{lemma}
\label{unif-coalp}
  If $F(\Phi) \vdash \{A_1,..., A_i , ..., A_n\} \leadsto_\gamma \{\gamma A_1,..., \gamma \underline{B},..., \gamma A_n\}$ for $\kappa : \forall \underline{x} . \underline{B} \Rightarrow C \in F(\Phi)$ such that $C \sim_{\gamma} A_i$, then $F(\Phi) \vdash \{A_1,..., A_i , ..., A_n\}$ $ \hookrightarrow_{\kappa, \gamma} \{\gamma A_1,..., \gamma A_i , ..., \gamma A_n\} \to_{\kappa}  \{\gamma A_1,..., \gamma \underline{B} ,..., \gamma A_n\}$. 
\end{lemma}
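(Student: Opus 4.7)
The plan is to decompose the unification step into a substitutional step (which merely propagates $\gamma$ to every atom without expanding the body) followed by a term-matching step (which expands $\gamma A_i$ into $\gamma \underline{B}$). The only real content is checking that once we have already applied $\gamma$ globally, the remaining job can be carried out by a plain matcher, not a full unifier.

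First I would invoke the hypothesis $C \sim_\gamma A_i$ to obtain the key identity $\gamma C \equiv \gamma A_i$ (this is the standard idempotence/soundness of the most general unifier, implicit in the inference rules for $\sim$ in Definition of unification). From this identity I would derive the term-matching fact $C \mapsto_{\sigma} \gamma A_i$, where $\sigma := \gamma|_{\underline{x}}$ is the restriction of $\gamma$ to the quantified variables of the clause $\kappa : \forall \underline{x} . \underline{B} \Rightarrow C$. This restriction suffices because, by the standing convention of implicit renaming in Definition of Reductions, the variables $\underline{x}$ are fresh and $\mathrm{FV}(C) \subseteq \underline{x}$; consequently $\sigma C = \gamma C = \gamma A_i$, and a straightforward induction on the structure of $C$ using the rules of Definition of Term Matching promotes this equality into a matching derivation $C \mapsto_\sigma \gamma A_i$.

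Next, I would assemble the two promised reduction steps. By the substitutional-reduction rule applied with the very same unifier $\gamma$ witnessing $C \sim_\gamma A_i$ and the same clause $\kappa$, we obtain
\[
F(\Phi) \vdash \{A_1,\dots,A_i,\dots,A_n\} \hookrightarrow_{\kappa,\gamma} \{\gamma A_1,\dots,\gamma A_i,\dots,\gamma A_n\}.
\]
Then, using the match $C \mapsto_\sigma \gamma A_i$ established above, the term-matching rule with the same clause $\kappa$ yields
\[
F(\Phi) \vdash \{\gamma A_1,\dots,\gamma A_i,\dots,\gamma A_n\} \to_\kappa \{\gamma A_1,\dots,\sigma \underline{B},\dots,\gamma A_n\}.
\]
Finally, I would observe that $\sigma \underline{B} = \gamma \underline{B}$, because $\mathrm{FV}(\underline{B}) \subseteq \underline{x}$ and $\sigma$ agrees with $\gamma$ on $\underline{x}$. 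Concatenating the two steps gives exactly the reduction sequence claimed in the lemma.

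The only delicate point is the freshness convention for the universally quantified variables $\underline{x}$: without it, $\gamma$ might act nontrivially on variables occurring in the other $A_j$ and the identification $\sigma C = \gamma A_i$ could break. This is handled by the implicit renaming clause in the definition of the reduction rules, so the main obstacle reduces to a careful bookkeeping check rather than any substantive technical difficulty.
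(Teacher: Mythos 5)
Your proof is correct and follows essentially the argument the paper intends (the paper states this lemma without an explicit proof, treating the decomposition as immediate): since $\gamma C \equiv \gamma A_i$, the restriction of $\gamma$ to the freshly renamed clause variables acts as a term-matcher of $C$ against $\gamma A_i$, and the freshness convention ensures the matcher and $\gamma$ agree on $\mathrm{FV}(\underline{B})$, so the term-matching step produces exactly $\gamma\underline{B}$ while leaving the other $\gamma A_j$ untouched. No gap remains beyond the bookkeeping you already flag.
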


The following lemma shows that for $\to$-normal form, each $\hookrightarrow \cdot \to$ step  is equivalent to a step of $\leadsto$ reduction.
\begin{lemma}
\label{coalp-unif}
 Let $\{A_1[x_1], ..., A_n[x_n]\}$ be a multiset of atomic formulas in $\to$-normal form, and suppose there exists 

\noindent $\kappa : \forall \underline{x}. \underline{y} . B_1[y_1], ..., B_m[y_m] \Rightarrow C[f_\kappa(y_1,..., y_m)] \in F(\Phi)$ such that 

\noindent $C[f_\kappa(y_1,..., y_m)] \sim_{\gamma} A_i[x_i]$. Then we have the following: 
  
\begin{enumerate}
\item $F(\Phi) \vdash \{A_1[x_1],..., A_i[x_i],..., A_n[x_n]\} \hookrightarrow_{\kappa, \gamma}  \{\gamma A_1[x_1],..., \gamma A_i[\gamma x_i], ...,\gamma A_n[x_n]\}$

\noindent $\to_{\kappa}\{\gamma A_1,..., \gamma B_1[y_1],..., \gamma B_m[y_m],...,\gamma A_n[y_n] \}$, 

\noindent with $\{\gamma A_1[x_1],..., \gamma B_1[y_1],..., \gamma B_m[y_m],...,\gamma A_n[x_n] \}$ in $\to$-normal form.

\item $F(\Phi) \vdash \{A_1[x_1],..., A_i[x_i],..., A_n[x_n]\} \leadsto_{\kappa, \gamma}$

\noindent $\{\gamma A_1[x_1],..., \gamma B_1[y_1],..., \gamma B_m[y_m],...,\gamma A_n[x_n] \}$.
\end{enumerate}
 
\end{lemma}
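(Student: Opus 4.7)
The plan is to prove both parts by unfolding the reduction definitions and then leveraging Lemma~\ref{unif-coalp} together with the structural form that the realizability transformation (Definition~\ref{real}) imposes on every rule head. The key observation driving the argument is that, in $F(\Phi)$, every rule head has the shape $C'[f_{\kappa'}(\ldots)]$, i.e., its final ``proof'' argument is always a compound term headed by some $f_{\kappa'}$. Consequently, term matching as given in Definition~\ref{red} can never equate such a compound term with a mere variable, since the matching rules only decompose equal function symbols.

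For part~(2), I would simply fire one unification-reduction step: the hypothesis $C[f_\kappa(y_1,\ldots,y_m)] \sim_\gamma A_i[x_i]$ is exactly what the rule for $\leadsto_{\kappa,\gamma}$ requires, and the displayed multiset is its direct output. A small but crucial check is that $\gamma$ leaves the fresh proof variables $x_j$ (for $j \neq i$) and $y_k$ untouched, so that the notation $\gamma A_j[x_j]$ and $\gamma B_k[y_k]$ in the conclusion is unambiguous. For part~(1), the $\hookrightarrow_{\kappa,\gamma}$-then-$\to_\kappa$ sequence then follows immediately from Lemma~\ref{unif-coalp} applied to the unification step established in part~(2); what remains is to verify that the final multiset lies in $\to$-normal form. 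By the observation above, each surviving $\gamma A_j[x_j]$ (for $j \neq i$) still carries the fresh variable $x_j$ in its last slot, and each new $\gamma B_k[y_k]$ still carries the fresh variable $y_k$ there, so no rule head in $F(\Phi)$ can match any of these formulas; hence the multiset is in $\to$-normal form.

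The main obstacle I foresee is justifying cleanly that $\gamma$ is the identity on all the $x_j$ (for $j \neq i$) and all the $y_k$. This requires combining the renaming-of-quantifier convention noted after Definition~\ref{red} (which makes the $y_k$ fresh with respect to the rest of the multiset) with the structural fact that the realizability transformation assigns each atomic formula its own fresh proof variable, so the $x_j$ do not occur in $A_i$ or in the chosen rule. Once that disjointness is spelled out, the most-general unifier $\gamma$ of $C[f_\kappa(y_1,\ldots,y_m)]$ with $A_i[x_i]$ binds only variables occurring in these two expressions, and in particular decomposes into a unifier of $C$ with $A_i$ extended by the assignment $x_i \mapsto f_\kappa(y_1,\ldots,y_m)$, leaving every $x_j$ (for $j \neq i$) and every $y_k$ free, as required.
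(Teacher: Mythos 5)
Your proposal is correct and follows essentially the same route as the paper's own proof: the paper likewise establishes part~(1) by noting that $\{y_1,\ldots,y_m,x_1,\ldots,x_{i-1},x_{i+1},\ldots,x_n\}\cap\mathrm{dom}(\gamma)=\emptyset$ while $x_i\in\mathrm{dom}(\gamma)$, and that every head in $F(\Phi)$ has the form $D[f(\underline{z})]$, so no head can term-match an atom whose last argument is still a variable, giving the $\to$-normal form. Your explicit justification of the disjointness of $\mathrm{dom}(\gamma)$ from the fresh proof variables simply spells out what the paper takes as a known invariant of the transformed programs, so there is no substantive difference.
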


\begin{proof} We only prove \textit{1.} here. 
  We know $\{y_1,..., y_m, x_1, ..., x_{i-1}, x_{i+1}, .., x_n\}\cap
    \mathrm{dom}(\gamma) = \emptyset$, $x_i \in \mathrm{dom}(\gamma)$, and every head in
    $F(\Phi)$ is of the form $D[f(z)]$, so $\{\gamma A_1[x_1],..., \gamma B_1[y_1],..., \gamma
    B_m[y_m],...,\gamma A_n[x_n] \}$ is in $\to$-normal form.
  
\end{proof} 
\begin{theorem}[Equivalence of LP-Struct and LP-Unif]
\label{equiv}
  $F(\Phi) \vdash \{A[y]\} \leadsto^* \emptyset$ iff $F(\Phi) \vdash \{A[y]\} (\to^\nu \cdot \hookrightarrow^1)^* \emptyset$.
\end{theorem}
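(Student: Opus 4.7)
The plan is to close the loop between Lemmas~\ref{unif-coalp} and~\ref{coalp-unif}, exploiting the structural features of $F(\Phi)$ established by the realizability transformation. The crucial invariant I would maintain is that every multiset of atoms reached during either reduction starting from $\{A[y]\}$ has the form $\{D_1[z_1],\ldots,D_n[z_n]\}$ with each $z_j$ a fresh distinct variable. Since every head in $F(\Phi)$ has the form $C[f_\kappa(y_1,\ldots,y_m)]$ with a function symbol in the extra-argument slot, a multiset of this shape is automatically in $\to$-normal form; and after any $\hookrightarrow_{\kappa,\gamma}$ step the only atom that becomes $\to$-reducible is the selected one, and only by the very rule $\kappa$ whose function symbol now appears in its extra argument.

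I would first verify this invariant by induction on the length of either a $\leadsto^*$ or a $(\to^\nu \cdot \hookrightarrow^1)^*$ reduction. At $\{A[y]\}$ the invariant is immediate. Inductively, unifying $C[f_\kappa(y_1,\ldots,y_m)]$ against the selected atom $A_i[x_i]$ forces $\gamma$ to bind $x_i$ to $f_\kappa(y_1,\ldots,y_m)$ with the $y_k$ fresh, so fresh variables still occupy the extra-argument position of every new atom $B_k[y_k]$, while $\gamma$ leaves the $z_j$ ($j\neq i$) untouched. A consequence I would record separately is that, under the invariant, $\to^\nu$ applied immediately after $\hookrightarrow_{\kappa,\gamma}$ consists of exactly one $\to_\kappa$ step.

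For the forward direction, I would induct on the length of the LP-Unif reduction, using Lemma~\ref{coalp-unif}(1) to replace each $\leadsto_{\kappa,\gamma}$ step from an invariant state by a $\hookrightarrow_{\kappa,\gamma}\cdot\to_\kappa$ pair that lands in another invariant state. Since $\{A[y]\}$ is already in $\to$-normal form, the very first $\to^\nu$ of the LP-Struct derivation is empty, while each subsequent $\to^\nu$ is the single $\to_\kappa$ step contributed by the lemma. Splicing produces $\{A[y]\}(\to^\nu\cdot\hookrightarrow^1)^*\emptyset$. For the converse direction, I would induct on the number of $\hookrightarrow$ steps: the leading $\to^\nu$ is empty; each following block $\hookrightarrow_{\kappa,\gamma}\cdot\to^\nu$ collapses (by the one-step consequence above) to $\hookrightarrow_{\kappa,\gamma}\cdot\to_\kappa$, and Lemma~\ref{unif-coalp} folds this pair into a single $\leadsto_{\kappa,\gamma}$ step.

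The main obstacle I anticipate is pinning down the invariant cleanly enough that (a) $\to^\nu$ after $\hookrightarrow$ is provably exactly one step, ruling out both additional steps and interference from a different rule, and (b) the freshness of the $y_k$ produced by each rule is carefully propagated through the accumulating unifier so that no subsequent substitution can clobber an extra-argument variable. Both points rest on the fact that distinct rules in $F(\Phi)$ use distinct top-level function symbols $f_\kappa$ in the extra argument of their heads, together with the standard variable convention. Once the invariant is established, the two lemmas do all the real work and the equivalence follows by straightforward induction on reduction length.
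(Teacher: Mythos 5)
Your proposal is correct and takes essentially the same route as the paper: simulate each $\leadsto$ step by a $\hookrightarrow\cdot\to_\kappa$ pair, and use the freshness/normal-form invariant (which is exactly the content and hypothesis of Lemma~\ref{coalp-unif}, whose part (1) guarantees preservation of $\to$-normal forms, so that each $\to^\nu\cdot\hookrightarrow^1$ block is forced to be a single $\hookrightarrow\cdot\to_\kappa$ pair) to convert back. The only slip is a citation: folding the pair $\hookrightarrow_{\kappa,\gamma}\cdot\to_\kappa$ back into a single $\leadsto_{\kappa,\gamma}$ step is Lemma~\ref{coalp-unif}(2), not Lemma~\ref{unif-coalp}, which goes in the opposite direction.
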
 
\begin{proof}
  From left to right, by Lemma \ref{unif-coalp} and Lemma \ref{coalp-unif}(1), we know that each $\leadsto$
step can be simulated by $\hookrightarrow \cdot \to$. From right to left, by Lemma \ref{coalp-unif}(1), 
we know that the concrete shape of $F(\Phi) \vdash \{A[y]\} (\to^\nu \cdot \hookrightarrow^1)^* \emptyset$ must be of the form $F(\Phi) \vdash \{A[y]\} (\hookrightarrow \cdot \to)^* \emptyset$, then by Lemma \ref{coalp-unif}(2), we have $F(\Phi) \vdash \{A[y]\} \leadsto^* \emptyset$.
\end{proof}

\begin{example}
\label{ex:conn:real1}
For the program in Example \ref{ex:conn:real0}, the query $\mathrm{Connect}(x, y, u)$ can be reduced by LP-Struct successfully:   
 
  \begin{center}
{\footnotesize  
$F(\Phi) \vdash \{\mathrm{Connect}(x, y, u)\} \hookrightarrow_{\kappa_1, [x/x_1, y/z_1, f_{\kappa_1}(u_3, u_4)/u]} \{\mathrm{Connect}(x, y,f_{\kappa_1}(u_3, u_4))\} \to_{\kappa_1} \{\mathrm{Connect}(x, y_1, u_3), \mathrm{Connect}(y_1, y, u_4)\}$

$\hookrightarrow_{\kappa_2, [c_{\kappa_2}/u_3,\mathrm{node_1}/x, \mathrm{node_2}/y_1, \mathrm{node_1}/x_1, b/z_1, f_{\kappa_1}(c_{\kappa_2}, u_4)/u]} \{\mathrm{Connect}(\mathrm{node_1},\mathrm{node_2}, c_{\kappa_2}), \mathrm{Connect}(\mathrm{node_2}, y, u_4)\} \to_{\kappa_2} \{\mathrm{Connect}(\mathrm{node_2}, y, u_4)\}$

$\hookrightarrow_{\kappa_3, [c_{\kappa_3}/u_4, c_{\kappa_2}/u_3, \mathrm{node_3}/y, \mathrm{node_1}/x, \mathrm{node_2}/y_1,\mathrm{node_1}/x_1, \mathrm{node_3}/z_1, f_{\kappa_1}(c_{\kappa_2},c_{\kappa_3})/u]}  \{\mathrm{Connect}(\mathrm{node_2}, \mathrm{node_3}, c_{\kappa_3})\}  \to_{\kappa_3} \emptyset $}
\end{center}
   
\noindent Note that the answer for $u$ is $f_{\kappa_1}(c_{\kappa_2},c_{\kappa_3})$, which is the first order term representation of the proof of $ \ \Rightarrow \mathrm{Connect}(\mathrm{node}_1, \mathrm{node}_3)$. 
\end{example}

After the realizability transformation, LP-Struct is equivalent to LP-Unif in the sense of Theorem \ref{equiv}. As a consequence, we have the soundness theorem for LP-Struct w.r.t. the type system in Definition \ref{proofsystem}. 
\begin{corollary}[Soundness of LP-Struct]
  If $F(\Phi) \vdash \{A[y]\} (\to^\nu \cdot \hookrightarrow^1)_\gamma^* \emptyset$, then
  there exist $e : \forall \underline{x}. \ \Rightarrow \gamma (A[y])$ for $F(\Phi)$.
\end{corollary}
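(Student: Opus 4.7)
The plan is to chain together the two central results already established in Sections \ref{forms} and \ref{equiv:unif}: the operational equivalence of LP-Struct and LP-Unif (Theorem \ref{equiv}) and the soundness of LP-Unif (Theorem \ref{sound:unif}). Given the hypothesis $F(\Phi) \vdash \{A[y]\} (\to^\nu \cdot \hookrightarrow^1)_\gamma^* \emptyset$, I would first produce a corresponding successful LP-Unif reduction $F(\Phi) \vdash \{A[y]\} \leadsto^*_{\gamma} \emptyset$ with the \emph{same} accumulated substitution $\gamma$, and then apply Theorem \ref{sound:unif} to extract a proof $e : \forall \underline{x}. \Rightarrow \gamma(A[y])$ for $F(\Phi)$, which is exactly the conclusion demanded by the corollary.

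To build the LP-Unif reduction with the same $\gamma$, I would walk through the LP-Struct reduction one block at a time. By Theorem \ref{terminate}, $\to$-reduction on $F(\Phi)$ is strongly normalizing, so each $\to^\nu$ block is a finite reduction to $\to$-normal form. Applying Lemma \ref{coalp-unif}(2) to each block of shape $\hookrightarrow_{\kappa,\gamma_i}\!\cdot\to_{\kappa}$ (operating on a $\to$-normal multiset) yields a single LP-Unif step $\leadsto_{\kappa,\gamma_i}$ producing the same resulting multiset and contributing exactly the same unifier $\gamma_i$ to the state. Composing these simulations along the length of the LP-Struct reduction gives an LP-Unif reduction whose accumulated substitution is precisely $\gamma$. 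Theorem \ref{sound:unif} then finishes the argument.

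The main obstacle will be the bookkeeping that matches the LP-Struct state with the LP-Unif state, since the LP-Struct reduction formally interleaves $\hookrightarrow$ and possibly several $\to$ steps, while LP-Unif collapses each round into a single $\leadsto$. The key observation making this clean is that after realizability transformation every clause head has the form $D[f_\kappa(\underline{z})]$, so following a $\hookrightarrow$ step the multiset is already in $\to$-normal form after a single $\to$ step; hence $\to^\nu$ is exactly one $\to$ step in each block, and the one-to-one correspondence with $\leadsto$ steps is immediate. No spurious unifiers are introduced, and the composition of the $\gamma_i$ along the LP-Struct path agrees with the composition along the simulated LP-Unif path, so the final substitution in Theorem \ref{sound:unif} coincides with the $\gamma$ in the hypothesis of the corollary.
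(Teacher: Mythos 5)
Your proposal is correct and follows essentially the same route as the paper, which obtains the corollary directly from the equivalence of LP-Struct and LP-Unif on $F(\Phi)$ (Theorem \ref{equiv}, proved via Lemma \ref{coalp-unif}) combined with soundness of LP-Unif (Theorem \ref{sound:unif}). Your explicit tracking of the accumulated substitution $\gamma$ through the block-by-block simulation is a welcome refinement, since Theorem \ref{equiv} as stated suppresses the substitution, but it does not change the underlying argument.
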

% \begin{proof}
%   We show that LP-Unif is sound, and by Theorem \ref{equiv}, we know LP-Struct is equivalent to LP-Unif, thus
% we can conclude that LP-Struct is sound.
% \end{proof}
%\section{Non-overlapping Condition for LP-Struct and LP-Unif}

We have seen that without realizability transformation, LP-Struct is not operationally equivalent to LP-Unif by Example \ref{ex:conn}.
Example \ref{ex:conn:real1} shows that after realizability transformation, we do get operational equivalence of LP-Struct and LP-Unif. The
mismatch of LP-Unif and LP-Struct seems to be due to the infinity of the $\to$-reduction. One
may wonder whether it is the case that for any productive program,
%(i.e. contains only finite $\to$-reduction path),
%if it is the case that
LP-Struct and LP-Unif are operationally equivalent. The following example shows that 
it is not the case in general.

\begin{example}
\label{overlap}
\
{\footnotesize
  \begin{center}

    \noindent $\kappa_1 :\ \Rightarrow P(\mathrm{c})$ 

   \noindent  $\kappa_2 :  \forall x . Q(x) \Rightarrow P(x)$
  \end{center}
}
\noindent Here $\mathrm{c}$ is a constant. The program is $\to$-terminating. However, for query $P(x)$, we have $\Phi \vdash \{P(x)\} \leadsto_{\kappa_1, [\mathrm{c}/x]} \emptyset$ with LP-Unif, but $\Phi \vdash \{P(x)\} \to_{\kappa_2} \{Q(x)\} \not \hookrightarrow$ for LP-Struct.
\end{example}

So termination of $\to$-reduction is insufficient for establishing the relation between LP-Struct and LP-Unif. In Example \ref{overlap}, the problem is caused by
the overlapping heads $P(\mathrm{c})$ and $P(x)$. Motivated by the notion of non-overlapping in term rewrite system (\cite{bezem2003term}, \cite{Baader:1998}), we have the following definition.

\begin{definition}[Non-overlapping Condition]
  Axioms $\Phi$ are non-overlapping if for any two formulas $\forall \underline{x}. \underline{B} \Rightarrow C, \forall \underline{x}. \underline{D} \Rightarrow E \in \Phi$, there are no substitution $\sigma, \delta$ such that $\sigma C \equiv \delta C'$.
\end{definition}

The theorem below shows that for any non-overlapping program, terminating reductions in LP-Struct are operationally equivalent to terminating reductions in LP-Unif.
However, without productivity, LP-Struct and LP-Unif are no longer operationally equivalent for the diverging program. 

% \noindent As a consequence of non-overlapping condition, we have the following
% theorem. 
\begin{theorem}
  \label{ortho:equiv}
  Suppose $\Phi$ is non-overlapping. $\Phi \vdash \{A_1, ..., A_n\} \leadsto^*_\gamma \{C_1, ..., C_m\}$ with $\{C_1, ..., C_m\}$ in $\leadsto$-normal form iff $\Phi \vdash \{A_1, ..., A_n\} (\to^\mu \cdot \hookrightarrow^1)^*_\gamma \{C_1, ..., C_m\}$ with $\{C_1, ..., C_m\}$ in $\to^\mu \cdot \hookrightarrow^1$-normal form.   
  %% \item  $\Phi \vdash \{A\}  (\to^\mu \cdot \hookrightarrow^1)^*_\gamma  \emptyset$ iff $\Phi \vdash \{A\} \leadsto^*_\gamma \emptyset $. 
  %% \item $\{A\}$ is not $\leadsto$-terminating iff $\{A\}$ is not $\to^\mu\cdot \hookrightarrow$-terminating. 
%  \end{itemize}
\end{theorem}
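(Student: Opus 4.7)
The plan is to prove both directions by induction on the lengths of the respective reduction sequences, using Lemma \ref{unif-coalp} for the forward direction and an adaptation of Lemma \ref{coalp-unif}(2) for the backward direction. Throughout, the non-overlapping hypothesis plays two essential roles: it makes the rule applicable at any given atom unique (so LP-TM reductions at distinct atoms commute, and in fact $\to$ is confluent), and it equates a $\leadsto$-normal form with a $(\to^\mu \cdot \hookrightarrow^1)$-normal form, since any $\to$- or $\hookrightarrow$-reducible atom also admits a $\leadsto$-step, and conversely.

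For the forward direction I would induct on the length of the $\leadsto^*$-path. The base case is immediate: a $\leadsto$-normal form admits neither $\to$-steps nor $\hookrightarrow$-steps, so it is also a $(\to^\mu \cdot \hookrightarrow^1)$-normal form with trivial accumulated substitution. For the inductive step I would split the sequence as $\leadsto_{\kappa, \gamma} \cdot \leadsto^*$, use Lemma \ref{unif-coalp} to rewrite the first step as $\hookrightarrow_{\kappa, \gamma} \cdot \to_\kappa$, and then prepend the available $\to$-steps at \emph{other} atoms to form an initial $\to^\mu$ segment. Non-overlapping ensures these prepended $\to$-steps use uniquely determined rules and commute safely with the later $\hookrightarrow_{\kappa, \gamma}$, because $\gamma$ only touches variables of the freshly renamed head and of $A_i$, leaving variables introduced by independent $\to$-reductions untouched. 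An application of the induction hypothesis then assembles the remainder of the LP-Struct path with the same accumulated substitution.

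For the backward direction I would induct on the number of $\hookrightarrow^1$-events in the $(\to^\mu \cdot \hookrightarrow^1)^*$-sequence. Within each $\to^\mu$ segment, every $\to_\kappa$-step at some $A_i$ coincides with a $\leadsto_{\kappa, \sigma}$-step whose unifier is a matcher (non-overlapping forces the rule $\kappa$ to be unique, and freshness of the renamed head ensures $\sigma$ leaves the other atoms unchanged), so the segment translates directly into a prefix of a $\leadsto^*$-path. Each subsequent $\hookrightarrow_{\kappa, \gamma} \cdot \to_\kappa$ pattern, starting from a $\to$-normal configuration, collapses into one $\leadsto_{\kappa, \gamma}$ step by the argument of Lemma \ref{coalp-unif}(2); crucially that argument only used uniqueness of the reducing rule, which here follows from non-overlapping rather than from realizability.

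The main obstacle is the termination transfer: for the forward direction, we must rule out that some intermediate LP-Struct configuration admits an infinite $\to^\mu$-reduction even though the chosen LP-Unif path terminates. The resolution is to combine confluence of $\to$ for non-overlapping programs with the inclusion $\to\,\subseteq\,\leadsto$ (any matching step is a unification step with trivial unifier): an infinite $\to$-reduction from an intermediate multiset lifts directly to an infinite $\leadsto$-reduction, and confluence of $\to$ then precludes reaching the $\to$-normal form that the corresponding stage of the LP-Unif path witnesses, contradicting the hypothesis. The dual observation handles the backward direction, completing the equivalence.
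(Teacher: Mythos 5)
Your right-to-left direction (LP-Struct to LP-Unif) is essentially the paper's own argument: split the path into an initial $\to$-segment and blocks $\hookrightarrow\cdot\to^\nu$, turn each term-matching step into a $\leadsto$-step whose unifier is the matcher and hence leaves the other atoms untouched (the paper's Lemma \ref{tm-to-unif}, which indeed needs no realizability), and collapse each $\hookrightarrow_{\kappa,\gamma}\cdot\to_\kappa$ block into a single $\leadsto_{\kappa,\gamma}$ step, with non-overlapping guaranteeing that the atom instantiated by the $\hookrightarrow$ can only be rewritten by $\kappa$ (the paper's Lemma \ref{struct-to-unif}). Your remark that only rule-uniqueness, not the transformed shape of $F(\Phi)$, is what Lemma \ref{coalp-unif}(2) really uses is correct.

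The left-to-right direction is where your plan has genuine gaps, at exactly the two points where you go beyond emulating $\leadsto_{\kappa,\gamma}$ by $\hookrightarrow_{\kappa,\gamma}\cdot\to_\kappa$. First, ``prepend the available $\to$-steps at other atoms'' changes the multiset: those atoms are replaced by instantiated clause bodies, so the LP-Struct state no longer coincides with the LP-Unif resolvent $\{\gamma A_1,\ldots,\gamma B_1,\ldots,\gamma B_m,\ldots,\gamma A_n\}$ to which your induction hypothesis refers. Commuting the prepended steps past the single $\hookrightarrow_{\kappa,\gamma}$ is easy; what you actually need is to commute them past the \emph{entire remaining} LP-Unif derivation, i.e.\ a standardization argument or a strengthened induction statement in the style of Theorem \ref{prod-non-overlap}(1) (LP-Struct reaches some $\{C_1,\ldots,C_l\}$ to which the LP-Unif resolvent $\to^*$-reduces); as written the IH simply does not attach to the modified multiset. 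Second, the termination-transfer argument is unsound: an infinite $\to$-reduction from an intermediate multiset does lift to an infinite $\leadsto$-reduction, but that contradicts nothing, because the hypothesis only supplies \emph{one} terminating $\leadsto$-path and $\leadsto$ is not confluent; and confluence of $\to$ by itself never excludes an infinite reduction coexisting with a normal form (weak normalization is not strong normalization). The fact you actually need is different: a $\to$-reducible atom is an instance of a clause head, hence remains $\leadsto$-reducible under any further instantiation, so it and each of its $\to$-descendants cannot survive into the $\leadsto$-normal form and must each be selected within the finite $\leadsto$-derivation, which is what bounds the forced $\to^\mu$ segments; nothing in your sketch supplies this. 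For comparison, the paper's proof of this direction does not attempt your reordering at all: it inducts on the length of the $\leadsto$-path and emulates each step directly by $\hookrightarrow\cdot\to$, invoking non-overlapping only to exclude a second clause head matching the selected atom.
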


\begin{example}
Consider the following non-productive and non-overlapping program and its version after the realizability transformation:

\begin{center}
\emph{Original program:}  $\kappa : \forall x . P(x) \Rightarrow P(x)$\\
%\end{center}
%\begin{center}
\emph{After transformation:}  $\kappa : \forall x . \forall u. P(x, u) \Rightarrow P(x, f_\kappa(u))$
\end{center}

\noindent Both LP-Struct and LP-Unif will diverge for the queries $P(x), P(x, y)$ in both original and transformed versions. LP-Struct reduction diverges for different reasons in the two cases, one is due to divergence of
$\to$-reduction:

\noindent $\Phi \vdash \{P(x)\} \to \{P(x)\} \to \{P(x)\} ...$

\noindent The another is due to $\hookrightarrow$-reduction:

\noindent $\Phi \vdash \{P(x, y)\} \hookrightarrow \{P(x, f_k(u))\} \to \{P(x, u)\} \hookrightarrow \{P(x, f_k(u'))\} \to \{P(x, u')\} ...$

 Note that a single step of LP-Unif reduction for the original program corresponds to infinite steps of term-matching reduction in LP-Struct. For the transformed version, a single step of LP-Unif reduction corresponds to finite steps of LP-Struct reduction. 

% It is the ability to distinguish the productive program from the non-productive one that makes LP-Struct more refined than LP-Unif. 

%We can apply realizability transformation
%to this program, we obtain a productive and non-overlapping program:

%\noindent For this program, again, both LP-Struct and LP-Unif will diverge, but the reason
%for the divergence of LP-Struct is now due to the presence of substitutional reduction.

%Observe the similarity between the transfomed program and the program defining Stream in Example~\ref{str}. 
\end{example}

The next theorem shows that we need both productivity and non-overlapping to establish operational
equivalence of LP-Struct and LP-Unif for both finite and infinite reductions. Note that realizability transformation guarantees exactly these two properties.

%But LP-Struct gives us the opportunity to distinguish the productive program from the unproductive ones. Now let us consider the non-overlapping productive program, for such programs, we have the following proposition.
\begin{theorem}\label{prod-non-overlap}
  \
  Suppose $\Phi$ is non-overlapping and productive. 
  \begin{enumerate}
  \item If $\Phi \vdash \{A_1,..., A_n\} \leadsto \{B_1,..., B_m\}$, then $\Phi \vdash \{A_1,..., A_n\} (\to^\nu \cdot \hookrightarrow^1)^* \{C_1,..., C_l\}$ and $\Phi \vdash \{B_1,..., B_m\} \to^* \{C_1,..., C_l\}$.
   \item If $\Phi \vdash \{A_1,..., A_n\} (\to^\nu \cdot \hookrightarrow^1)^* \{B_1,..., B_m\}$, then $\Phi \vdash \{A_1,..., A_n\} \leadsto^* \{B_1,..., B_m\}$.  
  \end{enumerate}
\end{theorem}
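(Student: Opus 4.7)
The plan is to prove both directions of Theorem~\ref{prod-non-overlap} as step-level simulations, leveraging Lemma~\ref{unif-coalp} (which decomposes $\leadsto_{\kappa,\gamma}$ into $\hookrightarrow_{\kappa,\gamma}\cdot\to_\kappa$) and an analog of Lemma~\ref{coalp-unif}(2) valid for any non-overlapping $\Phi$, not only $F(\Phi)$, whose proof goes through because it uses only uniqueness of the matching head. Non-overlapping delivers determinism and confluence of $\to$, while productivity guarantees that every $\to^\nu$-phase terminates; these two properties drive the whole argument.

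For part 1, consider the single step $\{A_1,\dots,A_n\}\leadsto_{\kappa,\gamma}\{\gamma A_1,\dots,\gamma\underline{B},\dots,\gamma A_n\}$ with $\kappa:\underline{B}\Rightarrow C$ and $C\sim_\gamma A_i$. I split on whether $A_i$ is in $\to$-normal form. If it is not, some rule head matches $A_i$, and by non-overlapping that head must be $C$ itself; the MGU $\gamma$ then coincides with the matching substitution (identity on $\mathrm{FV}(A_i)$), so the $\leadsto_{\kappa,\gamma}$-step is literally a $\to_\kappa$-step that can be embedded inside the $\to^\nu$-reduction of $\{A_1,\dots,A_n\}$, and I take $\{C_1,\dots,C_l\}$ to be the common $\to$-normal form. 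If $A_i$ is already in $\to$-normal form, the first $\to^\nu$-phase preserves $A_i$; then $\hookrightarrow_{\kappa,\gamma}$ applies $\gamma$ globally, making $\gamma A_i=\gamma C$ matchable by $\kappa$, and a second $\to^\nu$-phase fires the forced $\to_\kappa$-step and continues to the normal form $\{C_1,\dots,C_l\}$. In both cases, confluence of $\to$ together with the substitution-commutation property $A\to A'\Rightarrow \gamma A\to\gamma A'$ (via the same rule and lifted matching substitution) is what lets me conclude that $\{\gamma A_1,\dots,\gamma\underline{B},\dots,\gamma A_n\}\to^*\{C_1,\dots,C_l\}$ holds as well.

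For part 2 I induct on the number of $(\to^\nu\cdot\hookrightarrow^1)$-iterations. Each $\to_\kappa$-step is already a $\leadsto_{\kappa,\sigma}$-step with $\sigma$ the matching substitution acting trivially on atom variables, so any pure $\to^\nu$-prefix lifts to a $\leadsto^*$-reduction outright. For a $\hookrightarrow_{\kappa,\gamma}\cdot\to^\nu$ block starting from a $\to$-normal form, I use confluence to re-order the $\to^\nu$-phase so that the $\to_\kappa$-step forced at position $i$ by the $\hookrightarrow$ fires first; the non-overlapping analog of Lemma~\ref{coalp-unif}(2) then collapses the leading $\hookrightarrow_{\kappa,\gamma}\cdot\to_\kappa$ into a single $\leadsto_{\kappa,\gamma}$-step, while the remaining $\to$-steps become $\leadsto^*$-steps by the inductive hypothesis.

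The main technical obstacle is establishing confluence of $\to$ under the non-overlapping hypothesis and verifying the substitution-commutation lemma used throughout: together these are what justify both the case analysis in part 1 (where the two reduction paths must meet at the same normal form) and the re-ordering in part 2. Each assumption is genuinely needed: Example~\ref{overlap} shows how overlap breaks the equivalence, and without productivity $\to^\nu$ may diverge so that no common $\{C_1,\dots,C_l\}$ exists. The remaining bookkeeping---tracking MGU compositions across iterated $\hookrightarrow$-steps and keeping fresh rule variables disjoint from accumulated substitution domains---is routine but needs care.
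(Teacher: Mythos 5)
Your proposal is correct in substance, but it takes a heavier route than the paper, chiefly in part 1. The paper's proof of part 1 needs no confluence: it case-splits on whether the computed unifier is already a matcher ($D \mapsto_\gamma A_i$ or not), and in each case it simply \emph{chooses} the LP-Struct path that passes through $\{B_1,\dots,B_m\}$ --- either $\{A_1,\dots,A_n\} \to_{\kappa} \{B_1,\dots,B_m\} \to^\nu \{C_1,\dots,C_l\}$, or $\{A_1,\dots,A_n\} \hookrightarrow_{\kappa,\gamma}\cdot\to_{\kappa} \{B_1,\dots,B_m\} \to^\nu \{C_1,\dots,C_l\}$ --- so the common reduct is obtained constructively from productivity alone, and $\{B_1,\dots,B_m\}\to^*\{C_1,\dots,C_l\}$ is just the tail of that same path. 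You instead fix an arbitrary $\to^\nu$-phase and reconcile it with $\{B_1,\dots,B_m\}$ via confluence of $\to$ and a substitution-commutation lemma; both facts do hold under non-overlapping (at most one clause applies to each atom, matchers are unique, and steps at distinct atoms commute since term-matching leaves the rest of the multiset untouched), but they are extra infrastructure you would still have to prove, which the existential form of the statement lets you bypass. One small imprecision: in your first case you assert that when the head matches $A_i$ the mgu $\gamma$ \emph{is} the matcher (identity on $\mathrm{FV}(A_i)$); the unification rules fix $\gamma$ only up to renaming, so the robust formulation is the paper's split on whether the \emph{given} $\gamma$ is a matcher --- if it is not (say it renames query variables), the $\hookrightarrow\cdot\to$ route still applies verbatim. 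For part 2 your plan coincides with the paper's: term-matching steps are unification steps leaving the state unchanged (the paper's Lemma~\ref{tm-to-unif}), and $\hookrightarrow_{\kappa,\gamma}$ followed by the forced $\to_\kappa$ collapses to $\leadsto_{\kappa,\gamma}$ by non-overlapping (Lemma~\ref{I}). Your explicit reordering of the $\to^\nu$-phase so that the forced step at position $i$ fires first addresses a point the paper glosses over (other atoms may become reducible after $\gamma$ is applied), and for that you only need commutation of steps at distinct positions, not full confluence; with these two auxiliary lemmas written out, your argument goes through.
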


%So for the productive and non-overlapping program, LP-Struct behaves almost the same as LP-Unif.
%The only difference of the two can be seen at Proposition \ref{prod-non-overlap} (\textit{1}), 
%that is, LP-Struct may over reduce the queries. But this will not be a problem for non-overlapping program, because this over-reduction can be compensated by more LP-Unif reduction. An example of non-overlapping and productive program can be found at Example \ref{ex:bl}. For the program $F(\Phi)$(after realizability transformation), it is productive and non-overlapping, further more, LP-Struct behaves exactly the same as LP-Unif. 

%In general, for productive and non-overlapping programs, adopting LP-Struct means that the program 

For the diverging but productive programs (like Stream of Example~\ref{ex:str}), 
 productivity gives opportunity to make finite observations for potentially infinite derivations \cite{JKK15}, and allows us not to eargerly unfold the infinite derivation.

% LP-Struct allows a more interactive approach to the diverging program: finite term-matching reduction can ``terminate early '' for a query and gives more options on deciding whether or not to proceed the computation.  

%% If we look back at the Example \ref{ex:conn}, indeed the diverging LP-TM reduction is caused
%% by overlapping between $\mathrm{Connect}(x, z)$ and $\mathrm{Connect}(\mathrm{node}_1, \mathrm{node}_2)$. Intuitively, overlapping Horn formulas exibit the differences between unification reduction and term-matching reduction, in the present of overlapping, for unification reduction it means
%% choosing a rule nondeterministically, so there is still a chance to choose for unfication reduction, but for term-matching reduction, sometimes there is no way to choose, just like Example \ref{ex:conn} and Example \ref{overlap}. Requiring non-overlapping eliminates the flexibility of unification reduction, resulting the equivalence between LP-Struct and LP-Unif. 

\section{Conclusions and Future Work}\label{concl}

%\knote{place here:?}

%Thus, the paper's contributions are two-fold.
  
%\frank{Overall, I agree we should find a balance in mentioning structure resolution(and it is not easy!). And I think lics
%paper is an important paper. I guess the challenging part is how to explain the intuition and contribution of this paper
%in a way that the reader doesn't need to go to read the lics paper.} -- yes, see if I succeeded?
%Proof structure is not so relevant in the SLD-proofs.%
%Even more 
%Thus, resolution is in essence a proof by contradiction, in which the intuitionistic ethos is not inherently found.

%\noindent \textbf{About the Proof System}
We proposed a type system that gives a proof theoretic interpretation for LP, where Horn formulas correspond to the notion of type, and a successful query yields a first order proof term. The type system also provided us with  a precise tool to show that realizability transformation preserves both proof-theoretic meaning of the program and the execution behaviour of the unification reduction. % In previous work \cite{JKK15}, the rewriting tree in a sense reifies all the possible term-matching reduction paths, there tree-form allow better corecursive pattern analysis, the proof content of the formula is implicit, our proof system gives us a more direct way to study both the proof theoretic aspect and operational aspect of LP. 

%\noindent \textbf{The relation of LP-Struct and LP-Unif} 

We formulated S-resolution as LP-Struct reduction, which can be seen as a reduction strategy that combines term-matching reduction with substitutional reduction.
This formulation allowed us to study the operational relation between LP-Struct and LP-Unif. %, which is novel to our knowlege.
The operational equivalence of LP-Struct and LP-Unif is by no means obvious.
Previous work (\cite{JKK15}, \cite{komendantskaya2014})
only gives soundness and completeness of LP-Struct with respect to the Herbrand model.
 We identified that productivity and non-overlapping are essential for showing their operational equivalence.
%condition, and study the relation of LP-Struct and LP-Unif under these conditions.
%Establishing their operational equivalnece allowed to
Therefore, these two properties identify
the ``structural'' fragment of logic programs.
% -- that is, those bearing these two properties.
%% Evident by Example \ref{ex:conn}, without realizablity transformation, for query $\mathrm{Connect(\mathrm{node}_1, \mathrm{node}_3)}$,
%% LP-Unif can find a success path, while LP-Struct will necessarily diverge. Only with the help of realizability transformation, we are able to show LP-Struct and LP-Unif are operational equivalence. 

%\noindent \textbf{About Realizability Transformation} 

Realizability transformation proposed here ensures that the resulting programs are productive and non-overlapping. % -- the properties we show to be crucial for
%S-resolution via LP-Struct, and for equivalent derivations by SLD-resolution via LP-Unif.
It  preserves the proof-theoretic meaning of the program, in a formally defined sense of Theorems~\ref{th6}-\ref{th8}.
It serves as a proof-method that enables us to show the operational equivalence of LP-Unif and LP-Struct, for productive and non-overlapping programs.
It is general, applies to any logic program, and can be easily mechanised. Finally, it allows to automatically record the proof content in the course of reductions, as Theorem~\ref{th7} establishes.

With the proof system for LP-reductions we proposed, we are planning to further investigate the interaction of LP-TM/Unif/Struct with typed functional languages. We expect to find a tight connection
between our work and the type class inference, cf.~(\cite{wadler1989make,jones2003qualified}).
Using terminology of this paper, a type class corresponds to an atomic formula, an instance declaration corresponds to a Horn formula, and the instance resolution process in type class inference uses
LP-TM reductions, in which evidence for the type class corresponds to our notion of proof.
Realizability transformation then gives a method to record the proof automatically.
A careful examination of these connections is warranted. %% We also want to study how exactly LP-Struct and LP-Unif can be used to accommondate the notion of constraints in the typed functional languages. Constraints in a typed functional languages are often in the form of \textit{generalized algebraic data types}(\cite{cheney2003first}, \cite{augustsson1994silly}), there the shape of constaints is the equality on types, and the type checking process will need to generate evidence for equality constaints, again it seems to fit well with the proof system
%% we proposed. 

%\noindent \textbf{Future Work}
If one works only with Horn-formulas in LP, then 
we know that the proof of a successful query can be normalized to a first order proof term. It seems that nothing interesting can happen to the proof term. But
when we plug the proof system into a typed functional language in the form of a type class and instance declaration, the proof will correspond to the evidence
for the type class, and it will interact with the underlining functional program, and eventually will be run
 as a program. For example, the following declaration specifies a way to construct equality class instance for datatype list and int:
 {\footnotesize
\begin{center}
\begin{tabular}{lll}
 $\kappa_1 : $ & & $ \mathrm{Eq}(x) \Rightarrow \mathrm{Eq}(\mathrm{list} (x))$
\\
 $\kappa_2 : $  & &  $ \Rightarrow \mathrm{Eq}(\mathrm{int})$
\end{tabular}
\end{center}
}
\noindent Here $\mathrm{list}$ is a function symbol, $\mathrm{int}$ is a constant and $x$ is variable; $\kappa_1, \kappa_2$ will be defined
as functional programs that are used to construct the evidence. When the underlining functional system
makes a query $\mathrm{Eq}(\mathrm{list} (\mathrm{int}))$, we can use LP-TM to construct
 a proof for $\mathrm{Eq}(\mathrm{list} (\mathrm{int}))$, which is $\kappa_2\ \kappa_1$, and
then $\kappa_2\ \kappa_1$ will serve as runtime evidence for the corresponding method, thus yielding computational meaning of the proof.

\bibliographystyle{plain}
 
% The bibliography should be embedded for final submission.
\bibliography{tm-lp} 
\newpage
\appendix

\section{Proof of Theorem \ref{real:sn} and \ref{fst}}
We are going to prove a nontrival property about the type system that we just set up. 
The proof is a simplification of Tait-Girard's reducibility method. 

\begin{definition}[Reducibility Set]
Let $N$ denotes the set of all strong normalizing proof terms. We define reducibility set $\mathsf{RED}_{F}$ by induction on structure of $F$: 

  \begin{itemize}
  \item $p \in \mathsf{RED}_{A_1,..., A_n \Rightarrow B}$ with $n \geq 0$ iff for any $p_i \in N$, $p \ p_1 \ ... \ p_n \in N$. 
  \item $p \in \mathsf{RED}_{\forall \underline{x}. A_1,..., A_n \Rightarrow B}$ iff $p \in \mathsf{RED}_{A_1,..., A_n \Rightarrow B}$.
  \end{itemize}
\end{definition}

\begin{lemma}
  \label{irr}
  $\mathsf{RED}_{\underline{A} \Rightarrow B} = \mathsf{RED}_{\phi \underline{A} \Rightarrow \phi B}$.
\end{lemma}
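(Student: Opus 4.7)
The plan is to observe that the reducibility set $\mathsf{RED}_{A_1,\ldots,A_n \Rightarrow B}$, by its very definition, depends only on the arity $n$ of the antecedent list and not on the particular atomic formulas $A_1,\ldots,A_n,B$ that appear in it. Indeed, membership of a proof term $p$ in this set is characterised by the purely syntactic condition ``for all $p_1,\ldots,p_n \in N$, we have $p\,p_1\cdots p_n \in N$'', and this condition mentions neither the predicate symbols nor the first-order terms inside the $A_i$ or $B$.

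Once this observation is made, the proof is essentially a one-line unfolding. First I would note that a first-order substitution $\phi$ acts atomic-formula-wise on a Horn formula, so $\phi\underline{A} \Rightarrow \phi B$ has exactly the same number of hypotheses $n$ as $\underline{A} \Rightarrow B$. Then unfolding the definition of $\mathsf{RED}_{\phi\underline{A} \Rightarrow \phi B}$ yields the identical condition ``for all $p_1,\ldots,p_n \in N$, $p\,p_1\cdots p_n \in N$'', so the two sets coincide extensionally.

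Since the lemma as stated concerns only the implicational shape $\underline{A} \Rightarrow B$ (no outer quantifier), no induction on the formula structure is even needed; a direct unfolding suffices. If one wanted the analogous statement for the quantified form $\forall \underline{x}.\underline{A} \Rightarrow B$, the second clause of the definition of $\mathsf{RED}$ reduces it immediately to the unquantified case (modulo the usual care about the variables in $\underline{x}$ being renamed away from $\mathrm{dom}(\phi) \cup \mathrm{FV}(\phi)$), so no additional work is required.

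There is essentially no obstacle here; the lemma is a structural sanity check whose purpose is to license later steps of the Tait--Girard argument (e.g., invariance of reducibility under the \emph{inst} rule). The only thing to be mildly careful about is the usual Barendregt-style convention that bound variables in the $\forall$ case are chosen fresh with respect to $\phi$, which ensures that $\phi$ commutes with stripping the quantifier.
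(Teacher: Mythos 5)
Your proposal is correct and matches the intended justification: the paper itself states Lemma~\ref{irr} without proof, precisely because $\mathsf{RED}_{A_1,\ldots,A_n \Rightarrow B}$ depends only on the number $n$ of antecedents and the set $N$, so applying $\phi$ (which maps atomic formulas to atomic formulas and preserves the arity of the antecedent list) leaves the defining condition literally unchanged. Your remark about the quantified case and freshness of the bound variables is consistent with how the lemma is actually used in the \emph{inst} case of the paper's Lemma~\ref{strong}.
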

\begin{lemma}
  If $p \in \mathsf{RED}_F$, then $p \in N$. 
\end{lemma}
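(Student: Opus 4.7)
The plan is to prove the lemma by induction on the structure of the formula $F$, exploiting the fact that the quantifier case collapses definitionally to the implication case, so that essentially only one case requires argument.

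For the base of the induction, consider $F = A_1, \ldots, A_n \Rightarrow B$ with $n \geq 0$. Suppose $p \in \mathsf{RED}_F$; by definition, for any $p_1, \ldots, p_n \in N$, we have $p\, p_1 \cdots p_n \in N$. If $n = 0$ the conclusion is immediate. For $n > 0$, I would instantiate this universal condition with concrete choices of $p_i$ for which we know membership in $N$ outright. The natural choice is $p_i := a_i$ for $n$ fresh proof term variables, since a bare variable admits no $\beta$-redex and so is trivially strongly normalising. This yields $p\, a_1 \cdots a_n \in N$. Now I would appeal to the standard observation that reductions internal to $p$ lift to reductions of $p\, a_1 \cdots a_n$ (the congruence closure of $\to_\beta$ includes reduction under application heads), so any infinite $\beta$-reduction sequence from $p$ would give an infinite $\beta$-reduction sequence from $p\, a_1 \cdots a_n$, contradicting $p\, a_1 \cdots a_n \in N$. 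Hence $p \in N$.

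For the inductive step, $F = \forall \underline{x}. A_1, \ldots, A_n \Rightarrow B$. By the second clause of the definition of $\mathsf{RED}$, $p \in \mathsf{RED}_F$ is literally the same as $p \in \mathsf{RED}_{A_1, \ldots, A_n \Rightarrow B}$, so the case reduces to the previous one (and this does not even require an inductive hypothesis, strictly speaking).

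The only potentially delicate point is the claim that proof term variables belong to $N$ and the lifting of internal reductions of $p$ to reductions of $p\, a_1 \cdots a_n$; both are standard facts about untyped $\beta$-reduction on the proof term calculus defined earlier, and I expect no surprises. Thus the main obstacle is essentially bookkeeping: one should check that the notion of ``strongly normalising'' used in the definition of $N$ really is the congruence closure of $\to_\beta$ as given in the Beta-Reduction definition, so that the head-expansion argument applies uniformly regardless of whether $p$ is a variable, an axiom constant, an abstraction, or an application.
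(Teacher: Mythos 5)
Your proof is correct and follows essentially the same route as the paper's: induction on $F$, with the quantifier case collapsing by the definition of $\mathsf{RED}$ and the implication case concluding $p \in N$ from $p\,p_1\cdots p_n \in N$ for strongly normalising arguments. You merely make explicit what the paper leaves implicit, namely the choice of variables as the arguments and the lifting of internal reductions of $p$ to the applied term.
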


\begin{proof}
By Induction on $F$:
\begin{itemize}      
  \item Base Case: $F$ is of the form $A_1 , ... , A_n \Rightarrow B$. By definition, $p\ p_1\ ... \ p_n \in N$ for any $p_i \in N$. Thus $p \in N$. 
    \item Step Case: $F$ is of the form $\forall \underline{x} . A_1 , ... , A_n \Rightarrow B$. $p \in \mathsf{RED}_{\forall \underline{x} . A_1 , ... , A_n \Rightarrow B}$ implies $p \in \mathsf{RED}_{A_1 , ... , A_n \Rightarrow B}$. Thus by IH, $p \in N$. 
\end{itemize}
\end{proof}

\begin{lemma}
  \label{strong}
  If $e : F$, $e \in \mathsf{RED}_{F}$. 
\end{lemma}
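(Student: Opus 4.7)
The plan is to prove Lemma~\ref{strong} by straightforward induction on the typing derivation of $e : F$, with one case per rule in Definition~\ref{proofsystem}. The heart of the argument is the \textit{cut} case; the other three are essentially bookkeeping that makes use of Lemma~\ref{irr} and the definitional equation $\mathsf{RED}_{\forall \underline{x}. F} = \mathsf{RED}_F$.

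For the \textit{axiom} case, given $\kappa : \forall \underline{x}. A_1,\ldots,A_n \Rightarrow B$, I unfold the outer quantifier via the definition of $\mathsf{RED}$ and then observe that, since $\kappa$ is an atomic proof-term constant with no reductions of its own, any application $\kappa\ p_1\ \ldots\ p_n$ with $p_i \in N$ can only reduce inside the $p_i$, so it is itself in $N$. The \textit{gen} case is immediate: the IH gives $e \in \mathsf{RED}_F$, which equals $\mathsf{RED}_{\forall \underline{x}. F}$ by definition. The \textit{inst} case uses the IH to get $e \in \mathsf{RED}_{\forall \underline{x}. F} = \mathsf{RED}_F$, and then Lemma~\ref{irr} with $\phi = [\underline{t}/\underline{x}]$ transports this to $\mathsf{RED}_{[\underline{t}/\underline{x}] F}$.

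For the \textit{cut} case, I have $e_1 : \underline{A} \Rightarrow D$ and $e_2 : \underline{B}, D \Rightarrow C$, and by IH $e_1 \in \mathsf{RED}_{\underline{A} \Rightarrow D}$ and $e_2 \in \mathsf{RED}_{\underline{B}, D \Rightarrow C}$. The goal is that $\lambda \underline{a}.\lambda \underline{b}.(e_2\ \underline{b})(e_1\ \underline{a}) \in \mathsf{RED}_{\underline{A}, \underline{B} \Rightarrow C}$, i.e.\ for all $\underline{p},\underline{q} \in N$ of matching lengths, the application $M := (\lambda \underline{a}.\lambda \underline{b}.(e_2\ \underline{b})(e_1\ \underline{a}))\ \underline{p}\ \underline{q}$ is in $N$. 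Repeated $\beta$-contraction of $M$ yields $M' := (e_2\ \underline{q})(e_1\ \underline{p})$, and $M' \in N$ follows by chaining the two induction hypotheses: first $e_1\ \underline{p} \in N$ because $e_1 \in \mathsf{RED}_{\underline{A} \Rightarrow D}$ and $\underline{p} \in N$; then $(e_2\ \underline{q})(e_1\ \underline{p}) \in N$ because $e_2 \in \mathsf{RED}_{\underline{B}, D \Rightarrow C}$ applied to the SN arguments $\underline{q}$ together with $e_1\ \underline{p}$.

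The main obstacle is lifting the SN property from the contractum $M'$ back to the redex $M$, which is the standard Tait-style expansion step: if each $N_i \in N$ and the substituted term $M_0[\underline{N}/\underline{a}]$ is SN, then $(\lambda a_1 \ldots \lambda a_j. M_0)\ N_1\ \ldots\ N_j$ is SN. I would prove this auxiliary lemma by induction on the lexicographic measure consisting of the SN rank of $M_0[\underline{N}/\underline{a}]$ together with the ranks of the $N_i$, doing a case split on whether an arbitrary reduction takes place in the head $\lambda$-prefix (which decreases the number of outer redexes), inside some $N_i$ (which decreases its rank while preserving SN of the substitution result), or is the head $\beta$-contraction itself (which yields $M_0[\underline{N}/\underline{a}]$, SN by hypothesis). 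Since each $a_i$ and $b_j$ occurs exactly once in the body of the constructed cut term, there is no duplication of the $\underline{p}, \underline{q}$ under substitution, so the measure is well-behaved. Once this sublemma is in hand, the cut case closes and the induction is complete.
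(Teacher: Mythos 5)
Your proposal is correct and follows essentially the same route as the paper: induction on the typing derivation, with \emph{gen}/\emph{inst} handled via the definition of $\mathsf{RED}$ and Lemma~\ref{irr}, and the \emph{cut} case closed by showing the contractum $(e_2\ \underline{q})\,(e_1\ \underline{p})$ is in $N$ from the two induction hypotheses and then recovering strong normalization of the redex by a lexicographic induction on reduction lengths. The paper performs this last step as an inner induction on the measure $(\nu((e_2\,\underline{b})(e_1\,\underline{a})),\nu(\underline{p}),\nu(\underline{q}))$ rather than isolating your Tait-style expansion sublemma, but the argument (and its reliance on the linear occurrence of $\underline{a},\underline{b}$) is the same.
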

\begin{proof}
  By induction on derivation of $e : F$.
  \begin{itemize}
  \item Base Case:

    \begin{tabular}{l}
\infer{\kappa : \forall \underline{x} . \Rightarrow B}{}
\end{tabular}

This case $\kappa \in N$. 
  \item Base Case:

    \begin{tabular}{l}
\infer{\kappa : \forall \underline{x} . A_1,..., A_n \Rightarrow B}{}
\end{tabular}

Since $\kappa$ is a constant, thus for any $p_i \in N$, $\kappa\ p_1\ ...\ p_n \in N$. So
$\kappa \in \mathsf{RED}_{ A_1,..., A_n \Rightarrow B}$, thus $\kappa \in \mathsf{RED}_{\forall \underline{x}. A_1,..., A_n \Rightarrow B}$.
   \item Step Case: 
\

     \begin{tabular}{l}
\infer[cut]{\lambda \underline{a} . \lambda \underline{b} . (e_2\ \underline{b})\ (e_1\ \underline{a}) : \underline{A}, \underline{B} \Rightarrow C}{e_1 : \underline{A} \Rightarrow D & e_2 : \underline{B}, D \Rightarrow C}
\end{tabular}

We need to show $\lambda \underline{a} . \lambda \underline{b} . (e_2\ \underline{b})\ (e_1\ \underline{a}) \in \mathsf{RED}_{  \underline{A},  \underline{B} \Rightarrow   C}$. By IH, we know that $(e_2\ \underline{b})\ (e_1\ \underline{a}) \in N$. Let $p_1 \in N,...,p_n \in N, q_1 \in N,..., q_m \in N$. We
are going to show for any $e$ with $(\lambda \underline{a} . \lambda \underline{b} . (e_2\ \underline{b})\ (e_1\ \underline{a})) \ \underline{p}\ \underline{q} \to_\beta e$, then $e \in N$. We proceed by induction on $(\nu((e_2\ \underline{b})\ (e_1\ \underline{a})), \nu(\underline{p}), \nu(\underline{q}))$, where $\nu$ is a function to get the length of the reduction path to normal form.
\begin{itemize}
\item Base Case: $(\nu((e_2\ \underline{b})\ (e_1\ \underline{a})), \nu(\underline{p}), \nu(\underline{q})) = (0,..., 0)$. 
The only reduction possible is $(\lambda \underline{a} . \lambda \underline{b} . (e_2\ \underline{b})\ (e_1\ \underline{a})) \ \underline{p}\ \underline{q} \to_\beta (e_2\ \underline{q})\ (e_1\ \underline{p})$. We know that $(e_2\ \underline{q})\ (e_1\ \underline{p}) \in N$.
\item Step Case: There are several possible reductions, but all will decrease $(\nu((e_2\ \underline{b})\ (e_1\ \underline{a})), \nu(\underline{p}), \nu(\underline{q}))$, thus we conclude that by induction hypothesis.
\end{itemize}
So $\lambda \underline{a} . \lambda \underline{b} . (e_2\ \underline{b})\ (e_1\ \underline{a}) \in N$. 
\item Step Case: 

  \begin{tabular}{l}
\infer[inst]{e : [\underline{t}/\underline{x}]F}{e : \forall \underline{x} . F}
\end{tabular}

By IH, we konw that $e \in \mathsf{RED}_{\forall \underline{x} . F}$, so by definition we know that $e \in \mathsf{RED}_{F}$. By Lemma \ref{irr}, $e \in \mathsf{RED}_{[\underline{t}/\underline{x}]F}$.
\item Step Case: 

  \begin{tabular}{l}
\infer[gen]{e: \forall \underline{x} . F}{e : F}
\end{tabular}

By IH, we know that $e \in \mathsf{RED}_{F}$, so we know that $e \in \mathsf{RED}_{\forall \underline{x} . F}$. 
  \end{itemize}
\end{proof}

\begin{theorem}[Strong Normalization]
  If $e : F$, then $e \in N$.
\end{theorem}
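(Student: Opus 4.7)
The plan is to derive this theorem as an immediate consequence of the two preceding lemmas, which together have already done all the real work. Given a derivation of $e : F$, I would first apply Lemma \ref{strong} to obtain $e \in \mathsf{RED}_F$. Then I would apply the preceding lemma (\emph{If $p \in \mathsf{RED}_F$, then $p \in N$}) to conclude $e \in N$, which is exactly the statement of strong normalization.

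The real content has been absorbed into Lemma \ref{strong}, whose proof by induction on the typing derivation handles the non-trivial cut case using the standard Tait--Girard trick of measuring the reduction potential of the redex together with the potentials of its arguments. In particular, the saturation-style property that $\lambda \underline{a}.\lambda \underline{b}.(e_2\,\underline{b})(e_1\,\underline{a})$ applied to strongly normalizing arguments remains in $N$, is what lifts membership in $\mathsf{RED}$ through cuts. Similarly, the \emph{inst} and \emph{gen} cases require only that $\mathsf{RED}$ is invariant under substitution and quantification, which is Lemma \ref{irr} and the definitional clause $\mathsf{RED}_{\forall \underline{x}.F} = \mathsf{RED}_F$.

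Since the two lemmas have already been established, the proof of the theorem itself is essentially a one-line composition: assume $e : F$; by Lemma \ref{strong}, $e \in \mathsf{RED}_F$; by the inclusion $\mathsf{RED}_F \subseteq N$, $e \in N$. There is no remaining obstacle at this step; the difficulty was packaged into Lemma \ref{strong}, and the only thing to note is that the definition of $\mathsf{RED}_F$ is designed precisely so that this final inclusion is trivial (the case $n=0$ of the base clause says $p \in N$ outright for atomic-sequent types, and the quantifier clause is transparent).
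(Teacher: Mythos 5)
Your proposal is correct and matches the paper's own proof: the theorem is obtained by composing Lemma \ref{strong} ($e : F$ implies $e \in \mathsf{RED}_F$) with the inclusion $\mathsf{RED}_F \subseteq N$, exactly as the paper does (it cites Lemma \ref{strong}, with the inclusion lemma doing the final step). No gaps.
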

\begin{proof}
  By Lemma \ref{strong}.
\end{proof}
\begin{definition}[First Orderness]
  We say $p$ is first order inductively:
  \begin{itemize}
  \item A proof term variable $a$ or proof term constant $\kappa$ is first order.
  \item if $n, n'$ are first order, then $n\ n'$ is first order.
  \end{itemize}
\end{definition}

\begin{lemma}
  \label{fo:sub}
  If $n, n'$ are first order, then $[n'/a]n$ is first order. 
\end{lemma}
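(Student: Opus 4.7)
The plan is to prove this by straightforward structural induction on the first-order proof term $n$, using the inductive definition of first-orderness just given. The key observation is that the definition of first order has only three shapes (variable, proof term constant, application of two first-order terms) and crucially excludes lambda abstractions, so the substitution $[n'/a]$ never has to pass under a binder and no alpha-renaming subtleties arise.

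First I would dispatch the base cases. If $n$ is a proof term constant $\kappa$, then $[n'/a]n = \kappa$, which is first order by the first clause of the definition. If $n$ is a proof term variable, there are two sub-cases: when $n = a$ we have $[n'/a]n = n'$, which is first order by hypothesis; when $n$ is some other variable $b$, then $[n'/a]n = b$, which is first order again by the first clause.

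For the inductive step, suppose $n = n_1\,n_2$ where $n_1$ and $n_2$ are first order (this is the only way to form a first-order compound term). Then $[n'/a]n = ([n'/a]n_1)\,([n'/a]n_2)$. By the induction hypothesis applied to $n_1$ and $n_2$ separately, both $[n'/a]n_1$ and $[n'/a]n_2$ are first order, so by the second clause of the definition their application is first order as well. This closes the induction.

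I do not anticipate any real obstacle here: the entire argument is a routine check, and the only thing to verify carefully is that the definition of first-orderness does not admit a $\lambda$-case, so the induction really is exhaustive with just the three shapes above. This lemma is then used together with Lemma \ref{fst:lambda} and the strong normalisation theorem to conclude Theorem \ref{fst}, where closure of first-orderness under substitution is needed to push normalisation through the beta-redex contracted at the top of a proof of $[\forall \underline{x}.]\Rightarrow B$.
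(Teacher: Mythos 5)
Your proof is correct and is exactly the routine structural induction the paper has in mind; in fact the paper states this lemma without proof, treating it as immediate, and your case analysis (constant, variable equal to or distinct from $a$, application, with no $\lambda$-case to worry about) fills in precisely the intended argument.
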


\begin{lemma}
\label{FO}
  If $e : [\forall \underline{x}.] \underline{A} \Rightarrow B$, then either $e$ is a proof term constant or it is normalizable to the form $\lambda \underline{a}. n$, where $n$ is first order normal term. 
\end{lemma}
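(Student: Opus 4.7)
The plan is to combine strong normalization (Theorem \ref{real:sn}) with a structural analysis of $\beta$-normal forms, exploiting the Horn-fragment discipline: every proof-term variable carries an atomic type (since variables are only ever introduced by the \emph{cut} rule, binding the atomic hypotheses on the left of $\Rightarrow$), and every axiom constant has a type whose premises and conclusion are all atomic. This rigid shape is what ultimately forces normal forms to be first order underneath their leading lambdas.

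First I would invoke Theorem \ref{real:sn} to reduce $e$ to a $\beta$-normal form $e'$ of the same Horn-formula type; subject reduction in this Curry-style system is standard because \emph{gen} and \emph{inst} do not touch the proof term and commute with $\beta$-steps. If $\underline{A}$ is empty and $e'$ is a proof-term constant $\kappa$, we are in the first alternative of the conclusion. Otherwise I peel off outermost $\lambda$-abstractions to write $e' = \lambda a_1.\cdots \lambda a_k.\, n$, where each bound $a_i$ carries the atomic type $A_i$ and the body $n$ has the atomic type $B$.

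It remains to show that $n$ is first order, which I would prove by induction on the size of $n$. Because $n$ is a normal form at an atomic type, it cannot itself be a $\lambda$-abstraction, so it must be an application $h\, p_1\cdots p_m$ (possibly with $m = 0$) whose head $h$ is either a proof-term variable or a constant. If $h$ is a variable, its atomic type forbids any arguments, so $m = 0$ and $n = h$ is trivially first order. If $h$ is a constant $\kappa$, its type, after any required \emph{inst}, is a Horn formula $A'_1,\ldots,A'_m \Rightarrow B'$ in which each $A'_i$ is atomic; hence each $p_i$ is a normal proof term at atomic type and is a strict subterm of $n$, so by the induction hypothesis each $p_i$ is first order, and $n = \kappa\, p_1\cdots p_m$ is first order by the inductive clause of the definition of first-orderness.

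The main obstacle I expect is the preliminary bookkeeping that a $\beta$-normal form of a Horn type actually does have the shape $\lambda\underline{a}.\, n$ with $n$ of atomic type: one must argue that the implicit \emph{gen}/\emph{inst} rules do not conceal further abstractions inside $n$, which is true because Horn formulas only quantify at the outermost level and atomic formulas carry no arrow structure to strip. Once that observation is nailed down, the inductive step is routine; Lemma \ref{fo:sub} is available in case one needs to justify that first-orderness is preserved when variables are substituted during the $\beta$-normalization phase.
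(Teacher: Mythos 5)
Your intuition about what forces first-orderness is right, but the argument has a genuine gap: it treats the system of Definition~\ref{proofsystem} as an ordinary simply-typed $\lambda$-calculus with typing rules for variables, applications and abstractions, whereas in this system the \emph{only} term-forming rule is \emph{cut} (plus \emph{axiom}; \emph{gen} and \emph{inst} leave the term unchanged). Consequently (i) subterms such as your $n$ and the arguments $p_i$ are never assigned types --- there is no judgment typing an open proof term with free proof variables --- so the steps ``the body $n$ has the atomic type $B$'' and ``a variable head admits no arguments because its type is atomic'' are not available in this system; and (ii) the ``standard'' subject reduction you invoke fails in the strict sense: the derivable proof terms are exactly the constants and the terms of the cut shape $\lambda \underline{a}.\lambda \underline{b}.(e_2\,\underline{b})(e_1\,\underline{a})$ with $e_1,e_2$ derivable, and this set is not closed under $\beta$-reduction. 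For instance, from $\kappa_1 : A \Rightarrow D$, $\kappa_2 : D \Rightarrow E$, $\kappa_3 : E \Rightarrow C$ one derives $\lambda a.\,\kappa_3\,((\lambda b.\,\kappa_2\,(\kappa_1\, b))\, a)$, whose $\beta$-normal form $\lambda a.\,\kappa_3\,(\kappa_2\,(\kappa_1\, a))$ is not itself derivable by the four rules; so the plan ``first normalize, then do a typed inversion analysis of the normal form'' cannot be carried out as stated. To make your route work you would have to embed the system into a simply typed $\lambda$-calculus with constants (atomic formulas as base types, $\kappa : A_1 \to \cdots \to A_n \to B$), show that derivability is preserved under the embedding, and then use normalization, preservation and the normal-form characterization \emph{there} --- considerably more infrastructure than the proposal contains, and this is exactly the bookkeeping you flag but then dismiss.

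The paper avoids all of this by inducting on the typing derivation itself: the axiom case yields a constant; in the cut case the induction hypothesis gives the possible shapes of (the normal forms of) $e_1$ and $e_2$, and one computes directly that $(e_2\,\underline{b})(e_1\,\underline{a})$ normalizes to a first-order term, with Lemma~\ref{fo:sub} doing the real work of showing that the $\beta$-substitutions arising in this computation preserve first-orderness (it is essential there, not merely ``available in case''); \emph{gen} and \emph{inst} are immediate. Your concluding induction on the size of a normal body is fine in a standard calculus, so what is missing is precisely the justification for transferring that analysis to this sequent-style system: either adopt the paper's induction on derivations, or carry out the embedding explicitly.
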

\begin{proof}
  By induction on the derivation of $e : [\forall \underline{x}.] \underline{A} \Rightarrow B$.
  \begin{itemize}
  \item Base Cases: Axioms, in this case $e$ is a proof term constant.
    \item Step Case: 
      
           \begin{tabular}{l}
\infer[cut]{\lambda \underline{a} . \lambda \underline{b} . (e_2\ \underline{b})\ (e_1\ \underline{a}) : \underline{A}, \underline{B} \Rightarrow C}{e_1 : \underline{A} \Rightarrow D & e_2 : \underline{B}, D \Rightarrow C}
\end{tabular}

           By IH, we know that $e_1 = \kappa$ or $e_1 = \lambda \underline{a}.n_1$; $e_2 = \kappa'$ or $e_2 = \lambda \underline{b} d . n_2$. We know that $e_1 \underline{a}$ will be normalizable to a first order proof term. And $e_2 \underline{b}$ will be normalized to either $\kappa' \underline{b}$ or $\lambda d . n_2$. So by Lemma \ref{fo:sub}, we conclude that 
           $\lambda \underline{a} . \lambda \underline{b} . (e_2\ \underline{b})\ (e_1\ \underline{a})$ is normalizable to $\lambda \underline{a} . \lambda \underline{b} . n$ for some
           first order normal term $n$.
           \item The other cases are straightforward.
  \end{itemize}
\end{proof}
\begin{theorem}
  If $e : [\forall \underline{x}.] \Rightarrow B$, then $e$ is normalizable to a first order proof term.
\end{theorem}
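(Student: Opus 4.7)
The plan is to derive Theorem~\ref{fst} as an immediate specialisation of Lemma~\ref{fst:lambda} (i.e.\ Lemma~\ref{FO} in the appendix), using the convention that $\underline{A} \Rightarrow B$ permits $\underline{A}$ to be the empty sequence, in which case the Horn formula degenerates to $\Rightarrow B$.

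First, I would invoke Lemma~\ref{fst:lambda} instantiated at $\underline{A} = \emptyset$, so that the hypothesis $e : [\forall \underline{x}.]\, \underline{A} \Rightarrow B$ becomes exactly $e : [\forall \underline{x}.] \Rightarrow B$. The lemma then gives us a dichotomy: either $e$ is a proof term constant $\kappa$, or $e$ is normalisable to $\lambda \underline{a}. n$ with $n$ a first-order normal proof term. In the first case, a proof term constant is already first order by the base clause of the definition of first-orderness, so we are done immediately. In the second case, since $\underline{A}$ is empty, the binder sequence $\underline{a}$ is likewise empty (as Definition~\ref{proofsystem} ties the length of $\underline{a}$ to the length of $\underline{A}$ in the cut rule, which is the only rule producing abstractions), so $\lambda \underline{a}. n$ collapses to $n$ itself, and $n$ is first order normal by the conclusion of the lemma.

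The argument is essentially one line after Lemma~\ref{fst:lambda}, so there is no real obstacle; the only subtle point worth stating carefully is the justification that the binder sequence has length zero precisely when the antecedent sequence has length zero. This follows from inspection of the cut rule, which is the sole source of $\lambda$-abstractions in normal forms of proof terms built by the type system, and which produces exactly as many binders as there are antecedents. With that observation in place, the theorem reduces to rereading the conclusion of Lemma~\ref{fst:lambda} in the degenerate case.
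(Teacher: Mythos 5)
Your proposal is correct and takes essentially the same route as the paper: the theorem is read off as the degenerate case of Lemma~\ref{fst:lambda} (Lemma~\ref{FO} in the appendix) with an empty antecedent list, so the normal form is either a constant or a first-order term with no leading binders. The only minor difference is bookkeeping: the paper closes the degenerate case by additionally citing subject reduction and strong normalization, whereas you justify the empty binder sequence by noting that the cut rule is the sole source of abstractions and introduces exactly one binder per antecedent, which is a legitimate (and arguably more self-contained) way to discharge the same point.
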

\begin{proof}
  By lemma \ref{FO}, subject reduction and strong normalization theorem. 
\end{proof}
\section{Proof of Theorem \ref{realI}}

\begin{theorem}
Given axioms $\Phi$, if $e: [\forall \underline{x}] . \underline{A}\Rightarrow B$ holds with $e$ in normal form, then $F(e : [\forall \underline{x}] . \underline{A}\Rightarrow B)$ holds for axioms $F(\Phi)$.
\end{theorem}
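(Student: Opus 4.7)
The plan is to proceed by induction on the structure of the normal proof term $e$, using Lemma \ref{fst:lambda} to split into two shapes: either $e = \kappa$ is an axiom constant, or $e = \lambda \underline{a}. n$ with $n$ a first-order normal proof term. The constant case is essentially immediate from the definition of $F$ on axioms: if $\kappa : \forall \underline{x}. A_1, \ldots, A_m \Rightarrow B$ is in $\Phi$ (possibly via an \textit{inst} followed by \textit{gen}), then $F(\Phi)$ contains $\kappa : \forall \underline{x}. \forall \underline{y}. A_1[y_1], \ldots, A_m[y_m] \Rightarrow B[f_\kappa(y_1, \ldots, y_m)]$ by construction, and this coincides with $F(\kappa : \forall \underline{x}. \underline{A} \Rightarrow B)$ up to suitable instantiation and generalization.

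For the abstraction case, the crux is an auxiliary claim proved by induction on the size of the first-order normal proof term $n$: if $\lambda \underline{a}. n : \underline{A} \Rightarrow B$ is derivable from $\Phi$, and $\phi = [y_1/a_1, \ldots, y_m/a_m]$ for fresh $\underline{y}$, then $\lambda \underline{a}. n : A_1[y_1], \ldots, A_m[y_m] \Rightarrow B[\interp{n}_\phi]$ is derivable from $F(\Phi)$. Since $n$ is first-order normal, it has shape $h\, p_1 \cdots p_k$ with each $p_i$ first-order normal, where $h$ is either one of the bound variables $a_i$ or an axiom constant. If $h = a_i$, then $k = 0$ (variables have atomic type), $B \equiv A_i$, and $\interp{a_i}_\phi = y_i$, so the needed arrow type is obtained by projecting $a_i$ through the cut. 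If $h = \kappa$ with $\kappa : \forall \underline{x}. D_1, \ldots, D_k \Rightarrow B$ in $\Phi$ (after matching instantiation), I would apply the IH to each $p_i$ to get $p_i$'s realizer-annotated type in $F(\Phi)$, then instantiate the transformed axiom $F(\kappa)$ at the realizers $\interp{p_1}_\phi, \ldots, \interp{p_k}_\phi$, and combine via the cut rule. By Definition \ref{fst:rep}, the resulting realizer for the whole application is $f_\kappa(\interp{p_1}_\phi, \ldots, \interp{p_k}_\phi) = \interp{n}_\phi$, as required.

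The final step is to close the outer quantifier prefix: the new $\forall \underline{y}$ added by $F$ is handled by the \textit{gen} rule, since the $y_i$ are fresh and not mentioned elsewhere; the original $\forall \underline{x}$ is recovered the same way.

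The main obstacle is bookkeeping rather than combinatorial. Because the system in Definition \ref{proofsystem} carries no explicit typing context, the inductive claim has to be phrased through the cut rule's binding mechanism: one must argue that a derivation of $\lambda \underline{a}. n : \underline{A} \Rightarrow B$ can be inverted/decomposed into subderivations assigning appropriate atomic types to the $p_i$'s, and this inversion has to cooperate with renaming so that the introduced $\underline{y}$ variables do not clash with the universal variables $\underline{x}$ or with variables fresh at deeper recursive calls. The idempotency of the realizer substitution $\phi$ and the freshness stipulations in Definition \ref{real} should make this straightforward provided one is explicit about the renaming convention.
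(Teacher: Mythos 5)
There is a genuine gap, and it sits exactly where you wave it off as ``bookkeeping'': the inversion of a typing of $\lambda \underline{a}.n$ along the head--spine structure of $n$. The system of Definition~\ref{proofsystem} has no typing context and no identity/variable rule; \emph{cut} is the only term-forming rule, and it types applications only at premise-free formulas while abstractions arise only as the fixed template $\lambda\underline{a}.\lambda\underline{b}.(e_2\,\underline{b})(e_1\,\underline{a})$. Consequently the derivation of $\lambda\underline{a}.n:\underline{A}\Rightarrow B$ need not mirror the spine decomposition $n\equiv \kappa\,p_1\cdots p_k$, and the subterms $p_i$ (which in general contain the bound variables $\underline{a}$ free) are not typable objects at all in this context-free calculus. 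Your recursive step ``apply the IH to each $p_i$'' therefore has no well-formed judgement to instantiate; in the leaf case $p_i\equiv a_j$ you would need a projection judgement of the shape $\lambda\underline{a}.a_j:A_1[y_1],\ldots,A_m[y_m]\Rightarrow A_j[y_j]$ over $F(\Phi)$, and no rule of the system can derive such an identity-like sequent (cut always produces an application in the body, and \emph{gen}/\emph{inst} do not change the proof term). So the auxiliary claim, as phrased by induction on the size of $n$, cannot even be stated for the recursive calls without first adding a context or reformulating the claim about derivations --- which is precisely the non-trivial content you would have to supply.

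The paper avoids this entirely by inducting on the typing \emph{derivation} of $e:[\forall\underline{x}].\underline{A}\Rightarrow B$ rather than on the term: in the cut case the two premises hand you well-typed components $e_1:\underline{A}\Rightarrow D$ and $e_2:\underline{B},D\Rightarrow C$ outright, one case-splits on their normal forms via Lemma~\ref{fst:lambda} ($\kappa$ or $\lambda\underline{a}.n$), uses \emph{gen}/\emph{inst} on the transformed types to align the realizer argument of $D$, and closes the case with the substitution property of the representation function, $\interp{[n_1/d]n_2}_{[\underline{y}/\underline{a},\underline{z}/\underline{b}]}=\interp{n_2}_{[\underline{z}/\underline{b},\,\interp{n_1}_{[\underline{y}/\underline{a}]}/d]}$; bound proof variables never need a typing of their own because they are introduced by the cut template and handled by the valuation $\phi$ in Definition~\ref{fst:rep}. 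If you want to keep a term-directed argument, you would first have to prove the generation/inversion lemma you are assuming (and deal with the fact that cut only types spines up to $\beta$-conversion), which amounts to redoing the derivation induction anyway.
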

\begin{proof}
  By induction on the derivation of $e: [\forall \underline{x}] . \underline{A}\Rightarrow B$.
  \begin{itemize}
  \item Base Case: 

   \

    \begin{tabular}{l}
     \infer{\kappa : \forall \underline{x} . \underline{A}\Rightarrow B}{}      

    \end{tabular}
   
\

 In this case, we know that $F(\kappa : \forall \underline{x} . \underline{A}\Rightarrow B) = \kappa : \forall \underline{x}. \forall \underline{y} . A_1[y_1], ..., A_n[y_n] \Rightarrow B[f_\kappa (y_1,..., y_n)] \in F(\Phi)$. 

\item Step Case: 

\

\begin{tabular}{l}
\infer[cut]{\lambda \underline{a} . \lambda \underline{b} . (e_2\ \underline{b})\ (e_1\ \underline{a}) : \underline{A}, \underline{B} \Rightarrow C}{e_1 : \underline{A} \Rightarrow D & e_2 : \underline{B}, D \Rightarrow C}
  
\end{tabular}

\

We know that the normal form of $e_1$ must be $\kappa_1$ or $\lambda \underline{a}. n_1$; the normal form of $e_1$ must be $\kappa_2$ or $\lambda \underline{b} d. n_2$, with $n_1, n_2$ are first order. 
\begin{itemize}
\item $e_1 \equiv \kappa_1, e_2 \equiv \kappa_2$. By IH, we know that $F(\kappa_1 : \underline{A} \Rightarrow D) = \kappa_1 : A_1[y_1],..., A_1[y_1] \Rightarrow D[f_{\kappa_1}(y_1,..., y_n)] $ and 
$F(\kappa_2 : \underline{B}, D \Rightarrow C) = \kappa_2 : B_1[z_1],..., B_m[z_m], D[y] \Rightarrow C[f_{\kappa_2}(z_1,..., z_m, y)]$ hold. So by \textit{gen} and \textit{inst}, we have

\noindent $\kappa_2 : B_1[z_1],..., B_m[z_m], D[f_{\kappa_1}(y_1,..., y_n)] \Rightarrow C[f_{\kappa_2}(\underline{z}, f_{\kappa_1}(\underline{y}))]$. 

\noindent Then by the cut rule, we have 

\noindent $\lambda \underline{a} . \lambda \underline{b} . \kappa_2 \underline{b} (\kappa_1 \underline{a}) : A_1[y_1],..., A_1[y_1],  B_1[z_1],..., B_m[z_m] \Rightarrow C[f_{\kappa_2}(\underline{z}, f_{\kappa_1}(\underline{y}))]$. We 
can see that $\interp{\kappa_2 \underline{b} (\kappa_1 \underline{a})}_{[\underline{y}/\underline{a}, \underline{z}/\underline{b}]} = f_{\kappa_2}(\underline{z}, f_{\kappa_1}(\underline{y}))$.

\item $e_1 \equiv \lambda \underline{a}. n_1, e_2 \equiv \lambda \underline{b}d. n_2$. By IH, we know that $F(\lambda \underline{a}. n_1 : \underline{A} \Rightarrow D) = \lambda \underline{a}. n_1 : A_1[y_1],..., A_1[y_1] \Rightarrow D[\interp{n_1}_{[\underline{y}/\underline{a}]}]$ and 
$F(\lambda \underline{b} d. n_2 : \underline{B}, D \Rightarrow C) = \lambda \underline{b} d. n_2 : B_1[z_1],..., B_m[z_m], D[y] \Rightarrow C[\interp{n_2}_{[\underline{z}/ \underline{b}, y/d]}]$ hold. So by \textit{gen} and \textit{inst}, we have

\noindent $\lambda \underline{b} d. n_2 : B_1[z_1],..., B_m[z_m], D[\interp{n_1}_{[\underline{y}/\underline{a}]}] \Rightarrow C[\interp{n_2}_{[\underline{z}/\underline{b}, \interp{n_1}_{[\underline{y}/\underline{a}]}/d]}]$. 

\noindent Then by the cut rule and beta reductions, we have $\lambda \underline{a} . \lambda \underline{b} . ([n_1/d]n_2) : A_1[y_1],..., A_1[y_1],  B_1[z_1],..., B_m[z_m] \Rightarrow C[\interp{n_2}_{[\underline{z}/\underline{b}, \interp{n_1}_{[\underline{y}/\underline{a}]}/d]}]$. We 
know that $\interp{[n_1/d]n_2}_{[\underline{y}/\underline{a}, \underline{z}/\underline{b}]} = \interp{n_2}_{[\underline{z}/\underline{b}, \interp{n_1}_{[\underline{y}/\underline{a}]}/d]}$. 
\item The other cases are handle similarly. 
\end{itemize}

\item Step Case:

\

  \begin{tabular}{l}
\infer[inst]{\lambda \underline{a} . n : [\underline{t}/\underline{x}]\underline{A} \Rightarrow [\underline{t}/\underline{x}]B }{\lambda \underline{a} . n : \forall \underline{x} . \underline{A} \Rightarrow B}
\end{tabular}

\

By IH, we know that $F(\lambda \underline{a} . n : \forall \underline{x} . \underline{A} \Rightarrow B) = \lambda \underline{a} . n : \forall \underline{x}. \forall \underline{y} . A_1[y_1],..., A_n[y_n] \Rightarrow B[\interp{n}_{[\underline{y}/\underline{a}]}]$ holds for $F(\Phi)$. By \textit{Inst} rule, we instantiate $y_i$ with $y_i$,  we have 
$\lambda \underline{a} . n : [\underline{t}/\underline{x}] A_1[y_1],..., [\underline{t}/\underline{x}] A_n[y_n] \Rightarrow [\underline{t}/\underline{x}] B[\interp{n}_{[\underline{y}/\underline{a}]}]$

\item Step Case:

\

  \begin{tabular}{l}
\infer[gen]{e: \forall \underline{x} . F}{e : F}
\end{tabular}

\

This case is straightforwardly by IH. 
  \end{itemize}
\end{proof}

\section{Proof of Theorem \ref{record}}

\begin{lemma}
\label{realII}
  If $F(\Phi) \vdash \{A_1[y_1],..., A_n[y_n]\} \leadsto^*_{\gamma} \emptyset$, and $y_1,..., y_n$ are fresh, then there exists proofs $e_1 : \forall \underline{x} . \Rightarrow \gamma A_1[\gamma y_1],..., e_n : \forall \underline{x} . \Rightarrow \gamma A_n[\gamma y_n]$ with $\interp{e_i}_{\emptyset} = \gamma y_i $ given axioms $F(\Phi)$.  
\end{lemma}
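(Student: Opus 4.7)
The plan is to prove this by induction on the length of the $\leadsto$-reduction, mirroring the structure of Lemma~\ref{sound} but carrying along the additional invariant about the representation function $\interp{\cdot}_\emptyset$. The soundness machinery already handles the type-theoretic part; the new content is verifying that the term sitting in the extra ``proof witness'' argument produced by $F$ is precisely the representation of the proof term built during the reduction.

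For the base case, a one-step reduction $F(\Phi) \vdash \{A[y]\} \leadsto_{\kappa,\gamma} \emptyset$ must use some transformed fact $\kappa : \forall \underline{x}. \Rightarrow C[c_\kappa] \in F(\Phi)$ (with $c_\kappa = f_\kappa()$ of arity zero, since the original axiom had no body). Unification of $C[c_\kappa]$ with $A[y]$ forces $\gamma y = c_\kappa$, and by \textit{inst}/\textit{gen} we obtain $\kappa : \forall \underline{x}. \Rightarrow \gamma A[\gamma y]$; the representation check reduces to $\interp{\kappa}_\emptyset = f_\kappa() = c_\kappa = \gamma y$, which is immediate from Definition~\ref{fst:rep}.

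For the step case, suppose the first reduction selects $A_i[y_i]$ using a transformed clause
\[
\kappa : \forall \underline{x}\, \forall \underline{z}.\ B_1[z_1], \ldots, B_m[z_m] \Rightarrow C[f_\kappa(z_1,\ldots,z_m)],
\]
with $C[f_\kappa(\underline{z})] \sim_{\gamma_1} A_i[y_i]$, so in particular $\gamma_1 y_i = f_\kappa(z_1,\ldots,z_m)$ (the $z_j$ being fresh). The induction hypothesis, applied to the remaining reduction of length one less whose final state is $\gamma = \gamma' \cdot \gamma_1$, supplies for each $j \neq i$ a proof $e_j : \forall \underline{x}.\Rightarrow \gamma A_j[\gamma y_j]$ with $\interp{e_j}_\emptyset = \gamma y_j$, and for each $k$ a proof $p_k : \forall \underline{x}. \Rightarrow \gamma B_k[\gamma z_k]$ with $\interp{p_k}_\emptyset = \gamma z_k$. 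Instantiating $\kappa$ with $\gamma$ and then applying the \textit{cut} rule $m$ times gives $e_i := \kappa\, p_1 \cdots p_m$ of type $\Rightarrow \gamma A_i[\gamma y_i]$, using that $\gamma C = \gamma A_i$ and $\gamma f_\kappa(\underline{z}) = \gamma y_i$; a final \textit{gen} adds the quantifiers. The representation check is the key new computation:
\[
\interp{\kappa\, p_1 \cdots p_m}_\emptyset = f_\kappa(\interp{p_1}_\emptyset,\ldots,\interp{p_m}_\emptyset) = f_\kappa(\gamma z_1,\ldots,\gamma z_m) = \gamma f_\kappa(z_1,\ldots,z_m) = \gamma y_i,
\]
as desired. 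The idempotence argument from Lemma~\ref{sound} transfers $e_j$'s statement from $\gamma' \gamma_1 A_j$ to $\gamma A_j$ unchanged.

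The main obstacle I anticipate is bookkeeping the freshness conditions and showing that the unifier computed by $\sim_{\gamma_1}$ is forced to send $y_i$ to exactly $f_\kappa(z_1,\ldots,z_m)$ (rather than to some more general term), because this equality is what synchronizes the extra argument of the head with the representation of the constructed proof term. This relies on the structural shape guaranteed by Definition~\ref{real} — every transformed head has the form $D[f_\kappa(\underline{z})]$ and the $z_k$ are fresh and distinct — and on commuting the substitution $\gamma$ with the function symbol $f_\kappa$, neither of which is deep but both of which must be stated carefully to make the final chain of equalities go through.
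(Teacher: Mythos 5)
Your proposal is correct and follows essentially the same route as the paper's own proof: induction on the length of the $\leadsto$-reduction in the style of Lemma~\ref{sound}, with the unifier forcing $\gamma_1 y_i \equiv f_\kappa(z_1,\ldots,z_m)$ (by freshness of $y_i$ and the $z_k$), the proof term $e_i = \kappa\,p_1\cdots p_m$ built by \textit{inst} and repeated \textit{cut}, and the representation check $\interp{\kappa\,p_1\cdots p_m}_\emptyset = f_\kappa(\interp{p_1}_\emptyset,\ldots,\interp{p_m}_\emptyset) = \gamma f_\kappa(\underline{z}) = \gamma y_i$. The freshness/commutation bookkeeping you flag is exactly what the paper records via $\mathrm{dom}(\gamma_1) \cap \{z_1,\ldots,z_m, y_1,\ldots,y_{i-1},y_{i+1},\ldots,y_n\} = \emptyset$, so no gap remains.
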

\begin{proof}
  By induction on the length of the reduction. 
  \begin{itemize}
  \item Base Case. Suppose the length is one, namely, $F(\Phi) \vdash \{A[y]\} \leadsto_{\kappa, \gamma_1} \emptyset$. Thus there exists $(\kappa : \forall \underline{x} .  \Rightarrow C[f_\kappa]) \in F(\Phi)$(here $f_\kappa$ is a constant), such that $C[f_\kappa] \sim_{\gamma_1} A[y]$.  Thus $ \gamma_1 (C[f_\kappa]) \equiv \gamma_1 A[\gamma_1 y]$. So $\gamma_1 y \equiv f_\kappa$
and $\gamma_1 C \equiv \gamma_1 A$. We have $\kappa :\ \Rightarrow  \gamma_1 C[f_\kappa]$ by the \textit{inst} rule, thus $\kappa :\ \Rightarrow \gamma_1 A[\gamma_1 y]$, hence $\kappa : \forall \underline{x} . \Rightarrow \gamma_1 A[\gamma_1 y]$ by the \textit{gen} rule and $\interp{\kappa}_{\emptyset} = f_{\kappa}$.

  \item Step Case. Suppose $F(\Phi) \vdash \{A_1[y_1], ..., A_i[y_i],..., A_n[y_n]\} \leadsto_{\kappa, \gamma_1} $ 

$\{\gamma_1 A_1[y_1],...,  \gamma_1 B_1[z_1],...,  \gamma_1 B_m[z_m],..., \gamma_1 A_n[y_n]\} \leadsto^*_{\gamma} \emptyset$,

\noindent  where $\kappa : \forall \underline{x} . \forall \underline{z} . B_1[z_m],..., B_n[z_m] \Rightarrow C[f_\kappa(z_1,..., z_m)] \in F(\Phi)$,

\noindent and $C[f_\kappa(z_1,..., z_m)] \sim_{ \gamma_1} A_i[y_i]$. So we know $ \gamma_1 C[f_\kappa(z_1,..., z_m)] \equiv \gamma_1 A_i[\gamma_1 y_i]$,  $\gamma_1 y_i \equiv f_\kappa(z_1,..., z_m),  \gamma_1 C \equiv \gamma_1 A_i$ and 

\noindent $\mathrm{dom}(\gamma_1) \cap \{z_1,..., z_m, y_1,..,y_{i-1}, y_{i+1}, y_n\} = \emptyset$. By IH, we know that there exists proofs $e_1 : \forall \underline{x}. \Rightarrow \gamma \gamma_1 A_1[\gamma y_1],..., p_1 : \forall \underline{x}. \Rightarrow \gamma   \gamma_1 B_1[\gamma z_1],..., p_m : \forall \underline{x}. \Rightarrow \gamma   \gamma_1 B_m[\gamma z_m],..., e_n : \forall \underline{x} . \Rightarrow \gamma \gamma_1 A_n[\gamma y_n]$ and $\interp{e_1}_{\emptyset} = \gamma y_1, ..., \interp{p_1}_\emptyset = \gamma z_1, ..., \interp{e_n}_{\emptyset} = \gamma y_n $ . We can construct a proof $e_i = \kappa \ p_1\ ... p_m$ with $e_i : \forall \underline{x} . \Rightarrow \gamma \gamma_1 A_i[\gamma \gamma_1 y_i]$, by first use the \textit{inst} to instantiate the quantifiers of $\kappa$, then applying the cut rule $m$ times. Moreover, we have $\interp{\kappa \ p_1\ ... p_m}_{\emptyset} = f_\kappa(\interp{p_1}_\emptyset,...,\interp{p_m}_\emptyset) = \gamma (f_\kappa (z_1,..., z_m)) = \gamma \gamma_1 y_i$. 
     \end{itemize}

\end{proof}

\begin{theorem}
  Given axioms $\Phi$, suppose $F(\Phi) \vdash \{A[y]\} \leadsto^*_{\gamma} \emptyset$. We have $p : \forall \underline{x} . \Rightarrow \gamma A[\gamma y]$ where $p$ is in normal form and $\interp{p}_{\emptyset} = \gamma y$. 
\end{theorem}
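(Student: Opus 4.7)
The plan is to prove this by generalizing to a multiset of queries and then inducting on the length of the unification reduction, essentially following the template of Lemma~\ref{sound}, but strengthened to also track that the last argument of each atom records the proof term via $\interp{\cdot}_{\emptyset}$. Concretely, I would first state and prove the following generalization: if $F(\Phi) \vdash \{A_1[y_1], \ldots, A_n[y_n]\} \leadsto^*_{\gamma} \emptyset$ with the $y_i$ fresh, then there exist normal-form proofs $p_i : \forall \underline{x}.\Rightarrow \gamma A_i[\gamma y_i]$ with $\interp{p_i}_{\emptyset} = \gamma y_i$. The theorem follows by taking $n = 1$.

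For the base case, a one-step reduction $F(\Phi) \vdash \{A[y]\} \leadsto_{\kappa, \gamma} \emptyset$ arises from an axiom $\kappa : \forall \underline{x}.\Rightarrow C[f_\kappa] \in F(\Phi)$ (the realizability transformation forces the extra argument to be the constant $f_\kappa$ when the body is empty) such that $C[f_\kappa] \sim_{\gamma} A[y]$. Unification thus forces $\gamma y \equiv f_\kappa$ and $\gamma C \equiv \gamma A$, so applying \emph{inst} and \emph{gen} to $\kappa$ gives $\kappa : \forall \underline{x}.\Rightarrow \gamma A[\gamma y]$, and $\interp{\kappa}_{\emptyset} = f_\kappa = \gamma y$ by Definition~\ref{fst:rep}.

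For the step case, the reduction decomposes as
\[
F(\Phi) \vdash \{A_1[y_1],\ldots,A_i[y_i],\ldots,A_n[y_n]\} \leadsto_{\kappa, \gamma_1} \{\gamma_1 A_1[y_1], \ldots, \gamma_1 B_1[z_1],\ldots,\gamma_1 B_m[z_m],\ldots,\gamma_1 A_n[y_n]\} \leadsto^*_{\gamma} \emptyset,
\]
where $\kappa : \forall \underline{x}.\forall \underline{z}. B_1[z_1],\ldots,B_m[z_m] \Rightarrow C[f_\kappa(z_1,\ldots,z_m)]$ is the transformed axiom used. Because the $z_j$ and the other $y_k$ are fresh, they lie outside $\mathrm{dom}(\gamma_1)$, and unification pins down $\gamma_1 y_i \equiv f_\kappa(z_1,\ldots,z_m)$. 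By the induction hypothesis applied to the remaining reduction (of strictly shorter length), we get normal proofs $p_j : \forall \underline{x}.\Rightarrow \gamma\gamma_1 B_j[\gamma z_j]$ and $e_k : \forall \underline{x}.\Rightarrow \gamma\gamma_1 A_k[\gamma y_k]$ for $k \neq i$, each satisfying $\interp{p_j}_\emptyset = \gamma z_j$ and $\interp{e_k}_\emptyset = \gamma y_k$. Building $e_i := \kappa\ p_1\ \cdots\ p_m$ via \emph{inst} on $\underline{x},\underline{z}$ followed by $m$ applications of the cut rule yields $e_i : \forall \underline{x}.\Rightarrow \gamma\gamma_1 A_i[\gamma\gamma_1 y_i]$, and the representation function computes $\interp{\kappa\ p_1 \cdots p_m}_\emptyset = f_\kappa(\gamma z_1, \ldots, \gamma z_m) = \gamma f_\kappa(z_1,\ldots,z_m) = \gamma\gamma_1 y_i$, matching the required invariant.

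The main obstacle is bookkeeping around substitutions and freshness: one must argue that the $z_j$ produced by each step of unification are disjoint from the domain of the accumulated substitution, so that $\gamma z_j$ is a well-defined fresh placeholder that gets instantiated only by later reduction steps. This relies crucially on (i) the shape $C[f_\kappa(z_1,\ldots,z_m)]$ of transformed heads, which guarantees that unification with $A_i[y_i]$ forces $y_i$ to absorb the $f_\kappa$-structure rather than the $z_j$ being identified with existing variables, and (ii) idempotence of the accumulated unifier (as in the proof of Lemma~\ref{sound}), which ensures that $\gamma'\gamma_1 A_k[\gamma' y_k] \equiv \gamma' A_k[\gamma' y_k]$ and similarly for the proof-term equations, so that the IH instantiations compose cleanly.
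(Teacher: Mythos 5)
Your proposal is correct and follows essentially the same route as the paper: the paper also proves the theorem by generalizing to a multiset $\{A_1[y_1],\ldots,A_n[y_n]\}$ with fresh $y_i$ (its Lemma~\ref{realII}) and inducting on the length of the reduction, with the same base case (unification forces $\gamma y \equiv f_\kappa$) and the same step case (building $e_i = \kappa\ p_1 \cdots p_m$ via \emph{inst} and repeated \emph{cut}, with $\interp{\kappa\ p_1 \cdots p_m}_{\emptyset} = f_\kappa(\gamma z_1,\ldots,\gamma z_m) = \gamma\gamma_1 y_i$). The freshness and idempotence bookkeeping you flag is exactly what the paper relies on implicitly, so there is no substantive difference.
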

\begin{proof}
By Lemma \ref{realII}.
\end{proof}

\section{Proof of Lemma \ref{sc:unif}}
\begin{lemma}
If $\Phi \vdash \{A_1,..., A_n\} \leadsto^* \emptyset$, then $F(\Phi) \vdash \{A_1[y_1],..., A_n[y_n]\} \leadsto^* \emptyset$ with $y_i$ fresh. 
\end{lemma}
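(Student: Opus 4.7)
\section*{Proof proposal for Lemma (sc:unif, $\Rightarrow$ direction)}

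The plan is to proceed by induction on the length of the LP-Unif reduction $\Phi \vdash \{A_1,\ldots,A_n\} \leadsto^* \emptyset$, closely mirroring the structure of the soundness proof (Lemma~\ref{sound}) but tracking how the extra proof-recording arguments interact with unification. The central idea is that the fresh variables $y_1,\ldots,y_n$ attached to the queries in $F(\Phi)$ act as unconstrained placeholders: whenever the original unification succeeds with some substitution $\gamma$, we extend $\gamma$ by assigning to each $y_i$ the first-order term representation of the proof that the reduction is implicitly building for $A_i$.

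For the base case, suppose $\Phi \vdash \{A_1\} \leadsto_{\kappa,\gamma} \emptyset$, so there is an axiom $\kappa : \forall \underline{x}. \Rightarrow C \in \Phi$ with $C \sim_\gamma A_1$. Its transform is $F(\kappa) : \forall \underline{x}. \Rightarrow C[f_\kappa]$, a nullary constant head. Since $y_1$ is fresh and does not occur in $C$ or $A_1$, the substitution $\gamma' := \gamma \cup [f_\kappa / y_1]$ is well-defined and gives $C[f_\kappa] \sim_{\gamma'} A_1[y_1]$, yielding $F(\Phi) \vdash \{A_1[y_1]\} \leadsto_{\kappa,\gamma'} \emptyset$.

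For the step case, suppose the reduction begins with $\Phi \vdash \{A_1,\ldots,A_i,\ldots,A_n\} \leadsto_{\kappa,\gamma} \{\gamma A_1,\ldots,\gamma B_1,\ldots,\gamma B_m,\ldots,\gamma A_n\} \leadsto^* \emptyset$, using $\kappa : \forall \underline{x}.\, \underline{B} \Rightarrow C \in \Phi$ with $C \sim_\gamma A_i$. The corresponding transformed rule is $F(\kappa) : \forall \underline{x}.\forall \underline{z}.\, B_1[z_1],\ldots,B_m[z_m] \Rightarrow C[f_\kappa(z_1,\ldots,z_m)]$, with $z_1,\ldots,z_m$ fresh. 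Choosing the $z_j$ fresh with respect to everything in the original reduction, and using the fact that $y_i$ is fresh and does not occur in the other $A_j$ or in the codomain of $\gamma$, we form $\gamma' := \gamma \cup [f_\kappa(z_1,\ldots,z_m)/y_i]$. Then $C[f_\kappa(z_1,\ldots,z_m)] \sim_{\gamma'} A_i[y_i]$, and one step of LP-Unif in $F(\Phi)$ produces $\{\gamma' A_1[y_1],\ldots,\gamma' B_1[z_1],\ldots,\gamma' B_m[z_m],\ldots,\gamma' A_n[y_n]\}$. Since $\gamma'$ fixes every $y_j$ for $j \neq i$ and the $z_j$ are fresh, this multiset has exactly the shape required to apply the induction hypothesis to the tail reduction $\Phi \vdash \{\gamma A_1,\ldots,\gamma B_1,\ldots,\gamma B_m,\ldots,\gamma A_n\} \leadsto^* \emptyset$, giving the desired continuation in $F(\Phi)$.

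The main obstacle is the bookkeeping around freshness: one must ensure that the $z_j$ introduced by the transformed head, together with the $y_j$ attached to the queries, remain disjoint from $\mathrm{dom}(\gamma)$ and $\mathrm{codom}(\gamma)$ throughout the induction, so that (i) the extension $\gamma'$ really is a well-defined unifier, and (ii) the IH can be invoked with genuinely fresh placeholders on the residual multiset. This is handled by the implicit renaming convention on quantified variables in Definition~\ref{red} together with choosing the $z_j$ fresh at each step; once this is in place, the rest is a routine transcription of the original unification reduction into the transformed program.
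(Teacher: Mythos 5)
Your proof is correct and follows essentially the same route as the paper's: induction on the length of the reduction, extending the unifier $\gamma$ by $[f_\kappa/y_1]$ in the base case and $[f_\kappa(z_1,\ldots,z_m)/y_i]$ in the step case, and using freshness of the attached variables to see that the transformed one-step resolvent coincides with the original resolvent decorated with fresh placeholders, so the induction hypothesis applies.
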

\begin{proof}
  By induction on the length of reduction.
  \begin{itemize}
     \item Base Case. Suppose the length is one, namely, $\Phi \vdash \{A\} \leadsto_{\kappa, \gamma_1} \emptyset$. Then there exists $(\kappa : \forall \underline{x} .\  \Rightarrow C) \in \Phi$ such that $C \sim_{\gamma_1} A$.  Thus $\kappa : \forall \underline{x}. \ \Rightarrow C[f_\kappa] \in F(\Phi)$ and $(C[f_\kappa]) \sim_{\gamma_1[f_\kappa/y]} A[y]$. So $F(\Phi) \vdash A[y]\leadsto \emptyset$.
     \item Step Case. Suppose 
       
       \noindent $\Phi \vdash \{A_1, ..., A_i,..., A_n\} \leadsto_{\kappa, \gamma_1} \{\gamma_1 A_1,...,  \gamma_1 B_1,...,  \gamma_1 B_m,..., \gamma_1 A_n\} \leadsto^*_{\gamma} \emptyset$,

\noindent  where $\kappa : \forall \underline{x}. B_1,..., B_m \Rightarrow C \in \Phi$, $C \sim_{ \gamma_1} A_i$. So we know that 

\noindent $\kappa : \forall \underline{x}. B_1[z_1],..., B_m[z_m] \Rightarrow C[f_\kappa(\underline{z})] \in F(\Phi)$ and $C[f_\kappa(\underline{z})] \sim_{\gamma_1[f_\kappa(\underline{z})/y_i]} A_i[y_i]$. Thus $F(\Phi) \vdash \{A_1[y_1], ..., A_i[y_i],..., A_n[y_n]\} \leadsto_{\kappa, \gamma_1[f_\kappa(\underline{z})/y_i]} $ 

{\scriptsize
 $\{\gamma_1[f_\kappa(\underline{z})/y_i] A_1[y_1],...,  \gamma_1[f_\kappa(\underline{z})/y_i] B_1[z_1],...,  \gamma_1[f_\kappa(\underline{z})/y_i] B_m[z_m],..., \gamma_1[f_\kappa(\underline{z})/y_i] A_n[y_n]\}$
}
\noindent $ \equiv \{\gamma_1 A_1[y_1],...,  \gamma_1 B_1[z_1],...,  \gamma_1 B_m[z_m],..., \gamma_1 A_n[y_n]\}$. By IH, 

\noindent $F(\Phi) \vdash \{\gamma_1 A_1[y_1],...,  \gamma_1 B_1[z_1],...,  \gamma_1 B_m[z_m],..., \gamma_1 A_n[y_n]\} \leadsto^* \emptyset$.

  \end{itemize}
\end{proof}
\begin{lemma}
If $F(\Phi) \vdash \{A_1[y_1],..., A_n[y_n]\} \leadsto^* \emptyset$, then $\Phi \vdash \{A_1,..., A_n\} \leadsto^* \emptyset$. 
\end{lemma}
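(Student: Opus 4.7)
The plan is to mirror the forward-direction lemma, proceeding by induction on the length of the reduction $F(\Phi) \vdash \{A_1[y_1],\ldots,A_n[y_n]\} \leadsto^* \emptyset$, and to \emph{project} each step in $F(\Phi)$ down to a step in $\Phi$ by discarding the fresh evidence arguments. The correctness of this projection rests on the shape of the transformed axioms: every head produced by $F$ has exactly the form $C[f_\kappa(z_1,\ldots,z_m)]$ where $f_\kappa$ is a fresh function symbol and the $z_i$ are fresh variables disjoint from the term variables of $\Phi$.

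For the base case, a single step $F(\Phi) \vdash \{A[y]\} \leadsto_{\kappa,\gamma} \emptyset$ must use a transformed fact $\kappa : \forall \underline{x}. \Rightarrow C[f_\kappa] \in F(\Phi)$, coming from $\kappa : \forall \underline{x}. \Rightarrow C \in \Phi$. By freshness, $y \notin \mathrm{FV}(C)$ and $f_\kappa$ does not occur in $A$, so the unifier splits as $\gamma = \gamma_0 \cup [f_\kappa/y]$, where $\gamma_0$ is the underlying most general unifier of $C$ and $A$. Hence $\Phi \vdash \{A\} \leadsto_{\kappa,\gamma_0} \emptyset$. For the inductive step, a first reduction using the transformed rule $\kappa : \forall \underline{x}\underline{z}.\, B_1[z_1],\ldots,B_m[z_m] \Rightarrow C[f_\kappa(\underline{z})]$ against $A_i[y_i]$ has unifier $\gamma_1$ satisfying $C[f_\kappa(\underline{z})] \sim_{\gamma_1} A_i[y_i]$. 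Since $y_i$ and the $\underline{z}$ are fresh and $f_\kappa$ is fresh, the unification algorithm decomposes this into a unifier $\gamma_1'$ for $C \sim A_i$ together with the binding $[f_\kappa(\underline{z})/y_i]$. This gives $\Phi \vdash \{A_1,\ldots,A_n\} \leadsto_{\kappa,\gamma_1'} \{\gamma_1' A_1,\ldots,\gamma_1' B_1,\ldots,\gamma_1' B_m,\ldots,\gamma_1' A_n\}$. The tail of the reduction in $F(\Phi)$ still begins with each atomic formula carrying a fresh evidence variable (the original $y_j$ for $j\neq i$, and the $z_k$ for the new subgoals), so the induction hypothesis applies and yields $\Phi \vdash \{\gamma_1' A_1,\ldots,\gamma_1' B_1,\ldots,\gamma_1' B_m,\ldots,\gamma_1' A_n\} \leadsto^* \emptyset$.

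The main obstacle is to make the decomposition $\gamma_1 = \gamma_1' \cup [f_\kappa(\underline{z})/y_i]$ rigorous and to maintain the invariant, stable under one reduction step, that every atomic formula in the current state carries a distinct fresh evidence variable that does not occur elsewhere. I would isolate this in a short auxiliary lemma stating: whenever $C[f_\kappa(\underline{z})] \sim_{\gamma} A[y]$ with $y$ and $\underline{z}$ fresh for $C$, $A$, and $f_\kappa$ fresh for $A$, then $\gamma$ factors as $\gamma_0 \cup [f_\kappa(\underline{z})/y]$ with $C \sim_{\gamma_0} A$. This is a direct case analysis on the unification clauses at the outermost symbol. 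Once this auxiliary lemma is in place the induction is routine and symmetric to the previous lemma, so the two together establish Theorem~\ref{preservation}.
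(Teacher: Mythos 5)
Your proposal is correct and follows essentially the same route as the paper's proof: induction on the length of the reduction, factoring the unifier of $C[f_\kappa(\underline{z})]$ and $A_i[y_i]$ into the unifier of $C$ and $A_i$ plus the binding $[f_\kappa(\underline{z})/y_i]$ (the paper writes this as $\gamma_1 - [f_\kappa(\underline{z})/y_i]$), and then applying the induction hypothesis to the projected state. Your explicit auxiliary decomposition lemma and the freshness invariant are exactly the steps the paper leaves implicit, so nothing further is needed.
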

\begin{proof}
  By induction on the length of reduction.
  \begin{itemize}
     \item Base Case. Suppose the length is one, namely, $F(\Phi) \vdash \{A[y]\} \leadsto_{\kappa, \gamma_1} \emptyset$. Thus there exists $(\kappa : \forall \underline{x} .  \Rightarrow C[f_\kappa]) \in F(\Phi)$ such that $C[f_\kappa] \sim_{\gamma_1} A[y]$.  Thus $  C \sim_{\gamma_1-[f_\kappa/y]} A$. So $\Phi \vdash A\leadsto \emptyset$.
     \item Step Case. Suppose $F(\Phi) \vdash \{A_1[y_1], ..., A_i[y_i],..., A_n[y_n]\} \leadsto_{\kappa, \gamma_1} $ 

$\{\gamma_1 A_1[y_1],...,  \gamma_1 B_1[z_1],...,  \gamma_1 B_m[z_m],..., \gamma_1 A_n[y_n]\} \leadsto^*_{\gamma} \emptyset$,

\noindent  where $\kappa : \forall \underline{x} . \forall \underline{z} . B_1[z_m],..., B_m[z_m] \Rightarrow C[f_\kappa(z_1,..., z_m)] \in F(\Phi)$,

\noindent and $C[f_\kappa(z_1,..., z_m)] \sim_{ \gamma_1} A_i[y_i]$. So we know $C \sim_{\gamma_1-[f_\kappa(\underline{z})/y_i]} A_i$. Let $\gamma = \gamma_1-[f_\kappa(\underline{z})/y_i]$. We have

\noindent $\Phi \vdash \{A_1,..., A_i,..., A_n\} \leadsto \{\gamma A_1,..., \gamma B_1,..., \gamma B_m, ..., \gamma A_n\}$

$\equiv \{\gamma_1 A_1,..., \gamma_1 B_1,..., \gamma_1 B_m, ..., \gamma_1 A_n\}$. By IH,
we know

\noindent $\Phi \vdash \{\gamma_1 A_1,..., \gamma_1 B_1,..., \gamma_1 B_m, ..., \gamma_1 A_n\} \leadsto^* \emptyset$.

  \end{itemize}
  
\end{proof}

\section{Proof of Theorem \ref{ortho:equiv}}

\begin{lemma}
  \label{tm-to-unif}
If $\Phi \vdash \{D_1,..., D_i,..., D_n\} \to_{\kappa, \gamma} \{D_1,.., \sigma E_1,..., \sigma E_m,..., D_n\}$, with $\kappa : \forall \underline{x}. \underline{E} \Rightarrow C \in \Phi$ and $C \mapsto_\sigma D_i$ for any $\gamma$, then $\Phi \vdash \{D_1,..., D_i, ..., D_n\} \leadsto_{\kappa, \gamma} $

\noindent $\{D_1,.., \sigma E_1,..., \sigma E_m,..., D_n\}$. 
\end{lemma}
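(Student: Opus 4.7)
The plan is to show that, under the variable-freshness assumption implicit in the renaming convention on $\kappa$'s quantifiers, the term-matching witness $\sigma$ is itself a unifier of $C$ and $D_i$, and is moreover precisely the substitution produced by the unification algorithm in Definition~\ref{red}. Once that is established, the unification reduction rule fires and produces a multiset that is identical to the term-matching result.

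First I would prove the auxiliary statement: if $C \mapsto_\sigma D_i$ and $\mathrm{FV}(C) \cap \mathrm{FV}(D_i) = \emptyset$, then $C \sim_\sigma D_i$. This is done by induction on the matching derivation. In the variable base case $x \mapsto_{[t/x]} t$, variable disjointness gives $x \notin \mathrm{FV}(t)$, so the side condition of the corresponding unification rule $x \sim_{[t/x]} t$ is satisfied. In the composite cases $P(t_1,\dots,t_n) \mapsto_{\sigma_1 \cup \dots \cup \sigma_n} P(t_1',\dots,t_n')$ and similarly for $f(\dots)$, I would run a secondary induction on the index $i$ showing that the sequential composition $\gamma_{i-1}\cdot\dots\cdot\gamma_0$ used in the unification rule leaves $t_i$ and $t_i'$ unchanged, because each $\gamma_j$ substitutes only variables of $C$ while $t_i'$ lives in $D_i$ and the matching substitutions on different $t_j$ agree on any shared domain (otherwise $\sigma_1 \cup \dots \cup \sigma_n$ would be undefined). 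Consequently the composed unifier coincides with $\sigma$.

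Next, with $C \sim_\sigma D_i$ in hand, the unification reduction rule fires, giving
\[
\Phi \vdash \{D_1,\dots,D_i,\dots,D_n\} \leadsto_{\kappa,\sigma \cdot \gamma'} \{\sigma D_1,\dots,\sigma E_1,\dots,\sigma E_m,\dots,\sigma D_n\}
\]
for any $\gamma'$. Since $\mathrm{dom}(\sigma) \subseteq \underline{x}$ (matching only substitutes the variables of $C$) and the renaming convention keeps $\underline{x}$ disjoint from $\mathrm{FV}(D_j)$ for every $j$, we have $\sigma D_j \equiv D_j$ for all $j$ (including $j=i$, using $\sigma C \equiv D_i$ and disjointness). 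Hence the right-hand multiset simplifies to $\{D_1,\dots,\sigma E_1,\dots,\sigma E_m,\dots,D_n\}$, matching the conclusion of the lemma exactly. Finally, since the second subscript $\gamma$ in the reduction rules is universally quantified, I can pick $\gamma'$ so that $\sigma \cdot \gamma'$ agrees with the prescribed $\gamma$ from the premise.

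The main obstacle is the bookkeeping in the first step: matching uses a parallel union of substitutions while unification uses sequential composition with the accumulator applied to subsequent subterms. Showing these two processes produce the same substitution requires the disjointness invariant to be preserved along the recursion, and relies on the observation that matching substitutions, being supported on $C$-variables only, commute with projection onto $D_i$-subterms. Once this invariant is set up, the rest of the argument is immediate.
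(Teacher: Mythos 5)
Your proposal is correct and follows essentially the same route as the paper's proof: observe that the matcher $\sigma$ (whose domain consists only of the freshly renamed variables of $C$) serves as the unifier, fire the unification reduction, and use disjointness of $\mathrm{dom}(\sigma)$ from the variables of the $D_j$ to conclude $\sigma D_j \equiv D_j$ and recover the same multiset and state. The only difference is that you spell out, by induction on the matching derivation, that the algorithmic unifier coincides with $\sigma$, a step the paper leaves implicit.
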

\begin{proof}
  Since for $\Phi \vdash \{D_1,..., D_i,..., D_n\} \to_{\kappa, \gamma} \{D_1,.., \sigma E_1,..., \sigma E_m,..., D_n\}$, with $\kappa : \forall \underline{x}. \underline{E} \Rightarrow C \in \Phi$ and $C \mapsto_\sigma D_i$, we have $\Phi \vdash \{D_1,..., D_i, ..., D_n\} \leadsto_{\kappa, \sigma \cdot \gamma} \{\sigma D_1,.., \sigma E_1,..., \sigma E_m,..., \sigma D_n\}$. But $\mathrm{dom}(\sigma) \in \mathrm{FV}(C)$, thus we have
  
  \noindent $\Phi \vdash \{D_1,..., D_i, ..., D_n\} \leadsto_{\kappa, \gamma} \{D_1,.., \sigma E_1,..., \sigma E_m,..., D_n\}$. 

\end{proof}

\begin{lemma}\label{struct-to-unif}
  \
  
\noindent Given $\Phi$ is non-overlapping, if $\Phi \vdash \{A_1,..., A_n\}  (\hookrightarrow_{\kappa, \gamma} \cdot \to^\nu_\gamma) \{C_1, ..., C_m\}$, then $\Phi \vdash \{A_1, ..., A_n\} \leadsto^*_\gamma \{C_1, ..., C_m\}$. 
\end{lemma}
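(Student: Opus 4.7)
The plan is to reduce the claim to Lemma~\ref{tm-to-unif} after rearranging the $\to^\nu_\gamma$-sequence so that the $\kappa$-reduction of the atom $\gamma A_i$ touched by $\hookrightarrow_{\kappa,\gamma}$ appears as its first step.

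First I would observe that, after $\hookrightarrow_{\kappa,\gamma}$, the multiset $S_0 = \{\gamma A_1, \ldots, \gamma A_n\}$ cannot be in $\to$-normal form: since $\mathrm{FV}(C)$ consists of the fresh quantified variables $\underline{x}$ of $\kappa$ (disjoint from $\mathrm{FV}(A_i)$) and $\gamma C \equiv \gamma A_i$, the restriction $\sigma := \gamma|_{\mathrm{FV}(C)}$ is a genuine term-matcher witnessing $C \mapsto_\sigma \gamma A_i$. Moreover, by non-overlapping, $\kappa$ is the \emph{only} rule whose head can match $\gamma A_i$: any other rule with head $C'$ matching $\gamma A_i$ would exhibit $\gamma A_i$ as a common instance of $C$ and $C'$, contradicting the hypothesis.

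Next I would argue that $\to$-steps on distinct atoms commute. Each step acts on a single atom with a matcher whose domain lies in the fresh quantified variables of the chosen rule's head, disjoint from every other atom in the multiset; combined with non-overlapping, which fixes the rule for each atom uniquely, this lets me permute $\to^\nu_\gamma$ to place the $\kappa$-reduction of $\gamma A_i$ first. The sequence then becomes $S_0 \to_\kappa S_0' \to^* \{C_1, \ldots, C_m\}$ with $S_0' = \{\gamma A_1, \ldots, \gamma B_1, \ldots, \gamma B_m, \ldots, \gamma A_n\}$, where $\kappa : \forall \underline{x}.\, \underline{B} \Rightarrow C$. Observe that $\hookrightarrow_{\kappa,\gamma}$ concatenated with this first $\to_\kappa$ step is precisely the single unification step $\Phi \vdash \{A_1,\ldots,A_n\} \leadsto_{\kappa,\gamma} S_0'$.

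Finally, I would convert the tail $S_0' \to^* \{C_1,\ldots,C_m\}$ stepwise via Lemma~\ref{tm-to-unif}, turning each $\to$-step into a $\leadsto$-step. Because each freshly introduced matcher touches only fresh rule-head variables, the accumulated substitution agrees with $\gamma$ on every variable of interest, preserving the state label $\gamma$. Concatenating with the first $\leadsto$-step yields $\Phi \vdash \{A_1,\ldots,A_n\} \leadsto^*_\gamma \{C_1,\ldots,C_m\}$. The main obstacle is the rearrangement: carefully establishing that $\to$-reductions on distinct atoms commute and that their applicability is undisturbed by permutation, which is precisely where non-overlapping together with the freshness of bound variables in rule instances plays its essential role.
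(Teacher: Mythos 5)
Your proof is correct and follows essentially the same route as the paper's: use non-overlapping to see that only $\kappa$ can reduce $\gamma A_i$, fuse that $\to_\kappa$ step with the preceding $\hookrightarrow_{\kappa,\gamma}$ into a single $\leadsto_{\kappa,\gamma}$ step, and convert the remaining term-matching steps into unification steps via Lemma~\ref{tm-to-unif}. The only difference is that you explicitly justify, via the commutation of $\to$-steps on distinct atoms, that the $\kappa$-reduction of $\gamma A_i$ may be assumed to occur first, a point the paper's proof simply asserts when it says the path ``must be of the form'' $\hookrightarrow_{\kappa,\gamma}\cdot\to_{\kappa,\gamma}\cdot\to^\nu_\gamma$.
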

\begin{proof}
  Given $\Phi \vdash \{A_1,..., A_n\} (\hookrightarrow_{\kappa, \gamma} \cdot \to^\nu_\gamma) \{C_1, ..., C_m\}$, we know the actual reduction path must be of the form $\Phi \vdash \{A_1,..., A_n\}\hookrightarrow_{\kappa, \gamma} \{\gamma A_1,..., \gamma A_n\} \to_{\kappa, \gamma} \{\gamma A_1, ..., \gamma B_1, ..., \gamma B_n,..., \gamma A_n\} \to^\nu_{\gamma} \{C_1, ..., C_m\}$. Note that $\gamma$ is unchanged along the term-matching reduction. The $\to$ following right after $\hookrightarrow$ can not use a different rule other than $\kappa$, it would mean $\gamma A_i \equiv \gamma C$ with $\kappa : \forall \underline{x}.\underline{B} \Rightarrow C \in \Phi$ and $A_i \equiv \sigma B$ with $\kappa' : \forall \underline{x}. \underline{D} \Rightarrow B \in \Phi$. This implies $\gamma C \equiv \gamma \sigma B$, contradicting
the non-overlapping restriction. Thus we have $\Phi \vdash \{A_1,..., A_n\}\leadsto_{\kappa,  \gamma} \{\gamma A_1, ..., \gamma B_1, ..., \gamma B_n,..., \gamma A_n\}$. By Lemma \ref{tm-to-unif},
we have $\Phi \vdash \{A_1,..., A_n\}\leadsto_{\kappa,  \gamma} \{\gamma A_1, ..., \gamma B_1, ..., \gamma B_n,..., \gamma A_n\} \leadsto^*_\gamma \{C_1, ..., C_m\}$

\end{proof}

\begin{lemma}
  Given $\Phi$ is non-overlapping, if $\Phi \vdash \{A_1,..., A_n\} (\to^\mu \cdot \hookrightarrow^1)^*_\gamma  \{C_1, ..., C_m\}$ with $\{C_1, ..., C_m\}$ in $\to^\mu \cdot \hookrightarrow^1$-normal form, then $\Phi \vdash \{A_1, ..., A_n\} \leadsto^*_\gamma \{C_1, ..., C_m\}$ with $\{C_1, ..., C_m\}$ in $\leadsto$-normal form. 
\end{lemma}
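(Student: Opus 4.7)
The plan is to induct on the number $l$ of substitutional steps in the LP-Struct derivation, using Lemmas \ref{tm-to-unif} (term-matching $\Rightarrow$ unification, state-preserving) and \ref{struct-to-unif} ($\hookrightarrow \cdot \to^\nu$ $\Rightarrow$ $\leadsto^*$ under non-overlap) as the two building blocks. Since the given LP-Struct path is finite and reaches $\{C_1,\ldots,C_m\}$, every intermediate $\to^\mu$-segment must in fact normalize, so the concrete shape of the derivation is
\[
\{A_1,\ldots,A_n\} \to^{*} H_0 \hookrightarrow_{\kappa_1,\gamma_1} H_1' \to^{*} H_1 \hookrightarrow_{\kappa_2,\gamma_2} \cdots \hookrightarrow_{\kappa_l,\gamma_l} H_l' \to^{*} \{C_1,\ldots,C_m\},
\]
with each $\to^*$ ending in a $\to$-normal form and $\gamma = \gamma_l \cdot \gamma_{l-1} \cdots \gamma_1$.

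I would first dispatch the normal-form part contrapositively. Suppose $\{C_1,\ldots,C_m\} \leadsto_{\kappa,\delta}$ via some clause $\kappa : \forall \underline{x}.\underline{B} \Rightarrow C \in \Phi$ with $C \sim_\delta C_i$. Then the same unifier $\delta$ provides a valid substitutional reduction $\{C_1,\ldots,C_m\} \hookrightarrow_{\kappa,\delta} \{\delta C_1,\ldots,\delta C_m\}$, contradicting the hypothesis that $\{C_1,\ldots,C_m\}$ is in $\to^\mu \cdot \hookrightarrow^1$-normal form.

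For the main conversion, I would induct on $l$. In the base case $l = 0$, the path is a pure term-matching reduction $\{A_1,\ldots,A_n\} \to^* \{C_1,\ldots,C_m\}$ with trivial state, and a step-by-step application of Lemma \ref{tm-to-unif} turns each $\to$ step into a $\leadsto$ step while preserving the state. In the inductive step, I peel off the leading block $\{A_1,\ldots,A_n\} \to^* H_0 \hookrightarrow_{\kappa_1,\gamma_1} H_1' \to^* H_1$. Converting the initial $\to^*$ by Lemma \ref{tm-to-unif} gives $\{A_1,\ldots,A_n\} \leadsto^* H_0$ with unchanged state (empty before the first substitution); applying Lemma \ref{struct-to-unif} to the block $H_0 \hookrightarrow_{\kappa_1,\gamma_1} H_1' \to^* H_1$ (this is exactly where non-overlapping is consumed) yields $H_0 \leadsto^*_{\gamma_1} H_1$. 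The remaining path contains $l-1$ substitutional steps, so the induction hypothesis produces $H_1 \leadsto^*_{\gamma_l \cdots \gamma_2} \{C_1,\ldots,C_m\}$; concatenation delivers $\{A_1,\ldots,A_n\} \leadsto^*_\gamma \{C_1,\ldots,C_m\}$.

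The main obstacle is correct bookkeeping of the substitution state along the path. Lemma \ref{tm-to-unif} is crucial because it guarantees that term-matching steps do not disturb the accumulated unifier, so the initial pre-$\hookrightarrow$ block genuinely starts from the trivial state that Lemma \ref{struct-to-unif} expects. The non-overlapping hypothesis is essential at each $\hookrightarrow_{\kappa_i,\gamma_i}$ junction: without it, the first $\to$ following a $\hookrightarrow_{\kappa_i}$ could use a different rule whose head also matches, breaking the simulation by a single $\leadsto_{\kappa_i,\gamma_i}$ step. Once these two invariants are maintained, the induction glues everything together in a routine fashion.
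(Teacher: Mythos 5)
Your proposal is correct and follows essentially the same route as the paper: decompose the finite LP-Struct path into an initial $\to^\nu$ prefix followed by $(\hookrightarrow^1 \cdot \to^\nu)$ blocks, convert the prefix with Lemma \ref{tm-to-unif} and each block with Lemma \ref{struct-to-unif} (where non-overlapping is used), and glue the pieces together. Your explicit induction on the number of substitutional steps and the contrapositive argument that $\leadsto$-reducibility of $\{C_1,\ldots,C_m\}$ would contradict $\to^\mu \cdot \hookrightarrow^1$-normality merely spell out details the paper leaves implicit.
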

\begin{proof}
  Since $\Phi \vdash \{A_1, ..., A_n\}  (\to^\mu \cdot \hookrightarrow^1)^*_\gamma \{C_1, ..., C_m\}$, this means the reduction path must be of the form $\Phi \vdash \{A_1,..., A_n\} \to^\nu \cdot \hookrightarrow^1 \cdot \to^\nu \cdot \hookrightarrow^1 ... \to^\nu \cdot \hookrightarrow^1 \cdot \to^\nu  \{C_1, ..., C_m\}$. Thus $\Phi \vdash \{A_1,..., A_n\} \to^\nu \cdot (\hookrightarrow^1 \cdot \to^\nu) \cdot (\hookrightarrow^1 ... \to^\nu) \cdot (\hookrightarrow^1 \cdot  \to^\nu)  \{C_1, ..., C_m\}$. By Lemma \ref{tm-to-unif} and Lemma \ref{struct-to-unif}, we have $\Phi \vdash \{A_1, ..., A_n\} \leadsto^*_\gamma \{C_1, ..., C_m\}$ with $\{C_1, ..., C_m\}$ in $\leadsto$-normal form.
  
  %% We want to show that each $\hookrightarrow \cdot \to$ in the sequence $ \to^\nu \cdot \hookrightarrow^1 ... \to^\nu \cdot \hookrightarrow^1  \to^\nu$ can be emulated by $\leadsto$, i.e. if $\Phi \vdash \{A_1,...A_m\} \hookrightarrow_{\kappa, \gamma} {}$the $\to$ must use the
%% same rule as $\hookrightarrow$, this is the case due to the non-overlapping restriction. Suppose the $\to$ following $\hookrightarrow$ uses different rules, it means $\gamma A \equiv \gamma C$ with $\underline{E} \Rightarrow C \in \Phi$ and $A \equiv \sigma B$ with $\underline{D} \Rightarrow B \in \Phi$. This implies $\gamma C \equiv \gamma \sigma B$, contradicting
%% the non-overlapping restriction. So the $\to$ must use the same rule as $\hookrightarrow$,
%% thus $\hookrightarrow \cdot \to$ can be emulated by $\leadsto$. And for each $\to$ that is not on the right
%%  of $\hookrightarrow$, can use $\leadsto$ to emulate $\to$. So we have $\Phi \vdash \{A\} \leadsto^*_\gamma \emptyset$
\end{proof}

\begin{lemma}
  Given $\Phi$ is a non-overlapping, if $\Phi \vdash \{A_1, ..., A_n\} \leadsto^*_\gamma \{C_1, ..., C_m\}$ with $\{C_1, ..., C_m\}$ in $\leadsto$-normal form , then $\Phi \vdash \{A_1, ..., A_n\} (\to^\nu \cdot \hookrightarrow^1)^*_\gamma \{C_1, ..., C_m\}$ with $\{C_1, ..., C_m\}$ in $\to^\nu \cdot \hookrightarrow^1$-normal form. 
\end{lemma}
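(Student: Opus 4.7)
The plan is to proceed by induction on the length of the $\leadsto^*_\gamma$ reduction. The bridge between the two reduction systems is Lemma~\ref{unif-coalp}, which rewrites each $\leadsto_{\kappa,\gamma_1}$ step as $\hookrightarrow_{\kappa,\gamma_1} \cdot \to_\kappa$. Thus an arbitrary $\leadsto^*$-sequence decomposes into $(\hookrightarrow \cdot \to)^*$ pieces, and the task reduces to re-bracketing these pieces into the shape $\to^\nu \cdot (\hookrightarrow^1 \cdot \to^\nu)^*$ required by LP-Struct, using non-overlapping to justify the rearrangement.

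For the base case, $\{A_1,\ldots,A_n\} = \{C_1,\ldots,C_m\}$ is already in $\leadsto$-normal form. Since every term-matcher is a unifier, this forces $\{A_1,\ldots,A_n\}$ to also be in $\to$-normal form with no applicable $\hookrightarrow$-step, so the empty reduction trivially witnesses $(\to^\nu \cdot \hookrightarrow^1)$-normality.

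For the inductive step, suppose $\{A_1,\ldots,A_n\} \leadsto_{\kappa,\gamma_1} \{D_1,\ldots,D_k\} \leadsto^*_{\gamma'} \{C_1,\ldots,C_m\}$ with $\gamma = \gamma' \cdot \gamma_1$. By Lemma~\ref{unif-coalp} we obtain $\{A_1,\ldots,A_n\} \hookrightarrow_{\kappa,\gamma_1} \{\gamma_1 A_1,\ldots,\gamma_1 A_n\} \to_\kappa \{D_1,\ldots,D_k\}$, and by the induction hypothesis $\{D_1,\ldots,D_k\} (\to^\nu \cdot \hookrightarrow^1)^*_{\gamma'} \{C_1,\ldots,C_m\}$ with $\{C_1,\ldots,C_m\}$ in $(\to^\nu \cdot \hookrightarrow^1)$-normal form. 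The first $\to^\nu$-block of the latter sequence absorbs the preceding single $\to_\kappa$ step into one $\to^\nu$-block emanating from $\{\gamma_1 A_1,\ldots,\gamma_1 A_n\}$, yielding a reduction of shape $\{A_1,\ldots,A_n\} \hookrightarrow_{\kappa,\gamma_1} \cdot \to^\nu \cdot (\hookrightarrow^1 \cdot \to^\nu)^* \{C_1,\ldots,C_m\}$.

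The main obstacle is supplying the mandatory initial $\to^\nu$-block in front of the leading $\hookrightarrow_{\kappa,\gamma_1}$. If $\{A_1,\ldots,A_n\}$ is already $\to$-normal this block is empty. Otherwise, any $\to$-reducible atom $A_j$ is matchable by a unique rule (by non-overlapping), and each such $\to$-step lifts via Lemma~\ref{tm-to-unif} to a $\leadsto$-step that could equivalently appear earlier in an alternative $\leadsto^*$-sequence reaching the same $\leadsto$-normal form $\{C_1,\ldots,C_m\}$. The delicate part is verifying that in the non-overlapping setting such $\to$-reductions commute with the pending $\hookrightarrow_{\kappa,\gamma_1}$ step and do not introduce fresh divergence, so that we can select a $\to$-normalizing path from $\{A_1,\ldots,A_n\}$ to some $\to$-normal form $\{A_1',\ldots,A_n'\}$ and then apply the inductive construction starting from $\{A_1',\ldots,A_n'\}$. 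Concretely, the determinism provided by non-overlapping (each atom can be rewritten by at most one rule head) together with the existence of the given finite $\leadsto^*$-witness rules out infinite $\to$-branches sprouting from $\{A_1,\ldots,A_n\}$, and this is the key place where both hypotheses of the theorem are jointly used; once it is settled, the $(\to^\nu \cdot \hookrightarrow^1)$-normality of the endpoint follows, as in the base case, from matcher-is-unifier.
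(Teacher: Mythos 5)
Your first three paragraphs essentially reproduce the paper's own argument: induct on the length of $\leadsto^*_\gamma$, turn each step $\leadsto_{\kappa,\gamma_1}$ into $\hookrightarrow_{\kappa,\gamma_1}\cdot\to_\kappa$ (Lemma~\ref{unif-coalp}), absorb the single $\to_\kappa$ step into the leading $\to^\nu$-block delivered by the induction hypothesis, use non-overlapping to rule out a competing rule head matching the selected atom, and get the normality of the endpoint from ``matcher is a unifier''. The paper stops exactly there: it emits a path of shape $(\hookrightarrow\cdot\to^\nu)^*$ and regards it as the required $(\to^\nu\cdot\hookrightarrow^1)^*$ path (i.e.\ with the initial term-matching block empty), the only use of non-overlapping being that no $\kappa'\neq\kappa$ can term-match the resolved atom $A_i$.

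The genuine gap is in your fourth paragraph, where you open a second front that you then do not close. You rightly note that if $\{A_1,\ldots,A_n\}$ is not $\to$-normal a leading $\hookrightarrow$ is not a legal LP-Struct move, but the obstacle is only described, not discharged: (i) the claimed commutation of the eager $\to$-steps with the pending $\hookrightarrow_{\kappa,\gamma_1}$, and the fact that after this rearrangement one still reaches the same $\leadsto$-normal form $\{C_1,\ldots,C_m\}$ with the same final state $\gamma$, are never argued; (ii) even granting (i), you cannot ``apply the inductive construction starting from $\{A_1',\ldots,A_n'\}$'': your induction is on the length of the given $\leadsto$-witness from $\{A_1,\ldots,A_n\}$, and you have not produced a $\leadsto$-witness from the $\to$-normalized multiset of smaller (or even equal) length, so the induction hypothesis simply does not apply there; (iii) the assertion that determinism plus the finite $\leadsto$-witness ``rules out infinite $\to$-branches'' is a productivity-type statement that Theorem~\ref{ortho:equiv} pointedly does not assume, so it cannot be waved through. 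It is in fact provable in this setting, but it needs its own argument: a matchable atom remains matchable, by the same (unique, by non-overlapping) rule, under any instantiation, hence must eventually be resolved by precisely that rule in every $\leadsto$-path ending in a $\leadsto$-normal form (cf.\ Lemma~\ref{tm-to-unif}), so the finite witness bounds the number of matchable descendants, and K\"onig's lemma excludes an infinite $\to$-branch. None of this appears in your text, so as written the proposal is a plan rather than a proof; had you simply accepted, as the paper does, the reading of $(\to^\nu\cdot\hookrightarrow^1)^*$ in which the constructed path may begin with the substitution step, your first three paragraphs would already have matched the paper's proof and the unfinished machinery would be unnecessary.
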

\begin{proof}
  By induction on the length of $\leadsto^*_\gamma$. 

\begin{itemize}
\item Base Case: $\Phi \vdash \{A_1,...,A_i, ..., A_n\} \leadsto_{\kappa, \gamma} \{\gamma A_1,...,\gamma B_1, ..., \gamma B_m ..., \gamma A_n \}$ with $\kappa : \forall \underline{x}. \ \underline{B}\Rightarrow C \in \Phi$, $C \sim_\gamma A_i$ and $\{\gamma A_1,...,\gamma B_1, ..., \gamma B_m ..., \gamma A_n \}$ in $\leadsto$-normal form . We have $\Phi \vdash \{A_1,...,A_i, ..., A_n\} \hookrightarrow_{\kappa, \gamma} \{\gamma A_1,...,\gamma A_i, ..., \gamma A_n \} \to_\kappa \{\gamma A_1,...,\gamma B_1, ..., \gamma B_m ..., \gamma A_n\}$ with $\{\gamma A_1,...,\gamma B_1, ..., \gamma B_m ..., \gamma A_n\}$ in $\to^\nu\cdot \hookrightarrow$-normal form. Note that there can not be another
$\kappa' : \forall \underline{x}. \underline{B} \Rightarrow C' \in \Phi$ such that $\sigma C' \equiv A_i$, since this would means $\gamma C \equiv \gamma A_i \equiv \gamma \sigma C'$, violating the non-overlapping requirement.

\item Step Case: $\Phi \vdash \{A_1, ..., A_i, ..., A_n\} \leadsto_{\kappa, \gamma} \{\gamma A_1,..., \gamma B_1, ..., \gamma B_l, ..., \gamma A_n\} \leadsto^*_{\gamma'} \{C_1, ..., C_m\}$ with $\kappa : \forall \underline{x}. B_1,..., B_l \Rightarrow C \in \Phi$ and $C \sim_\gamma A_i$. 

\noindent We have $\Phi \vdash \{A_1, ..., A_i, ..., A_n\} \hookrightarrow_{\kappa, \gamma} \{\gamma A_1, ..., \gamma A_i, ..., \gamma A_n\}\to $

\noindent $\{\gamma A_1,..., \gamma B_1, ..., \gamma B_m, ..., \gamma A_n\}$. By the non-overlapping requirement, there can not be another $\kappa' : \forall \underline{x}. \underline{D} \Rightarrow C' \in \Phi$ such that $\sigma C' \equiv A_i$. 

\noindent By IH, we know $\Phi \vdash \{\gamma A_1,..., \gamma B_1, ..., \gamma B_m, ..., \gamma A_n\} (\to^\nu \cdot \hookrightarrow)_{\gamma'}^* \{C_1, ..., C_m\}$. Thus we conclude that $\Phi \vdash \{A_1, ..., A_i, ..., A_n\} (\hookrightarrow \cdot \to)^*_{\gamma'} \{C_1, ..., C_m\}$. 

\end{itemize}
% We have $\Phi \vdash \{A\} \hookrightarrow_\gamma \cdot \to \emptyset$. Note that there can not be another
% $\kappa' : \forall \underline{x}. \underline{B} \Rightarrow C' \in \Phi$ such that $\sigma C' \equiv A$, since this would means $\gamma C \equiv \gamma A \equiv \gamma \sigma C'$, violating the non-overlapping requirement.

\end{proof}

\section{Proof of Theorem \ref{prod-non-overlap}}
We assume a non-overlapping and productive program $\Phi$ in this section.

\begin{lemma}
  If $\Phi \vdash \{A_1,..., A_n\} \leadsto \{B_1,..., B_m\}$, then $\Phi \vdash \{A_1,..., A_n\} (\to^\nu \cdot \hookrightarrow^1)^* \{C_1,..., C_l\}$ and $\Phi \vdash \{B_1,..., B_m\} \to^* \{C_1,..., C_l\}$.
\end{lemma}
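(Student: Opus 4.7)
The plan is to isolate the role of $A_i$ in the given one-step LP-Unif reduction $\{A_1, \ldots, A_i, \ldots, A_n\} \leadsto_{\kappa, \gamma} \{\gamma A_1, \ldots, \gamma B_1, \ldots, \gamma B_l, \ldots, \gamma A_n\}$, where $\kappa: \forall \underline{x}. B_1, \ldots, B_l \Rightarrow C$ and $C \sim_\gamma A_i$, and then to split cases on whether $A_i$ is matchable by some rule head of $\Phi$. Productivity gives strong normalization of $\to$; together with non-overlapping, a standard term-rewriting argument yields confluence of $\to$, so TM-normal forms are unique throughout.

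In Case 1 no rule head matches $A_i$. Since $\to$ acts locally on a single atom and $A_i$ is already in $\to$-normal form, any maximal $\to$-path from $\{A_1, \ldots, A_n\}$ reduces only the other atoms, yielding a multiset of the form $\{A_1', \ldots, A_i, \ldots, A_n'\}$. I would simulate one LP-Struct segment
\[
\{A_1, \ldots, A_n\} \to^\nu \{A_1', \ldots, A_i, \ldots, A_n'\} \hookrightarrow_{\kappa, \gamma} \{\gamma A_1', \ldots, \gamma A_i, \ldots, \gamma A_n'\} \to^\nu \{C_1, \ldots, C_l\},
\]
where the final $\to^\nu$ begins with a step that reduces $\gamma A_i$ using $\kappa$ (a fresh copy of $C$ matches $\gamma A_i$ via a renaming of $\gamma$), producing $\gamma B_1, \ldots, \gamma B_l$, and then normalizes further. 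On the LP-Unif side, pushing $\gamma$ through the reductions $A_j \to^* A_j'$ for $j \neq i$ gives
\[
\{B_1, \ldots, B_m\} \to^* \{\gamma A_1', \ldots, \gamma B_1, \ldots, \gamma B_l, \ldots, \gamma A_n'\} \to^\nu \{C_1, \ldots, C_l\},
\]
and confluence ensures the endpoint is the same $\{C_1, \ldots, C_l\}$ as on the LP-Struct side.

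In Case 2 some rule head matches $A_i$. Non-overlapping pins this rule down to $\kappa$ itself: if a distinct rule with head $C''$ satisfied $C'' \mapsto_\sigma A_i$, then $\gamma C \equiv \gamma A_i \equiv \gamma \sigma C''$ would exhibit a common instance of $C$ and $C''$, a contradiction. So $C \mapsto_\sigma A_i$ with $\mathrm{dom}(\sigma) \subseteq \mathrm{FV}(C)$. A direct inspection of the unification clauses shows that on matchable inputs the algorithm only ever triggers the ``$x \sim_{[t/x]} t$'' clause with $x$ a rule variable (the symmetric ``$t \sim_{[t/x]} x$'' clause never fires, since that would require a non-variable in $C$ aligned with a variable in $A_i$, precluded by matchability). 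Hence the MGU $\gamma$ coincides with $\sigma$: $\gamma A_j = A_j$ for $j \neq i$ and $\gamma B_k = \sigma B_k$, so $\{B_1, \ldots, B_m\}$ is exactly the one-step TM-reduct of $\{A_1, \ldots, A_n\}$ at $A_i$ via $\kappa$. Taking $\{C_1, \ldots, C_l\}$ to be the TM-normal form of $\{B_1, \ldots, B_m\}$ yields a purely $\to$-based LP-Struct path $\{A_1, \ldots, A_n\} \to \{B_1, \ldots, B_m\} \to^\nu \{C_1, \ldots, C_l\}$, and the second conclusion is then immediate.

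The main obstacle will be the identification $\gamma = \sigma$ in Case 2: MGUs are unique only up to renaming in general, so one has to verify carefully that the paper's specific unification algorithm stays within rule-variable territory whenever matchability holds. A secondary technical point is making explicit the confluence of $\to$ used in Case 1, which follows from termination plus non-overlapping by a standard Newman-style argument but is not packaged as a prior lemma in the paper and hence must be stated cleanly before being invoked.
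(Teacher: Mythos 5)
Your proposal is correct in substance and follows the same basic simulation as the paper (one $\leadsto$-step becomes either a single $\to$-step or a $\hookrightarrow\cdot\to$ pair, followed by TM-normalisation), but your case split is genuinely different and this changes where the work lies. The paper splits on whether the \emph{computed} unifier is itself a matcher, i.e.\ on $D \mapsto_\gamma A_i$ versus $D \not\mapsto_\gamma A_i$; in the first case $\mathrm{dom}(\gamma)\subseteq\mathrm{FV}(D)$ is disjoint from the query variables, so the $\leadsto$-step is \emph{literally} a $\to$-step and there is nothing to identify — your ``main obstacle'' ($\gamma=\sigma$) never arises, and non-overlapping is not even needed for this direction (the paper reserves it for the converse). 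Your split on whether \emph{some} head matches $A_i$ forces you both to invoke non-overlapping to pin that head down to $\kappa$ and to prove that the MGU coincides with the matcher; your parenthetical justification for the latter is not quite right, since the clause $t\sim_{[t/x]}x$ can fire when both sides are variables, binding a query variable to a rule variable, so the identity holds only up to such renamings unless one fixes the convention of preferentially binding rule variables — which is exactly what the paper's phrasing of the case split buys for free. On the other hand, your handling of the non-matching case is more careful than the paper's: you insert the initial $\to^\nu$ segment (using productivity, and the irreducibility of $A_i$ to keep it untouched), which is needed for the path to have the required shape $(\to^\nu\cdot\hookrightarrow^1)^*$, whereas the paper's proof starts directly with $\hookrightarrow$ and silently assumes $\{A_1,\ldots,A_n\}$ is already in $\to$-normal form. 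Finally, the appeal to confluence is unnecessary: since both the LP-Struct side and the $\gamma$-instantiated LP-Unif side reach the common multiset $\{\gamma A_1',\ldots,\gamma B_1,\ldots,\gamma B_l,\ldots,\gamma A_n'\}$, you can simply take $\{C_1,\ldots,C_l\}$ to be the endpoint of one chosen normalisation from it; the statement asks for existence of such a $\{C_1,\ldots,C_l\}$, not uniqueness of normal forms.
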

\begin{proof}
  Suppose $\Phi \vdash \{A_1,..., A_n\} \leadsto_{\kappa, \gamma} \{\gamma A_1,...,\gamma E_1, ..., \gamma E_l,..., \gamma A_n\}$, with $\kappa : \underline{E} \Rightarrow D \in \Phi$ and $D \sim_\gamma A_i$. Suppose $D \not \mapsto_\gamma A_i$. In this case, we have $\Phi \vdash \{A_1,..., A_n\} \hookrightarrow_{\kappa, \gamma} \cdot \to_{\kappa, \gamma} \{\gamma A_1,...,\gamma E_1, ..., \gamma E_q,..., \gamma A_n\} \to^\nu_\gamma \{C_1,..., C_l\}$. Suppose $D \mapsto_\gamma A_i$, we have $\Phi \vdash \{A_1,..., A_n\} \to_{\kappa, \gamma} \{\gamma A_1,...,\gamma E_1, ..., \gamma E_q,..., \gamma A_n\} \to^\nu_\gamma \{C_1,..., C_l\}$.
\end{proof}
\begin{lemma}\label{I}
  If $\Phi \vdash \{A_1,..., A_n\} \hookrightarrow_{\kappa,\gamma} \{\gamma A_1,..., \gamma A_n\} \to^\nu_\gamma \{B_1,..., B_m\}$, then $\Phi \vdash \{A_1,..., A_n\} \leadsto^*_\gamma \{B_1,..., B_m\}$. \end{lemma}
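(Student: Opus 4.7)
The plan is to exploit both the non-overlapping and productivity hypotheses of Theorem~\ref{prod-non-overlap} to reshape the combined $\hookrightarrow_{\kappa, \gamma} \cdot \to^\nu_\gamma$ path into a sequence of $\leadsto_\gamma$-steps. Productivity ensures that $\to^\nu$ is well-defined (terminating), while non-overlapping is what pins down the rule applied to $\gamma A_i$ during term-matching.

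First I would observe that the atom $\gamma A_i$ produced by $\hookrightarrow_{\kappa, \gamma}$, where $\kappa : \forall \underline{x}. \underline{E} \Rightarrow C$ and $C \sim_\gamma A_i$, is reducible by $\kappa$ as a term-matching step: from $\gamma C \equiv \gamma A_i$ together with the fact that $\gamma$ acts only on the freshly renamed variables of $C$, we obtain $C \mapsto_{\gamma|_{C}} \gamma A_i$. Since $\{B_1, \ldots, B_m\}$ is a $\to$-normal form, $\gamma A_i$ must be reduced at some step along $\to^\nu_\gamma$. By the non-overlapping condition, no other rule $\kappa'$ can have a head matchable to $\gamma A_i$ (such a rule would share a common instance with $\kappa$), so the step that reduces $\gamma A_i$ must use the rule $\kappa$ itself.

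Next, because $\to$-reductions do not propagate the computed substitution to atoms outside the matched one, any two $\to$-steps that act on distinct atoms commute. I would therefore reorder $\to^\nu_\gamma$ so that the step reducing $\gamma A_i$ by $\kappa$ is performed first, yielding
\[
\{\gamma A_1, \ldots, \gamma A_n\} \to_{\kappa, \gamma} \{\gamma A_1, \ldots, \gamma E_1, \ldots, \gamma E_l, \ldots, \gamma A_n\} \to^*_\gamma \{B_1, \ldots, B_m\}.
\]
The composite of $\hookrightarrow_{\kappa, \gamma}$ and this initial $\to_{\kappa, \gamma}$ step produces exactly $\{\gamma A_1, \ldots, \gamma E_1, \ldots, \gamma E_l, \ldots, \gamma A_n\}$, which is precisely the effect of a single $\leadsto_{\kappa, \gamma}$ step applied to $\{A_1, \ldots, A_n\}$: both replace $A_i$ by $\gamma E_1, \ldots, \gamma E_l$ and apply $\gamma$ to all remaining atoms.

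Finally, each of the remaining $\to$-steps can be simulated by a $\leadsto$-step sharing the same accumulated state $\gamma$ via Lemma~\ref{tm-to-unif}, delivering $\{A_1, \ldots, A_n\} \leadsto^*_\gamma \{B_1, \ldots, B_m\}$. The hard part will be formally justifying the commutativity-based reordering: one must verify that distinct $\to$-steps on disjoint atoms yield the same final multiset regardless of the order in which they are applied, which requires careful bookkeeping of fresh renamings of rule variables at each step. This should go through because term-matching is local — the substitution computed by a $\to$-step is applied only to the body of the rule being used, never to other atoms in the multiset — so the only real subtlety is ensuring that bound variable renamings between steps can be chosen disjoint.
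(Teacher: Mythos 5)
Your proof is correct and follows essentially the same route as the paper: decompose the $\hookrightarrow_{\kappa,\gamma}\cdot\to^\nu_\gamma$ path so that it begins with the $\to_\kappa$ step on $\gamma A_i$, note that $\hookrightarrow_{\kappa,\gamma}$ followed by that step is exactly one $\leadsto_{\kappa,\gamma}$ step, and convert the remaining term-matching steps by Lemma~\ref{tm-to-unif}. The only difference is presentational: you explicitly justify (via non-overlapping and the commutation of $\to$-steps acting on distinct atoms) that the normalizing path may be assumed to start with the $\kappa$-step on $\gamma A_i$, a point the paper's proof simply asserts.
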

\begin{proof}
Suppose $\Phi \vdash \{A_1,..., A_n\} \hookrightarrow_{\kappa,\gamma} \{\gamma A_1,..., \gamma A_n\} \to^\nu_\gamma \{B_1,..., B_m\}$, we have $\Phi \vdash \{A_1,..., A_n\} \hookrightarrow_{\kappa,\gamma} \{\gamma A_1,..., \gamma A_n\} \to_\kappa \{\gamma A_1,..., \gamma C_1,..., \gamma C_l. \gamma A_n\}\to^\nu_\gamma \{B_1,..., B_m\}$ with $\kappa : \underline{C} \Rightarrow D \in \Phi$ and $D \sim_\gamma A_i$. Thus we have $\Phi \vdash \{A_1,..., A_n\} \leadsto_{\kappa,\gamma}  \{\gamma A_1,..., \gamma C_1,..., \gamma C_l,..., \gamma A_n\}$. By Lemma \ref{tm-to-unif}, we have $\Phi \vdash \{A_1,..., A_n\} \leadsto_{\kappa,\gamma}  \{\gamma A_1,..., \gamma C_1,..., \gamma C_l, ...., \gamma A_n\} \leadsto^*_\gamma \{B_1,..., B_m\}$.
\end{proof}
\begin{lemma}
If $\Phi \vdash \{A_1,..., A_n\} (\to^\nu \cdot \hookrightarrow^1)^*_\gamma \{B_1,..., B_m\}$, then $\Phi \vdash \{A_1,..., A_n\} \leadsto^*_\gamma \{B_1,..., B_m\}$. 
\end{lemma}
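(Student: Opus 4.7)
The plan is to induct on the number $k$ of substitutional steps $\hookrightarrow^1$ appearing in the given reduction, and to reduce the argument to the two conversion lemmas already available, namely Lemma~\ref{tm-to-unif} (which turns any $\to$-step into a $\leadsto$-step) and Lemma~\ref{I} (which turns any $\hookrightarrow \cdot \to^\nu$ pattern into $\leadsto^*$).

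First, I would unfold the macro shape of a reduction of the form $\{A_1,\ldots,A_n\} (\to^\nu \cdot \hookrightarrow^1)^*_\gamma \{B_1,\ldots,B_m\}$ into its concrete form
\[
\{A_1,\ldots,A_n\} \to^{\nu}_{} M_1 \hookrightarrow_{\kappa_1,\gamma_1} N_1 \to^{\nu}_{\gamma_1} M_2 \hookrightarrow_{\kappa_2,\gamma_2} N_2 \cdots \hookrightarrow_{\kappa_k,\gamma_k} N_k \to^{\nu}_{\gamma_k} \{B_1,\ldots,B_m\},
\]
where the cumulative state $\gamma_k$ coincides with $\gamma$.

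For the base case $k=0$, the path is purely $\to^{\nu}_{\gamma}$, i.e., a finite sequence of $\to$-steps; iterating Lemma~\ref{tm-to-unif} step by step then gives $\{A_1,\ldots,A_n\} \leadsto^*_\gamma \{B_1,\ldots,B_m\}$, noting that term-matching leaves the state unchanged and $\gamma$ is indeed carried identically on both sides.

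For the inductive step $k > 0$, I would split the path immediately after the first $\hookrightarrow \cdot \to^\nu$ block. The leading $\to^\nu_{}$ prefix converts, by Lemma~\ref{tm-to-unif}, to a $\leadsto^*$ segment reaching $M_1$. The subsequent $\hookrightarrow_{\kappa_1,\gamma_1} \cdot \to^\nu_{\gamma_1}$ block matches exactly the hypothesis of Lemma~\ref{I} and is therefore convertible to $\leadsto^*_{\gamma_1}$, taking us to $M_2$. What remains is a reduction from $M_2$ with $k-1$ occurrences of $\hookrightarrow^1$, to which the induction hypothesis applies and yields a $\leadsto^*$ path from $M_2$ to $\{B_1,\ldots,B_m\}$. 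Concatenating the three $\leadsto^*$ segments gives the desired reduction, and the composition of states agrees with $\gamma$ by construction.

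The main obstacle is bookkeeping the accumulated substitution state through the conversions: Lemma~\ref{tm-to-unif} is stated with a single $\gamma$ and Lemma~\ref{I} with the $\gamma$ produced by the preceding $\hookrightarrow$, so I must check that the concrete shape I extract from $(\to^\nu \cdot \hookrightarrow^1)^*_\gamma$ indeed presents each block with its locally consistent state and that the overall composition is $\gamma$. This is essentially syntactic; non-overlapping and productivity are not invoked here directly, as they have already been absorbed into Lemma~\ref{I}.
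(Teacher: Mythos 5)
Your proposal is correct and matches the paper's argument: the paper proves this lemma simply ``by Lemma~\ref{I}'' (with Lemma~\ref{tm-to-unif} handling the pure term-matching segments), which is exactly the block decomposition of $(\to^\nu \cdot \hookrightarrow^1)^*$ into a leading $\to^\nu$ prefix plus $\hookrightarrow \cdot \to^\nu$ blocks that you spell out by induction on the number of substitutional steps. Your version is just a more explicit rendering of the same route, including the state bookkeeping the paper leaves implicit.
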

\begin{proof}
  By Lemma \ref{I}.
\end{proof}
\end{document}